\documentclass[UKenglish,runningheads,envcountsame]{llncs}
%\sloppy

%%
%% Nice enumerate environments
%%
%\usepackage[inline]{enumitem}
%\setlist[enumerate,1]{label=(\arabic*),font=\normalfont,align=left,leftmargin=0pt,labelindent=0pt,listparindent=\parindent,labelwidth=0pt,itemindent=!,topsep=3pt,parsep=0pt,itemsep=3pt,start=1}
%\setlist[enumerate,2]{label=(\alph*),font=\normalfont,labelindent=*,leftmargin=*,start=1}
%\setlist[itemize]{labelindent=*,leftmargin=*,topsep=5pt,itemsep=3pt}
%\setlist[description]{labelindent=*,leftmargin=*,itemindent=-1
% em}

\usepackage{multirow}
\usepackage{hyperref}
\usepackage{xspace}
\usepackage{amsmath,amssymb}
\usepackage{stmaryrd} %\ll/rrbracket

\usepackage{apxproof} % Move proofs to appendix
%\usepackage[appendix=inline]{apxproof} % Keep proofs where
%\usepackage[appendix=strip]{apxproof} % remove appendix
%%%they are
%%\usepackage[appendix=stirp]{apxproof} % Remove appendix
\usepackage{algorithm}
\usepackage{algpseudocode}
\usepackage{paralist}

\usepackage{todonotes} % output notes
%\usepackage[disable]{todonotes} % ignore notes

%\usepackage[author=anonymous,marginclue,footnote,draft]{fixme}
%\FXRegisterAuthor{dh}{adh}{DH}
%\FXRegisterAuthor{np}{anp}{NP}

%\usepackage{algorithmicx}
%    \usepackage[ruled]{algorithm}
%\usepackage[noend]{algpseudocode}

\usepackage{tikz-cd}
\usepackage{wrapfig}
\usepackage{amsmath}    % for cases environment (and lots more)
\usepackage{lipsum}     % for sample text

\usepackage{tikz}
 \usetikzlibrary{trees}
 \usetikzlibrary{shapes}
 \usetikzlibrary{fit}
 \usetikzlibrary{shadows}
 \usetikzlibrary{backgrounds}
 \usetikzlibrary{arrows,automata}

\tikzset{
   n/.style= {circle,fill,inner sep=1.5pt,node distance=2cm}
  ,acc/.style={circle,draw,inner sep=3pt,node distance=2cm}
  ,phantom/.style={circle},
  ,arr/.style={->, >=stealth, semithick, shorten <= 3pt, shorten >= 3pt}
}

% for Springer LLNCS
\makeatletter
\newcommand\mysubsec{\@startsection{paragraph}{4}{\z@}%
  {-6\p@ \@plus -4\p@ \@minus -4\p@}%
  {-0.5em \@plus -0.22em \@minus -0.1em}%
  {\normalfont\normalsize\bfseries}}
\makeatother

\spnewtheorem{assumptions}[theorem]{Assumptions}{\bfseries}{\rmfamily}
\spnewtheorem{notation}[theorem]{Notation}{\bfseries}{\rmfamily}
\spnewtheorem{observation}[theorem]{Observation}{\bfseries}{\rmfamily}
\spnewtheorem{defn}[theorem]{Definition}{\bfseries}{\rmfamily}
\spnewtheorem{alg}[theorem]{Algorithm}{\bfseries}{\rmfamily}
\spnewtheorem{expl}[theorem]{Example}{\bfseries}{\rmfamily}
\spnewtheorem{rem}[theorem]{Remark}{\bfseries}{\rmfamily}
\spnewtheorem{fact}[theorem]{Fact}{\bfseries}{\rmfamily}
\spnewtheorem{construction}[theorem]{Construction}{\bfseries}{\rmfamily}
\spnewtheorem{examples}[theorem]{Examples}{\bfseries}{\rmfamily}

\renewcommand{\Box}{\square}

\newcommand{\sem}[1]{\llbracket #1 \rrbracket}

\newcommand{\takeout}[1]{\empty}

\newcommand{\A}{\mathcal{A}}
\newcommand{\AP}{\mathsf{AP}}
\newcommand{\Inf}{\mathsf{Inf}}
\newcommand{\Fin}{\mathsf{Fin}}
\newcommand{\strat}{\sigma}
\newcommand{\NSA}{\mathcal{N}}
\newcommand{\DSA}{\mathcal{D}}
\newcommand{\Lang}{\mathcal{L}}
\newcommand{\symb}{\mathsf{symb}}

\newcommand{\sys}{\exists}

%\newcommand{\pl}{\mathsf{pl}}
%\newcommand{\plE}{\pl_\exists}
%\newcommand{\plA}{\pl_\forall}

%\makeatletter
%\def\moverlay{\mathpalette\mov@rlay}
%\def\mov@rlay#1#2{\leavevmode\vtop{%
		%   \baselineskip\z@skip \lineskiplimit-\maxdimen
		%   \ialign{\hfil$\m@th#1##$\hfil\cr#2\crcr}}}
%\newcommand{\charfusion}[3][\mathord]{
	%    #1{\ifx#1\mathop\vphantom{#2}\fi
		%        \mathpalette\mov@rlay{#2\cr#3}
		%      }
	%    \ifx#1\mathop\expandafter\displaylimits\fi}
%\makeatother
%
%\newcommand{\dotcup}{\charfusion[\mathbin]{\cup}{\cdot}}

\pagestyle{plain}

\begin{document}

\title{Symbolic Solution of Emerson-Lei Games for Reactive Synthesis}
\titlerunning{El-Synthesis} % if needed
\author{%
  Daniel Hausmann, Mathieu Lehaut and Nir Piterman
}
\authorrunning{D.~Hausmann and M.~Lehaut and N.~Piterman}

\institute{University of Gothenburg, Gothenburg, Sweden%\\
  %\email{\{hausmann,lehaut,piterman\}@chalmers.se}%
}

\maketitle
\begin{abstract}
% !TeX root = elsynt

Emerson-Lei conditions have recently attracted attention due to their succinctness and compositionality properties. In the current work, we show how infinite-duration games with Emerson-Lei objectives can be analyzed in two different ways. First, we show that the Zielonka tree of the Emerson-Lei condition gives rise naturally to a new reduction to parity games. This reduction, however, does not result in optimal analysis. Second, we show based on the first reduction (and the Zielonka tree) how to provide a direct fixpoint-based characterization of the winning region. The fixpoint-based characterization allows for symbolic analysis. It generalizes the solutions of games with known winning conditions such as B\"uchi, GR[1], parity, Streett, Rabin and Muller objectives, and in the case of these conditions reproduces previously known symbolic algorithms and complexity results.

We also show how the capabilities of the proposed algorithm can be exploited in reactive synthesis, suggesting a new expressive fragment of LTL that can be handled symbolically. Our fragment combines a safety specification and a liveness part. The safety part is unrestricted and the liveness part allows to define Emerson-Lei conditions on occurrences of letters. The symbolic treatment is enabled due to the simplicity of determinization in the case of safety languages and by using our new algorithm for game solving. This approach maximizes the number of steps solved symbolically in order to maximize the potential for efficient symbolic implementations.

\end{abstract}

\section{Introduction}
\label{sec:intro}

Infinite-duration two-player games are a strong tool that has been
used, notably, for reactive synthesis from temporal specifications
\cite{PnueliR89}.
Many different winning conditions are considered in the literature.

Emerson-Lei (EL) conditions were initially suggested in the context of
automata but are among the most general (regular) winning conditions
considered for such games.
They succinctly express general liveness properties
by encoding Boolean combinations of events that should occur infinitely or
finitely often.
Automata and games in which acceptance or winning is defined by
Emerson-Lei conditions have garnered attention in recent years~\cite{MuellerSickert17,RenkinDP20,JohnJBK22,HunterD05}, in particular because of their succinctness and good compositionality properties (Emerson-Lei objectives are closed under conjunction, disjunction, and negation).
In this work, we show how infinite-duration two-player games with
Emerson-Lei winning conditions can be solved symbolically.
% Nir - this is said in the abatract and will repeat below
%       I think can be removed here.
%In a second contribution, we show also show how this symbolic solution
%algorithm can be applied to obtain a method for symbolic reactive
%synthesis for a fragment of LTL that goes beyond both the safety and
%the GR[1] fragment.

It has been established that
solving Emerson-Lei games is \textsc{PSpace}-complete and that an exponential amount of memory
may be required by winning strategies~\cite{HunterD05}.
Zielonka trees are succinct tree-representations of $\omega$-regular
winning objectives~\cite{Zielonka98}.
They have been used to obtain tight bounds on the
amount of memory needed for winning in Muller
games~\cite{DziembowskiJW97}, and can also be applied to analyze
Emerson-Lei objectives and games.
One indirect way to solve Emerson-Lei games is by transformation to
equivalent parity games using later-appearance-records~\cite{HunterD05}, and then
solving the resulting parity games.
Another, more recent, indirect approach goes through Rabin games
by first extracting history-deterministic Rabin automata from Zielonka
trees and then solving the resulting Rabin games~\cite{CasaresCL22}.
Here, we give a direct symbolic algorithmic solution for Emerson-Lei
games.
We show how the Zielonka tree allows to directly encode the game as a
parity game.
%We also consider the Zielonka tree of the winning
%condition, but show how it directly allows to encode the game as a
%parity game.
Furthermore, building on this reduction, we show how to construct a
fixpoint equation system
that captures winning in the game.
As usual, fixpoint equation systems are recipes for game solving
algorithms that manipulate sets of states symbolically.
%In addition to incorporating nicely in our approach, this solution of
%Emerson-Lei games is of independent interest:

The algorithm that we obtain in this way is adaptive in the sense that
the nesting structure of
recursive calls is obtained directly from
the Zielonka tree of the given winning objective. As the Zielonka tree
is specific to the objective, this means that the algorithm performs
just the fixpoint computations that are required for that specific objective.
In particular, our algorithm instantiates to previously known
fixpoint iteration algorithms in the case that
the objective is a (generalized) B\"uchi, GR[1], parity, Streett, Rabin or Muller
condition, reproducing
previously known algorithms and complexity results.
As we use fixpoint iteration, the instantiation of our algorithm
to parity game solving is not directly a quasipolynomial algorithm.
In the general setting,
the algorithm solves unrestricted Emerson-Lei games with $k$ colors,
$m$ edges and $n$ nodes in time $\mathcal{O}(k!\cdot m\cdot n^k)$ and
yields winning strategies with memory $\mathcal{O}(k!)$.

We apply our symbolic solution of Emerson-Lei games
%Having shown how Emerson-Lei games can be solved symbolically, we then
%apply the proposed algorithm
to the automated construction of safe systems.
The ideas of synthesis of reactive systems from temporal specifications
go back to the early days of computer science \cite{Church63}.
These concepts were modernized and connected to linear temporal logic
(LTL) and finite-state automata by Pnueli and Rosner \cite{PnueliR89}.
In recent years, practical applications in robotics are using this
form of synthesis as part of a framework producing correct-by-design
controllers
\cite{Kress-GazitFP09,BhatiaMKV11,WongpiromsarnTM12,LiuOTM13,MoarrefK20}.

%Pnueli and Rosner established a doubly-exponential lower bound on the
%complexity of the problem \cite{PnueliR89}.
%However, one of the main difficulties in applications of this type of
%synthesis has been the complexity of determinization for automata on
%infinite words \cite{Safra88,Piterman07}.
%Determinization is required as a basis for the construction of games
%whose
%solution gives the answer to realizability.
%Dealing with infinite computations, we require determinization of
%nondeterministic B\"uchi automata,
%which constructs a complex tree structure (called Safra-tree)
%organizing sets of
%states of the original automaton.
%This complexity has hindered the usage of one of the major tools that
%has been used to boost model checking of temporal logic: reasoning
%\emph{symbolically} about sets of states/paths.
%Unfortunately, in spite of major progress in understanding the
%determinization for LTL, symbolically implementing it and reasoning
%about it is beyond our reach.
%How to bypass this obstacle is a ground for active research.

%Much effort is dedicated to tackling determinization hands on by
%improving techniques and engineering for solving enumeratively the
%arising games.
%There is now a specialized determinization directly from
%LTL \cite{EsparzaKRS17}.
%Games resulting from deterministic automata are analyzed on-the-fly
%without exploring the whole automaton
%\cite{LuttenbergerMS20}.
%State-of-the-art engineering produced tools that handle determinization
%and solve games on top of it \cite{MeyerSL18,Duret-LutzRCRAS22}.

A prominent way to extend the capacity of reasoning about state spaces
is by reasoning \emph{symbolically} about sets of states/paths.
%
%A prominent way to
%Alternatively, there are approaches that bypass determinization and
%allow for symbolic treatment of LTL realizability.
%The Safra-less approach bypasses determinization by searching for
%implementations up to a certain size
%\cite{KupfermanV05,FinkbeinerS13,FaymonvilleFT17}.
In order to apply this approach to reactive synthesis, different
fragments of LTL that allow symbolic game analysis have been
considered.
%
%Another approach is to consider fragments of LTL that allow symbolic
%game analysis.
Notably, the GR[1] fragment has been widely used for the
applications in robotics mentioned above
\cite{PitermanPS06,BloemJPPS12}.
But also larger fragments are being considered and experimented with
\cite{Ehlers11,Ehlers11a,SohailS13}.
Recently,
De Giacomo and Vardi suggested that similar advantages can be had by in
%order to avoid the
%complexity of determinization one could
changing the usual semantics of
LTL from considering infinite models to finite models (LTL$_f$)
\cite{GiacomoV15}.
The complexity of the problem remains doubly-exponential, however,
symbolic techniques can be applied.
As models are finite, it is possible to use the
classical subset construction (in contrast to B\"uchi determinization),
which can be reasoned about symbolically.
%in order to achieve determinization.
%The subset construction %is much simpler and
%can be reasoned about
%symbolically.
Furthermore, the resulting games have simple reachability objectives.
This approach with finite models is used for applications in
planning \cite{CamachoTMBM17,CamachoM19} and robotics
\cite{BhatiaMKV11}.

Here, we harness our symbolic solution to Emerson-Lei games to
%show that similar advantages can be had while staying with
%the more familiar and customary infinite models.
%We
suggest a large fragment of LTL that can be reasoned about
symbolically.
We introduce the \emph{Safety and Emerson-Lei} fragment
whose formulas are conjunctions
$\varphi_{\mathrm{safety}}\land\varphi_{\mathrm{EL}}$
between an (unrestricted) safety condition and an
(unrestricted) Emerson-Lei condition defined in terms of game states.
This fragment generalizes GR[1] and the previously mentioned works in \cite{Ehlers11,Ehlers11a,SohailS13}.
We approach safety and Emerson-Lei LTL synthesis in two steps: first,
consider only the safety part and convert it to a symbolic safety
automaton;
%deterministic safety
%automaton;
second, reason symbolically on this automaton
by solving Emerson-Lei games using our novel symbolic algorithm.

\begin{center}

\begin{footnotesize}
\begin{tikzpicture}[node distance=1.0cm,>=stealth',auto,semithick,        shorten > = 1pt]

  \tikzstyle{place}=[circle,thick,draw=blue!75,fill=blue!20,minimum size=6mm]
  \tikzstyle{red place}=[place,draw=red!75,fill=red!20]
  \tikzstyle{transition}=[rectangle,thick,draw=black!75,
  			  fill=black!20,minimum size=4mm]

%  \tikzstyle{every label}=[olive]
    \tikzstyle{every state}=[
        draw = black,
        thick,
        fill = white,
        minimum size = 4mm
    ]

    \node[label={below: \footnotesize{}}] (1) at (0,1) {$\varphi_{\mathrm{safety}}\land \varphi_{\mathrm{EL}}$};
%    \node[label={below: \tiny{(NSA)}}] (2) at (2.5,1)
%{$\mathcal{N}_{\varphi_{\mathrm{safety}}}$};
    \node[label={below: \tiny{(Symbolic Safety)}}] (3) at (3,1)
    {$\mathcal{D}_{\varphi_{\mathrm{safety}}}$};
    \node (4) at (6.5,1) {synthesis game
    $G_{\varphi_{\mathrm{safety}}\land \varphi_{\mathrm{EL}}}$};
    \node[label={below: \tiny{({Emerson-Lei} objective)}}] (5) at (3,0)
    {$\varphi_{\mathrm{EL}}$};

    \path[->] (1) edge  node  {} (3);
    \path[->] (1) edge  [bend right=20] node   {} (5);
%    \path[->] (2) edge  node  {} (3);
    \path[->] (3) edge  node  {} (4);
    \path[->] (5) edge  [bend right=5] node  {} (4);

\end{tikzpicture}
\end{footnotesize}
\end{center}
\vspace{-5pt}
We show that realizability of a safety and
Emerson-Lei formula $\varphi_{\mathrm{safety}}\land
\varphi_{\mathrm{EL}}$ can be checked in time $2^{\mathcal{O}(m\cdot\log m\cdot 2^n)}$,
where $n=|\varphi_{\mathrm{safety}}|$ and $m=|\varphi_{\mathrm{EL}}|$.
The overall procedure therefore is doubly-exponential in the size of the safety part but only single-exponential in the size of the liveness part;
notably, %we avoid the Safra-construction so that
both the automaton determinization and game solving parts can be
implemented symbolically.

We begin by recalling Emerson-Lei games and Zielonka trees
in Section~\ref{sec:el-ziel}, and also prove an upper bound on the size of Zielonka trees. Next we show how to solve Emerson-Lei games by fixpoint computation in Section~\ref{sec:solvingELgames}.
In Section~\ref{sec:synt} we formally introduce the safety and Emerson-Lei fragment of LTL and show how to construct symbolic games with
Emerson-Lei objectives that characterize realizability and that can be solved
using the algorithm proposed in Section~\ref{sec:solvingELgames}.
Omitted proofs and further details can be found in the appendix.
%\input{prelim}
% !TeX root = elsynt
\vspace{-10pt}
\section{Emerson-Lei Games and Zielonka Trees}
\label{sec:el-ziel}
We recall the basics of
Emerson-Lei games \cite{HunterD05} and Zielonka trees~\cite{Zielonka98},
and also show an apparently novel bound on the size of Zielonka trees;
previously, the main interest was on the size of winning strategies
induced by Zielonka trees, which is smaller~\cite{DziembowskiJW97}.
%The game notations in this section are slightly different from those of symbolic games.
%These \emph{pure} game notations support a cleaner algorithmic
%exposition of the solution of such games.
%Regarding Zielonka trees, our notations are slightly different from
%those of \cite{DziembowskiJW97}.

\paragraph{Emerson-Lei games.}
We consider two-player games played between the \emph{existential
player} $\exists$ and its opponent, the \emph{universal player}
$\forall$.
A \emph{game arena} $A=(V, V_\exists,V_\forall,E)$ consists
of a set $V=V_\exists\uplus V_\forall$ of nodes, partitioned
into sets of \emph{existential nodes} $V_\exists$ and \emph{universal nodes}
$V_\forall$, and a set $E\subseteq V\times V$ of \emph{moves}; we put $E(v)=\{v'\in V\mid
(v,v')\in E\}$ for $v\in V$.
A \emph{play} $\pi=v_0 v_1\ldots$ then is a sequence of nodes such that
for all $i\geq 0$, $(v_i,v_{i+1})\in E$; we denote the set of plays
in $A$ by $\mathsf{plays}(A)$.
A \emph{game} $G=(A,\alpha)$ consists of
a game arena together with an objective $\alpha\subseteq
\mathsf{plays}(A)$.

%A \emph{strategy} for the system is a partial function $f:\, M
%\times V \rightharpoonup M \times V$,
%where $M$ is some {\em memory domain} with a designated initial
%value $m_0\in M$, such that for every
%$v\in V$ and $m\in M$ we have
%$(s,s') \in E$, where $f(m,s)=(m',s')$.
%Let $f$ be a strategy for the system.
%A play $s_0, s_1,\ldots$ is said to be \emph{compliant} with strategy
%$f$ if for all $i\geq 0$ we have
%$f(m_i,s_i) = (m_{i+1},s')$ and whenever $s_i\in V_\exists$ then in
%addition $s'=s_{i+1}$.
%Notice, that the sequence $m_0,m_1,\ldots$ is implicitly defined.
%Strategy $f$ is \emph{winning} for the system from node $s\in V$
%if all plays starting in $s$ that are compliant
%with $f$ are in $\alpha$.
%We denote by $W_s$ the set of nodes winning for the system.
%
A \emph{strategy} for the existential player is a function $\sigma:\, V^*\cdot
V_\exists \to V$ such that for all
$\pi\in V^*$ and $v\in V_\exists$ we have $(v,\sigma(\pi v))\in E$.
A play $v_0 v_1\ldots$ is said to be \emph{compliant} with strategy
$f$ if for all $i\geq 0$ such that $v_i\in V_\exists$ we have
$v_{i+1}=\sigma(v_0\ldots v_i)$.
Strategy $\sigma$ is \emph{winning} for the existential player from node $v\in V$
if all plays starting in $v$ that are compliant
with $\sigma$ are contained in $\alpha$.
We denote by $W_\exists$ the set of nodes winning for the existential player.

%%For player environment, strategies, winning strategies, and the
%%winning set $W_e$ are defined dually.
%We say that $f$ {\em uses finite memory} or {\em is finite} when
%$M$ is finite.
%When $M$ is a singleton, we say that $f$ is {\em memoryless}.
In \emph{Emerson-Lei games}, each node is colored by a set
of colors, and the objective $\alpha$ is induced by a formula
that specifies combinations of colors that have to be visited infinitely
often, or are allowed to be visited only finitely often. Formally,
we fix a set $C$ of colors and use \emph{Emerson-Lei formulas}, that is,
finite Boolean formulas
$\varphi\in\mathbb{B}(\{\mathsf{Inf}\,c\}_{c\in C})$ over atoms
of the shape $\mathsf{Inf}\,c$, to define sets of plays.
The satisfaction relation
$\models$ for a set $D\subseteq C$ of colors and an Emerson-Lei formula
$\varphi$
is defined in the usual inductive way, and
we abbreviate $\neg\mathsf{Inf}\,c$ by $\mathsf{Fin}\,c$. E.g. the clauses for atoms
$\mathsf{Inf}\,c$ and negated formulas $\neg\varphi$ are
\vspace{-5pt}
\begin{align*}
	D\models \mathsf{Inf}\, c & \Leftrightarrow c\in D &
	D\models \neg \varphi &\Leftrightarrow D\not\models \varphi
\end{align*}

Consider a game arena $A=(V,V_\exists, V_\forall,E)$.
An \emph{Emerson-Lei condition} is given by an Emerson-Lei formula
$\varphi$
together with a coloring function $\gamma: V\to 2^C$ that
assigns a (possibly empty) set $\gamma(v)$ of colors to each node
$v\in V$.
The formula $\varphi$ and the coloring function $\gamma$ together
specify the objective
\vspace{-5pt}
\begin{align*}
	\alpha_{\gamma,\varphi}=\Big \{v_0v_1\ldots\in\mathsf{plays}(A)
	\Big |
	\{c\in C\mid \forall i.~\exists j \geq i.~c\in\gamma(v_j)\}\models
	\varphi \Big \}
\end{align*}
Thus a play $\pi=v_0v_1\ldots$ is winning for the
existential
player (formally: $\pi\in \alpha_\varphi$)
if and only if the set of colors that are visited infinitely often
by $\pi$ satisfies $\varphi$.
Below, we will also make use of \emph{parity games}, denoted by $(V,V_\exists, V_\forall,E,\Omega)$ where $\Omega:V\to\{1,\ldots, {2k}\}$ is a priority function, assigning priorities to game nodes. The objective of the existential
player then is that the maximal priority that is visited infinitely often is
an even number. Parity games are an instance of Emerson-Lei
games, obtained with set $C=\{p_1,\ldots, p_{2k}\}$ of colors,
a coloring function that assigns exactly one color to each node and
with objective
\begin{equation*}
\mathsf{Parity}(p_1,\ldots, p_{2k})=\textstyle\bigvee_{i\text{ even}}
\left (\mathsf{Inf}\, p_{i} \land \textstyle\bigwedge_{i<j\leq 2k}
\mathsf{Fin}\, p_{j} \right ).
\end{equation*}
Similarly, Emerson-Lei objectives directly encode
(combinations of) other standard objectives,
such as B\"uchi, Rabin, Streett or Muller conditions:
\begin{itemize}
\item[---] $\mathsf{Inf}\, f$ \hfill \textsf{B\"uchi}$(f)$
\item[---] $\textstyle\bigvee_{1\leq i\leq k} \mathsf{Inf}\,e_i\wedge
			\mathsf{Fin}\, f_i$ \hfill \textsf{Rabin}$(e_1,f_1,\ldots,e_k,f_k)$
\item[---] $\textstyle\bigwedge_{1\leq i\leq k} (\mathsf{Inf}\,r_i\to
			\mathsf{Inf}\, g_i)$ \hfill \textsf{Streett}$(r_1,g_1,\ldots,r_k,g_k)$
\item[---] $\textstyle\bigvee_{D\in\mathcal{U}} (\bigwedge_{c\in D}\mathsf{Inf} \,c\wedge\bigwedge_{d\in C\setminus D}\mathsf{Fin}\,d)$
\hfill \textsf{Muller}$(\mathcal{U}\subseteq 2^{C})$
\end{itemize}

\begin{toappendix}
\begin{example}\label{example:elGames} The framework of Emerson-Lei games subsumes many standard types of
	$\omega$-regular games. We show how this works in detail for games with the following objectives.
	\begin{enumerate}
			\item \emph{B\"uchi objective:} Put $C=\{f\}$ and
			\begin{align*}
					\varphi=\mathsf{Inf}\,f.
				\end{align*}
			Then $\pi\in\alpha_\varphi$
			if and only if $\pi$ visits some accepting node $v$ with
			$\gamma(v)=\{f\}$ infinitely often.
			\item \emph{Parity objective:} Put $C=\{p_1,\ldots, p_k\}$ and
			\begin{align*}
					\varphi=\textstyle\bigvee_{i\text{ even}}
		(\mathsf{Inf}\,p_i\wedge
					\bigwedge_{j>i} \mathsf{Fin}\,p_j),
				\end{align*}
			where $i$ and $j$ range over $\{1,\ldots,k\}$.
			Then $\pi\in\alpha_\varphi$
			if and only if there is some even $i$ such that $\pi$ visits $p_i$
			infinitely often,
			and no color $p_j$ with $j>i$ is visited infinitely often by
	$\pi$.
			\item \emph{Rabin objective:} Put $C=\{e_1,f_1,\ldots, e_k,f_k\}$ and
			\begin{align*}
					\varphi=\textstyle\bigvee_{i} (\mathsf{Inf}\,e_i\wedge
					\mathsf{Fin}\, f_i),
				\end{align*}
			where $i$ ranges over $\{1,\ldots, k\}$.
			Then $\pi\in\alpha_\varphi$
			if and only if there is some $i$ such that $\pi$ visits color
			$e_i$ infinitely often
			and does not visit color $f_i$ infinitely often.
			\item \emph{Streett objective:} Put $C=\{r_1,g_1,\ldots, r_k,g_k\}$
	and
			\begin{align*}
					\varphi=\textstyle\bigwedge_{i} (\mathsf{Inf}\, r_i\to
					\mathsf{Inf}\,g_i),
				\end{align*}
			where $i$ ranges over $\{1,\ldots, k\}$.
			Then $\pi\in\alpha_\varphi$
			if and only if for all $i$, if $\pi$ visits color $r_i$
	infinitely
			often,
			then $\pi$ visits color $g_i$ infinitely often.

			\item \emph{Muller objective:} For a Muller condition given as a
			set $W\subseteq\mathcal{P}(C)$ of sets of colors, put
			\begin{align*}
					\varphi=\textstyle\bigvee_{F\in W} (\bigwedge_{c\in F}
					\mathsf{Inf}\,c\wedge
					\bigwedge_{d\notin F} \mathsf{Fin}\,d)
				\end{align*}
			so that $\pi\in\alpha_\varphi$
			if and only $W$ contains some set $F$ of colors such that $\pi$
			visits exactly the
			colors from $F$ infinitely often.
		\end{enumerate}
\end{example}
\end{toappendix}

% !TeX root = elsynt

\paragraph{Zielonka Trees.}

We introduce a succinct encoding of the algorithmic essence of Emerson-Lei objectives
in the form of so-called Zielonka trees~\cite{Zielonka98,DziembowskiJW97}.

\begin{definition}
	The \emph{Zielonka tree} for an Emerson-Lei formula $\varphi$ over
	set $C$ of colors is a tuple
	$\mathcal{Z}_\varphi=(T,R,l)$ where $(T,R\subseteq T\times T)$ is a tree and
	$l:T\to 2^C$ is a labeling function that
	assigns sets of colors $l(t)$ to vertices $t\in T$.
	We denote the root of $(T,R)$ by $r$.
    Then $\mathcal{Z}_\varphi$ is defined to be
	the unique tree (up to reordering of child vertices) that satisfies
	the following constraints.
	\begin{itemize}
		\item The root vertex is labeled with $C$, that is, $l(r)=C$.
		\item Each vertex $t$ has exactly one child vertex $t_D$
		(labeled with $l(t_D)=D$) for
		each set $D$ of colors that is maximal in
$\{D'\subsetneq l(t)\mid D'\models\varphi \Leftrightarrow l(t)\not\models \varphi\}$.
	\end{itemize}
	For $s,t\in T$ such that $s$ is an ancestor of $t$, we write $s\leq
	t$.
	Given a vertex $s\in T$, we denote its set of direct successors by
	$R(s)=\{t\in T\mid (s,t)\in R\}$ and the set of leafs below it by
	$L(s) = \{t\in T \mid s\leq t \text{ and } R(t)=\emptyset\}$; we write
	$L$ for the set of all leafs.
	We assume some fixed total order $\preceq$ on $T$
	that respects $\leq$; this order induces a numbering of $T$.
		A vertex $t$ in the Zielonka tree is said to be \emph{winning} if
	$l(t)\models \varphi$,
	and \emph{losing} otherwise.
	We let $T_\Box$ and $T_\bigcirc$ denote the sets of
	winning and losing vertices in $\mathcal{Z}_\varphi$,
	respectively. Finally, we assign a \emph{level} $\mathsf{lev}(t)$
	to each vertex $t\in T$ so that  $\mathsf{lev}(r)=|C|$,
	and $\mathsf{lev}(s')=\mathsf{lev}(s)-1$ for all $(s,s')\in R$.

\end{definition}

\begin{example}\label{example:zielonkaTrees}
As mentioned above, Emerson-Lei games and Zielonka trees
instantiate naturally to games with, e.g., B\"uchi,
generalized B\"uchi, GR[1], parity, Rabin, Streett and Muller objectives;
for brevity, we illustrate this for selected examples (for more instances, see Example~\ref{example:zielonkaTreesCont} in the appendix).
	\begin{enumerate}
		\item \emph{Generalized B\"uchi condition}:
        Given $k$ colors $f_1,\ldots,f_k$,
		the winning objective $\varphi=\bigwedge_{1\leq i\leq k}\mathsf{Inf}~f_i$
		expresses that all colors are visited infinitely often (not
		necessarily simultaneously); the induced
		Zielonka tree is depicted below with boxes and circles representing winning
		and losing vertices, respectively.\\
\vspace{-10pt}
\begin{center}
\tikzset{every state/.style={minimum size=15pt}}
\begin{tiny}
  \begin{tikzpicture}[
    % Default arrow tip
    %-&gt;,&gt;=stealth',shorten &gt;=1pt,
		auto,
    % Default node distance
    node distance=0.8cm,
    % Edge stroke thickness: semithick, thick, thin
    semithick
    ]
     \node[state, rectangle, label={right: $f_1,\ldots,f_k$}] (0) {$s_0$};
     \node (yo) [below of=0] {$\ldots$};
     \node (yo1) [left of=yo] {};
     \node (yo2) [right of=yo] {};
     \node[state, label={left: $f_2,\ldots,f_k$}] (1) [left of=yo1] {$s_1$};
     \node[state, label={right: $f_1,\ldots,f_{k-1}$}] (2) [right of=yo2] {$s_k$};
     \path[->] (0) edge node [pos=0.3,left] {} (1);
     \path[->] (0) edge node [pos=0.3,right] {} (2);

  \end{tikzpicture}
\end{tiny}
\end{center}
\vspace*{-5pt}
%\begin{minipage}{.25\linewidth}
%$\quad$
%		 $\mathcal{Z}_{\neg\varphi}$\medskip\\
%\tikzset{every state/.style={minimum size=15pt}}
%\begin{tiny}
%  \begin{tikzpicture}[
%    % Default arrow tip
%    %-&gt;,&gt;=stealth',shorten &gt;=1pt,
%		auto,
%    % Default node distance
%    node distance=0.8cm,
%    % Edge stroke thickness: semithick, thick, thin
%    semithick
%    ]
%     \node[state, label={right: $a,b$}, label={left: $\forall$}] (0) {};
%     \node (yo) [below of=0] {};
%     \node[state, label={right: $a$}, label={left: $\exists$}] (1) [left of=yo] {};
%     \node[state, label={right: $b$},, label={left: $\exists$}] (2) [right of=yo] {};
%     \path[->] (0) edge node [pos=0.3,left] {} (1);
%     \path[->] (0) edge node [pos=0.3,right] {} (2);
%
%  \end{tikzpicture}
%\end{tiny}
%    \end{minipage}%
%    \begin{minipage}{.75\linewidth}
%    \begin{align*}
%X_{a,b}&=_\mu X_a\cup X_b\\
%X_a &=_\nu (\neg b\cap\mathsf{Cpre}(X_a))\cup
%(b\cap\mathsf{Cpre}(X_{a,b}))\\
%X_b &=_\nu (\neg a\cap\mathsf{Cpre}(X_b))\cup
%(a\cap\mathsf{Cpre}(X_{a,b}))
%\end{align*}
%    \end{minipage}%
%    \medskip\\

		\item \emph{Streett condition}: The vertices in the Zielonka tree for
		Streett condition given by
		$\varphi=\bigwedge_{1\leq i\leq k}\left (\mathsf{Inf}~r_i
		\to\mathsf{Inf}~g_i \right )$
		are identified by duplicate-free lists $\mathsf{L}$ of colors (each entry being
		$r_i$ or $g_i$ for some $1\leq i\leq k$)
		that encode	the vertex position in
		the tree. Vertex $\mathsf{L}$ has label $l(\mathsf{L})=C\setminus \mathsf{L}$ 
		and is winning if and only if $|\mathsf{L}|$ is even.
		Winning vertices $\mathsf{L}$ have one child vertex
		${\mathsf{L}:g_j}$ for each $g_j\in C\setminus \mathsf{L}$ resulting in
		$|C\setminus \mathsf{L}|/2$ many child nodes. Losing vertices $\mathsf{L}$ 
		have the single
		child vertex $\mathsf{L}:r_j$ where the last entry $\mathsf{last}(\mathsf{L})$ 
		in $\mathsf{L}$	is $g_j$. All leafs are winning and are labeled with
		$\emptyset$. The tree has
		height $2k$ and $2(k!)$ vertices. %; the Zielonka tree for Rabin conditions
%		is obtained from the tree for Streett conditions by flipping vertex ownership.

\item
To obtain a Zielonka tree  that has branching at both winning and losing nodes,
we consider the objective
$\varphi_{EL}=(\mathsf{Inf}~a\to
\mathsf{Inf}~b)\wedge((\mathsf{Fin}~a\vee
\mathsf{Fin}~d)\land
\mathsf{Inf}~c)$.
This property can be seen as the conjunction of a
Streett pair $(a,b)$ with two disjunctive Rabin pairs $(c,a)$ and
$(c,d)$, stating that $c$ occurs infinitely often and $a$ occurs
finitely often or $c$ occurs infinitely often and $d$ occurs finitely
often.
In the next page we depict the induced Zielonka tree.

\begin{center}
\tikzset{every state/.style={minimum size=15pt}}
\begin{tiny}
  \begin{tikzpicture}[
    % Default arrow tip
    %-&gt;,&gt;=stealth',shorten &gt;=1pt,
		auto,
    % Default node distance
    node distance=0.8cm,
    % Edge stroke thickness: semithick, thick, thin
    semithick
    ]
     \node[state, label={right: $a,b,c,d$}] (0) {$1$};
     \node (yo) [below of=0] {};
     \node (yo2) [right of=yo] {};
     \node[state, rectangle, label={left: $a,b,c$}] (1) [left of=yo] {$2$};
     \node[state, rectangle, label={left: $b,c,d$}] (2) [right of=yo2]
     {$3$};
     \node (yo1) [below of=1] {};
     \node[state, label={right: $a,c$}] (3) [right of=yo1] {$5$};
     \node[state, label={right: $a,b$}] (4) [left of=yo1] {$4$};
     \node[state, label={below: $b,d$}] (5) [below of=2] {$6$};
     \node[state, rectangle, label={right: $c$}] (6) [below of=3] {$7$};
     \node[state, label={right: $\emptyset$}] (7) [below of=6] {$8$};
     \path[->] (0) edge node [pos=0.3,left] {} (1);
     \path[->] (0) edge node [pos=0.3,right] {} (2);
     \path[->] (1) edge node [pos=0.3,left] {} (3);
     \path[->] (1) edge node [pos=0.3,left] {} (4);
     \path[->] (2) edge node [pos=0.3,left] {} (5);
     \path[->] (3) edge node [pos=0.3,left] {} (6);
     \path[->] (6) edge node [pos=0.3,left] {} (7);

  \end{tikzpicture}
\end{tiny}
\end{center}

	\end{enumerate}

\end{example}

\begin{toappendix}
\begin{example}\label{example:zielonkaTreesCont}
We expand on Example~\ref{example:zielonkaTrees} and
provide instances of Zielonka trees
to the objectives from Example~\ref{example:elGames}.
\begin{enumerate}

		\item \emph{B\"uchi condition}: The Zielonka tree for the B\"uchi condition
		$\varphi=\mathsf{Inf}~f$
		consists of just two vertices $r=s_0$ and $s_1$
		labeled with $\{f\}$ and $\emptyset$, respectively. The vertex
		$s_0$ is winning
		and the vertex $s_1$ is losing.

		\item \emph{Parity condition}: Given $k$ priorities
		$\{p_1,\ldots,p_k\}$, the induced
		Zielonka tree of the parity condition $\varphi=\bigvee_{i \text{ even}}
		\mathsf{Inf}~p_i\wedge \bigwedge_{j>i}\mathsf{Fin}~p_j$
		is just a single path of length $k$ in which the
		$i$-th vertex $s_i$ is labeled
		with the set $\{p_j\mid j\leq (k+1)-i\}$; the vertex $s_i$ is
		winning if and
		only if $i$ is even. Assuming that every node has at least one
		priority,
		the leaf is losing and labeled with $\{p_1\}$.

		\item \emph{Rabin condition}: The vertices in the Zielonka tree for Rabin
		conditions
		$\langle e_i,f_i\rangle_{1\leq i\leq k}$
		are identified by duplicate-free lists $\mathsf{L}$ of colors that encode
		the vertex position in
		the tree. Vertices $\mathsf{L}$ have labels
		$l(\mathsf{L})=C\setminus \mathsf{L}$.
		Vertices $\mathsf{L}$ are winning if and only if $|\mathsf{L}|$ is odd.
		Losing vertices $L$ have a child vertex
		${\mathsf{L}:f_j}$ for each $f_j\in C\setminus \mathsf{L}$ resulting in $|C\setminus
		\mathsf{L}|/2$ many child vertices. Winning vertices $\mathsf{L}$ have
		the single
		child vertex $\mathsf{L}:e_j$ where the last entry
		$\mathsf{last}(\mathsf{L})$ in $\mathsf{L}$
		is $f_j$. All leafs are losing and are labeled with
		$\emptyset$. The tree has
		height $2k$ and $2(k!)$ vertices.

		\item \emph{Muller condition}: The Zielonka tree for Muller conditions
		$U\subseteq \mathcal{P}(V)$ is structured as follows. Winning
		vertices $s$ have labels
		$l(s)\in U$ and losing vertices $t$ have labels $l(t)\notin U$.
		Winning vertices $s$ have a child vertex $t$ for every maximal
		set $C$ of colors
		among the sets $D$ of colors for which $D\subseteq l(s)$ and
		$D\not\models\varphi$.
		Similarly, losing vertices $t$ have a child vertex $s$ for every
		maximal set $C$ of colors among the sets $D$ of colors for which
		$D\subseteq l(t)$ and $D\models\varphi$.
		The tree has height at most $k$ and branching-width at most $2^k$.

\end{enumerate}
\end{example}
\end{toappendix}

\begin{lemma}\label{lem:ZielonkaTreeSize}
	The height and the branching width of $\mathcal{Z}_\varphi$ are bounded
	by $|C|$ and
	$2^{|C|}$ respectively; the number of vertices in $\mathcal{Z}_\varphi$
	is bounded by $e|C|!$ (where $e$ is Euler's number).
\end{lemma}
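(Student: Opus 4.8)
The plan is to establish the three bounds separately: height and branching width follow from direct structural observations about the defining constraint, while the vertex count requires an inductive argument that reduces to a combinatorial estimate on antichains.

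\emph{Height and branching width.} First I would observe that, by the second defining constraint, every child $t_D$ of a vertex $t$ carries a label $D\subsetneq l(t)$, so $|l(t_D)|\le |l(t)|-1$. Hence along any root-to-leaf path the label sizes form a strictly decreasing sequence of non-negative integers starting at $|l(r)|=|C|$; such a path has at most $|C|$ edges, giving height $\le |C|$. For the branching width, the children of $t$ are labelled by \emph{pairwise distinct} proper subsets of $l(t)$ (there is exactly one child per such set), of which there are at most $2^{|l(t)|}\le 2^{|C|}$; this bounds the out-degree of every vertex, and hence the branching width, by $2^{|C|}$.

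\emph{Setting up the recursion for the vertex count.} The harder bound is $|T|\le e\cdot|C|!$. I would argue by induction on label size. Since the subtree rooted at a vertex is determined solely by its label and by $\varphi$, I can write $g(n)$ for the maximal number of vertices in any such subtree whose root has a label of size $n$. The defining constraint then yields
\[
g(n)\le 1+\max_{\mathcal{A}}\ \sum_{D\in\mathcal{A}} g(|D|),
\]
where $\mathcal{A}$ ranges over all \emph{antichains} of proper subsets of an $n$-element set. The antichain condition is forced: the children of a vertex are the inclusion-maximal sets of a family, and distinct maximal sets are pairwise incomparable, so the distinct child labels form a genuine antichain in the subset lattice. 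The base case is $g(0)=1$, since a vertex labelled $\emptyset$ has no proper subsets and is a leaf.

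\emph{Main obstacle and its resolution.} The step I expect to be the main obstacle is bounding the antichain sum, because an antichain can be exponentially wide (Sperner), so the naive worry is that many children of size $\approx n/2$ could dominate. The key tool is the Lubell--Yamamoto--Meshalkin inequality $\sum_{D\in\mathcal{A}}\binom{n}{|D|}^{-1}\le 1$, which, after substituting the inductive bound $g(m)\le B(m):=m!\sum_{k=0}^{m}\tfrac1{k!}$, gives
\[
\sum_{D\in\mathcal{A}} g(|D|)\le \sum_{D\in\mathcal{A}} B(|D|)\le \max_{0\le m\le n-1}\binom{n}{m}\,B(m).
\]
It then remains to verify that this weighted maximum is attained at the thin top layer $m=n-1$. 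Using $\binom{n}{m}B(m)=\tfrac{n!}{(n-m)!}\sum_{k=0}^{m}\tfrac1{k!}$, this reduces to the elementary estimate $\tfrac{1}{(n-m)!}\sum_{k=0}^{m}\tfrac1{k!}\le\sum_{k=0}^{n-1}\tfrac1{k!}$, which holds because $(n-m)!\ge 1$ and the left-hand sum has no more terms than the right. Consequently $\sum_{D\in\mathcal{A}} g(|D|)\le n\,B(n-1)$, so $g(n)\le 1+n\,B(n-1)$. Since $B$ satisfies exactly $B(n)=1+n\,B(n-1)$ with $B(0)=1$, the induction closes with $g(|C|)\le B(|C|)=|C|!\sum_{k=0}^{|C|}\tfrac1{k!}<e\cdot|C|!$, which is the claimed bound on $|T|$.
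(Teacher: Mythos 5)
Your proof is correct, but it takes a genuinely different route from the paper's. The paper bounds the vertex count by embedding $\mathcal{Z}_\varphi$ into a universal ``maximal tree'' $T^m_C$, in which every vertex labelled $S$ has exactly $|S|$ children labelled $S\setminus\{c\}$ for $c\in S$; the count $t(i)$ of that tree satisfies $t(i+1)=(i+1)\,t(i)+1$, which solves to $i!\sum_{k=0}^i \tfrac1{k!}\le e\cdot i!$, and one then argues that distinct vertices of $\mathcal{Z}_\varphi$ correspond to distinct vertices of $T^m_C$ (the children of a Zielonka vertex are realized as descendants in $T^m_C$, i.e.\ the Zielonka tree is $T^m_C$ with shortcuts). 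You instead recurse directly on the Zielonka tree: you observe that child labels form an antichain of proper subsets, and you control the antichain sum with the Lubell--Yamamoto--Meshalkin inequality, showing via the weighted maximum $\binom{n}{m}B(m)$ that the extremal configuration is the thin top layer $m=n-1$, which closes the induction since $B(n)=n\,B(n-1)+1$. Both arguments hinge on the same extremal function $B(n)=n!\sum_{k=0}^n\tfrac1{k!}$ and the same worst case (children each missing one color), but they reach it differently: the paper's embedding is more elementary (no external combinatorial inequality) and yields tightness for free, since the Muller condition ``an even number of colors is visited infinitely often'' has $\mathcal{Z}_\varphi=T^m_C$ exactly; your argument needs LYM but is a self-contained induction that makes explicit \emph{why} wide Sperner-type antichains of mid-sized children cannot beat the top layer, and it avoids the injectivity bookkeeping of the embedding, which the paper treats rather tersely. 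Note that your proof establishes only the upper bound, not the tightness remark the paper appends, but the lemma as stated claims only the bound, so this is not a gap.
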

\begin{toappendix}
\vspace{15pt}
\noindent\textbf{Proof of Lemma~\ref{lem:ZielonkaTreeSize}:}
\begin{proof}
	The root in $\mathcal{Z}_\varphi$ is labeled with $C$.
	The claim on the height of $\mathcal{Z}_\varphi$ thus follows since
	%the leafs in´ $\mathcal{Z}_\varphi$ have non-empty labels and since
	descending
	a level in $\mathcal{Z}_\varphi$ removes at least one color from vertex
	labels.
	The bound on the branching width of $\mathcal{Z}_\varphi$ is immediate
	since siblings
	are required to have incomparable labels and since the labels of
	child vertices
	are contained in the labels of their parents. This directly yields an
	upper bound
	$2^{|C|^2}$ on the number of vertices in $\mathcal{Z}_\varphi$. The tight
	bound $e |C|!$ is obtained as follows.
	Consider the maximal labeled tree $T^m_C$ such that $l(r)=C$ and for
	every
	$s$ in $T^m_C$, $s$ has $|l(s)|$ children each labeled with
	$l(s)\setminus \{c\}$ for some $c\in l(s)$.
	Let $t(i)$ be the number of vertices of $T^m_C$ for $|C|=i$.
	By the construction of $T^m_C$ we have $t(i+1)=(i+1)\cdot t(i)+1$.
	We prove by induction that
	$t(i)=i! \cdot \displaystyle\sum_{k=0}^i \frac{1}{k!}$.
	This holds for $t(0)=1$.
	Then by induction:
	$$t(i+1)=(i+1)\cdot t(i)+1=
	(i+1)!\cdot \sum_{k=0}^i \frac{1}{k!} + \frac{(i+1)!}{(i+1)!} =
	(i+1)!\cdot \sum_{k=0}^{i+1}\frac{1}{k!}
	$$
	Finally, $\sum_{k=0}^\infty \frac{1}{k!}=e$ and
	$\sum_{k=0}^i \frac{1}{k!}\leq e$ for all $i\geq 0$.
	The Zielonka tree $\mathcal{Z}_\varphi$ over the set of colors $C$ can be seen
	as a tree constructed over the vertices of $T^m_C$ (using each vertex in $T^m_C$
	at most once):
    the root in $\mathcal{Z}_\varphi$ corresponds to $r$,
    and given a vertex $s$ in $\mathcal{Z}_\varphi$
    that already corresponds to some vertex $s'$ in $T^m_C$, a child vertex $t$ of $s$ in
    $\mathcal{Z}_\varphi$ corresponds to an arbitrary
    descendant $t'$ of $s'$ in $T^m_C$ such that $l(t)=l(t')$;
    by maximality of vertex labels in Zielonka trees,
    every two different vertices in $\mathcal{Z}_\varphi$
    correspond to different vertices in $T^m_C$. Thus $\mathcal{Z}_\varphi$
    can be obtained from $T^m_C$ by (possibly) taking shortcuts, and removing
    vertices that are bypassed or evaded by the shortcuts.
	Hence, the claimed bound follows.

	Regarding tightness of the bound, consider a formula $\varphi$ that expresses
	that the number of colors that is visited infinitely often is
	even~\cite{Thomas02}.
	Then the Zielonka tree $\mathcal{Z}_\varphi$ is just $T^m_C$, that is, every vertex from
	$T^m_C$ is required and no shortcuts can be taken.
\end{proof}
\end{toappendix}

\begin{toappendix}

Next we describe how Zielonka trees can be used to characterize winning plays,
following~\cite{DziembowskiJW97}. This result is used in the correctness
proof for the novel parity game
reduction presented in Section~\ref{sec:solvingELgames}.

\begin{definition}[Induced walk and dominating vertex]
	\label{defn:ZielonkaPlay}
%	Let $\sigma$ be a Zielonka tree strategy for $\mathcal{Z}_\varphi$.
    Given an infinite play $\pi=v_0 v_1\ldots$ of $G$,
    a sequence $\rho=t_0 s_1 t_1 s_2\ldots$ of vertices of
    the Zielonka tree is an \emph{induced walk} of $\pi$ if
    \begin{itemize}
    \item the walk starts at the minimal leaf in $\mathcal{Z}_\varphi$, that is, if $t_0=\min_\preceq L_r$,
    \item for each $i$, we have $s_{i+1}=\max_{\leq}\{s\in \mathcal{Z}_\varphi
	\mid s \leq t_i	\wedge \gamma(v_i)\subseteq l(s)\}$,
    that is, $s_{i+1}$ is the lower-most ancestor of $t_i$ that
    contains the set $\gamma(v_i)$ of colors in its label; we say that
    $s_{i+1}$ is the \emph{anchor node} of $t_i$ at $v_i$,
    \item for each $i$, we have that $t_{i+1}$ is some leaf below $s_{i+1}$.
    If $s_{i+1}\in T_\Box$, then we additionally require the following in order
    to cycle fairly through children of winning vertices.
    Let $q=|R(s_{i+1})|$ be the number of child vertices of $s_{i+1}$,
    let $1\leq o\leq q$ be \emph{the} number such that $t_{i}$
    is below the $o$-th child of $s_{i+1}$, and put $j=(o \mod q)+1$;
    then we require that $t_{i+1}$ is some leaf below the $j$-th child of $s_{i+1}$.
    \end{itemize}
	As $\mathcal{Z}_\varphi$ is a tree,
	there is, for each walk $\rho$ that is induced by $\pi$,
	a unique top-most vertex that is visited infinitely often in $\rho$.
	We refer to this as the \emph{dominating vertex} of the walk, denoted
	$d_\rho$.
\end{definition}

\begin{lemma}\label{lem:ZielonkaPlays}
	For all plays $\pi$ and all walks $\rho$ induced by $\pi$,
	every color that is visited infinitely often
	by $\pi$ is contained in $l(d_\rho)$.
	A play $\pi$ is winning for the existential player if
    and only if it induces some walk $\rho$
    such that $d_\rho$ is a winning vertex.
\end{lemma}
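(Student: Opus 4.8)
The plan is to isolate one algebraic fact about the Zielonka tree and then read off both assertions from the geometry of induced walks. Write $M=\{c\in C\mid \forall i.\,\exists j\ge i.\,c\in\gamma(v_j)\}$ for the set of colors seen infinitely often, so that $\pi\in\alpha_{\gamma,\varphi}$ iff $M\models\varphi$. Call $t\in T$ \emph{$M$-tight} if $M\subseteq l(t)$ while $M\not\subseteq l(c)$ for every $c\in R(t)$. The one algebraic ingredient I would establish first is that \emph{for $M$-tight $t$ we have $l(t)\models\varphi$ iff $M\models\varphi$}: by definition the children of $t$ are the maximal proper subsets of $l(t)$ of opposite satisfaction value, so if $M\subsetneq l(t)$ had the opposite value it would sit inside some child label, contradicting tightness, while $M=l(t)$ is trivial; for a leaf (no children) tightness is automatic and the same computation applies.

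The geometric backbone is that consecutive vertices of an induced walk $\rho=t_0s_1t_1\dots$ are $\le$-comparable, since each $s_{i+1}$ is an ancestor of both $t_i$ and $t_{i+1}$; thus $\rho$ only ever steps up to an ancestor or down to a descendant. From this I would derive two facts: the set of infinitely visited vertices has a unique topmost element $d_\rho$ (two incomparable recurrent vertices would force a common strict ancestor to recur), and the walk is eventually confined to the subtree of $d_\rho$ (leaving it would require revisiting a strict ancestor of $d_\rho$, which recurs only finitely often).

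For the first assertion, confinement makes every late anchor $s_{i+1}$ a descendant of $d_\rho$, so $l(s_{i+1})\subseteq l(d_\rho)$; as each $c\in M$ lies in $\gamma(v_i)\subseteq l(s_{i+1})$ for infinitely many $i$, we obtain $M\subseteq l(d_\rho)$. For the backward direction of the equivalence, assume $d_\rho\in T_\Box$. If it is not a leaf it recurs only as an anchor, and the fair-cycling clause drives the walk through every child-subtree infinitely often; whenever $s_{i+1}=d_\rho$ with $t_i$ below a child $c$, minimality of the anchor forces $\gamma(v_i)\not\subseteq l(c)$, so $\gamma(v_i)$ meets the finite set $l(d_\rho)\setminus l(c)$, and a pigeonhole over the infinitely many such steps yields a color of $M$ outside $l(c)$. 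Hence $d_\rho$ is $M$-tight and the algebraic lemma gives $M\models\varphi$; the leaf case follows directly from $M\subseteq l(d_\rho)$.

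The forward direction is the delicate one, because the dominating vertex of a \emph{badly} chosen walk need not be tight (the walk can be pushed up to a losing, non-tight ancestor), so here I must \emph{construct} a good walk. Put $\mathcal{R}=\{v\mid M\subseteq l(v)\}$; this is an ancestor-closed subtree containing the root whose leaves are exactly the $M$-tight vertices, all winning by the algebraic lemma. After the last occurrence of a color outside $M$ we have $\gamma(v_i)\subseteq M$, so the deepest $\mathcal{R}$-ancestor $w_i$ of $t_i$ already contains $\gamma(v_i)$ and therefore satisfies $w_i\le s_{i+1}$; hence the depth of $w_i$ never decreases and must stabilize. The crux — the step I would treat most carefully — is showing it stabilizes at an $M$-tight vertex. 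If it froze at a non-tight $w$, the child of $w$ towards $t_i$ would lie outside $\mathcal{R}$ and so miss some $m\in M$; that $m$ recurs and forces the anchor up to $w$ infinitely often, whereupon either $w$ is losing and I steer into an $\mathcal{R}$-child, or $w$ is winning and the round-robin of the fair-cycling clause eventually selects an $\mathcal{R}$-child — either way deepening $w_i$, a contradiction. Thus the walk is eventually confined below an $M$-tight vertex $d$, which by the backward analysis equals $d_\rho$ and is winning.
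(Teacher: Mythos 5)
Your proposal is correct, and its core ingredients coincide with the paper's: your proof of the first claim is the paper's proof (indeed stated more accurately, since the paper momentarily writes $\gamma(v_i)\subseteq l(t_{i+1})$ where the anchor's label is meant), and your ``algebraic fact'' about $M$-tight vertices is exactly the maximality-of-child-labels argument the paper applies to the dominating vertex. Where you genuinely differ is in the decomposition of the biconditional. The paper constructs a \emph{single} witness walk --- the one that cycles fairly at both winning and losing vertices --- proves that its dominating vertex $d_\rho$ satisfies $A\subseteq l(d_\rho)$ and $A\not\subseteq l(t)$ for every child $t$, and concludes the equivalence $l(d_\rho)\models\varphi\Leftrightarrow A\models\varphi$ for that one walk, asserting this suffices. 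You instead split the claim: for the ``if'' direction you reason about an \emph{arbitrary} induced walk, using only the fair-cycling clause at $T_\Box$ that is built into the definition of induced walks, to show that a winning dominating vertex must be $M$-tight; for the ``only if'' direction you build a different witness walk, steering at $T_\bigcirc$ anchors toward the ancestor-closed set $\mathcal{R}=\{v\mid M\subseteq l(v)\}$ and proving by a stabilization-plus-contradiction argument that its dominating vertex is $M$-tight and hence winning. This split buys something real: the lemma quantifies asymmetrically (``some walk'' in the existence direction, but the converse must hold for \emph{whatever} walk happens to have a winning dominating vertex), and the paper's single-walk equivalence, read literally, only covers the fair walk; the needed observation that the $T_\Box$-half of the argument applies to every induced walk is left implicit there, whereas you make it explicit. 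The price is a somewhat longer construction ($\mathcal{R}$-steering and the freeze-below-a-child argument) than the paper's uniform fairness, and one step you compress --- that the stabilization vertex $w$ actually recurs as an anchor and so equals $d_\rho$ --- deserves the same one-line freeze argument (otherwise the walk would be confined below a single child of $w$, forcing $M$ inside that child's label and contradicting tightness), but this is the very argument you already deploy, so nothing is missing in substance.
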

\begin{proof}
	For the first claim, let $c$ be a color that is visited infinitely
	often by $\pi=v_0 v_1\ldots$ and let $i$ be a position such that
	$c\in \gamma(v_i)$
	and such that all
	vertices $l_j,t_j$ in $\rho=l_0t_1
	l_1 t_2\ldots$
	with $j\geq i$ are descendants of the dominating vertex $d_\rho$; such a position exists by
	the assumption that $c$ is visited infinitely often by $\pi$
	and by definition of $d_\rho$.
	By definition of $\rho$, we have $\gamma(v_i)\subseteq
	l(t_{i+1})$, so that, in particular, $c\in l(t_{i+1})$.
	By the assumptions on $i$, $t_{i+1}$ is a descendant of $d_\rho$.
	By definition of $\mathcal{Z}_\varphi$, we have $l(t)\subseteq l(s)$ for
	every descendant $t$ of vertex $s$. Thus $c\in l(d)$.

	For the second claim, let $A=\{c\in C\mid \forall i.\exists j>i.\,c\in \gamma(v_j)\}$
	denote the set of colors that are visited infinitely often by $\pi=v_0 v_1\ldots$.
	It suffices to show that there is a walk $\rho$ induced by $\pi$ such that
	$A\models\varphi$ if and only if $l(d_\rho)\models \varphi$.
	Let $\rho$ be the walk that is obtained by resolving both existential
    and universal choices in a fair manner, that is, use the branching mechanism
    that is enforced in induced walks on vertices from $T_\Box$ also 
    for vertices from $T_\bigcirc$.	By the previous item, we have
	$A\subseteq
	l(d_\rho)$.
	For every child vertex $t\in R(d_\rho)$ of $d_\rho$, we have,
	by definition of $\mathcal{Z}_\varphi$, that $l(t)\not\models\varphi$ if
	and only if $l(d_\rho)\models\varphi$. By definition
	of $d_\rho$, we furthermore have
	$A\not\subseteq l(t)$ for all $t\in R(d_\rho)$:
    since $\rho$ visits all child vertices of $d_\rho$ infinitely often,
    there is a position $i$ at which $\rho$ visits $t$
    and such that all colors that are visited in $\pi$ from position $i$
    on are contained in $A$. As $l(d_\rho)\neq l(t)$
    and $\rho$ visits $d_\rho$ infinitely often,
    there is a color $c\in A$ such that $c\notin l(t)$.
	By definition of $\mathcal{Z}_\varphi$, child vertices are picked to have
	maximal labels. This implies that, for every set $D\subseteq l(d_\rho)$
	such
	that $D\cap A=\emptyset$, we have
	$l(d_\rho)\models\varphi$ if and only
	if $l(d_\rho)\setminus D\models\varphi$. Picking $D=l(d_\rho)\setminus
	A$ so that
	$l(d_\rho)\setminus D=A$, we obtain that
	$l(d_\rho)\models \varphi$
	if and only if $A\models\varphi$.
\end{proof}
\end{toappendix}

% !TeX root = elsynt

\section{Solving Emerson-Lei Games}
\label{sec:solvingELgames}

We now show how to extract from the Zielonka tree of an Emerson-Lei
objective a fixpoint characterization of the winning regions of an
Emerson-Lei game.
%We now show how a fixpoint characterization
%of the winning regions in an Emerson-Lei game can
%be extracted from the Zielonka tree that is associated with the game
%%%objective.
Solving the game then reduces to computing the fixpoint, yielding a game solving
algorithm that works by fixpoint iteration and hence is directly open to symbolic
implementation. The algorithm is adaptive in the sense that the
structure of
its recursive calls is extracted from the Zielonka tree and hence tailored to
the objective. As a stepping stone towards obtaining our fixpoint
characterization, we first show how Zielonka trees can be used to reduce
Emerson-Lei games to parity games that are structured into tree-like subgames.

Recall that $G=(V,V_\exists,V_\forall,E,\alpha_{\gamma,\varphi})$
is an Emerson-Lei game and that the associated Zielonka tree
is $\mathcal{Z}_\varphi=(T,R,l)$ with set $L$ of leaves and root $r$.
Following~\cite{DziembowskiJW97}, we define the \emph{anchor node} of $v\in V$ and $t\in T$ by $\mathsf{anchor}(v,t)=
\max_{\leq}\{s\in T
	\mid s \leq t	\wedge \gamma(v)\subseteq l(s)\}$; it is the
lower-most ancestor of $t$ that contains $\gamma(v)$ in its label.

\paragraph{A novel reduction to parity games.}

Intuitively, our reduction annotates nodes in $G$
with leaves of $\mathcal{Z}_\varphi$ that act as a memory,
holding information about
the order in which colors have been visited. In the reduced game,
the memory value $t\in L$
is updated according to a move from $v$ to $w$ in $G$ by playing a subgame
along the Zielonka tree. This subgame starts
at the anchor node of $v$ and $t$ and the players in turn
pick child vertices, with the existential player choosing
the branch that is taken at vertices from $T_\bigcirc$ and the
universal player choosing
at vertices from $T_\Box$.\footnote{Players choose from
nodes where they lose, which explains the notation $T_\Box$ and
$T_\bigcirc$.}
Once this subgame reaches a leaf $t'\in L$, the
memory value is updated to $t'$ and another step of $G$ is played.
Due to the tree structure of $\mathcal{Z}_\varphi$
every play in the reduced game (walking through the Zielonka tree in the described way) has a unique
topmost vertex from $T$ that it visits infinitely often; the label of this
vertex corresponds to the set of colors that is visited infinitely often
by the according play of $G$. A parity condition can be used to decide whether this vertex is winning or losing.

Formally, we define the parity game $P_G=(V',V'_\exists,V'_\forall,E',\Omega)$, played over $V'=V\times T$, as follows.
Nodes $(v,t)\in V'$ are owned by the existential player if either
$t\notin L$ and $t\in T_\bigcirc$, or $t\in L$ and $v\in V_\exists$;
all other nodes are owned by the universal player.
Moves and priorities are defined by
\begin{align*}
E'(v,t) &= \begin{cases}
\{v\}\times R(t)&\text{$t\notin L$}\\
E(v)\times \{\mathsf{anchor}(v,t)\}&\text{$t\in L$}
\end{cases}
&
\Omega(v,t)&=\begin{cases}
2\cdot \mathsf{lev}(t) & t\in T_\Box\\
2\cdot \mathsf{lev}(t)+1& t\in T_\bigcirc
\end{cases}
\end{align*}
for $(v,t)\in V'$. We note that $|V'|=|V|\cdot|T|\leq |V|\cdot e|C|!$
by Lemma~\ref{lem:ZielonkaTreeSize}.

\begin{theorem}\label{thm:parityReduction}
For all $v\in V$,
the existential player wins $v$ in the Emerson-Lei game $G$ if and only if
the existential player wins $(v,r)$ in the parity
game $P_G$.
\end{theorem}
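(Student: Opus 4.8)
The plan is to prove the two implications separately, in each case transferring a winning strategy across the reduction. The backbone of both arguments is a correspondence between plays. Starting from $(v,r)$, any play of $P_G$ alternates finite descents through $\mathcal{Z}_\varphi$ (during which the $V$-component is frozen) with single moves of $G$ taken at leaves (where the $V$-component changes and the tree-component jumps to $\mathsf{anchor}(v,t)$). Collapsing the stutter of the descents, such a play projects to a genuine infinite play $\pi$ of $G$ (infinite, since every descent reaches a leaf within $|C|$ steps), and the sequence of leaves and anchors it traverses forms a walk $\rho$ through $\mathcal{Z}_\varphi$ whose anchor steps obey Definition~\ref{defn:ZielonkaPlay} (though its free branch choices need not respect the fair-cycling clause). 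Since anchors lie above the leaves and $\mathsf{lev}$ decreases down the tree, the topmost tree-vertex visited infinitely often by the $P_G$-play is the topmost anchor visited infinitely often, i.e. the dominating vertex $d_\rho$; by the definition of $\Omega$ its parity is even exactly when $d_\rho\in T_\Box$, so the $P_G$-play is won by $\exists$ iff $d_\rho$ is a winning vertex.

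For the implication from $P_G$ to $G$, I would fix a winning strategy $\sigma'$ for $\exists$ from $(v,r)$ (positional, by determinacy of parity games) and define a $G$-strategy $\sigma$ that keeps the current leaf as memory and internally simulates $P_G$: at an $\exists$-node it plays the $G$-move prescribed by $\sigma'$ at the corresponding leaf, and after every move it recomputes the memory by jumping to the anchor and descending to a fresh leaf, resolving branches at $T_\bigcirc$-vertices by $\sigma'$ and branches at $T_\Box$-vertices by fair cycling as in Definition~\ref{defn:ZielonkaPlay}. The simulated play is then compliant with $\sigma'$ (the fair cycling only resolves moves that belong to $\forall$) and, since fair cycling is now used at all $T_\Box$-vertices, it is a bona fide induced walk. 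Hence its dominating vertex is winning, and Lemma~\ref{lem:ZielonkaPlays} yields that the projected play $\pi$, which is the actual play of $G$ produced by $\sigma$, is won by $\exists$. As this holds against every $\forall$-play, $\sigma$ is winning from $v$.

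For the converse I would fix a winning $G$-strategy $\sigma$ and define $\sigma'$ in $P_G$ by playing $\sigma$ at leaves and cycling fairly at the $T_\bigcirc$-vertices that $\exists$ owns. Any play compliant with $\sigma'$ projects to a play $\pi$ compliant with $\sigma$, hence winning, so the set $A$ of colors visited infinitely often satisfies $A\models\varphi$; by the first part of Lemma~\ref{lem:ZielonkaPlays}, $A\subseteq l(d)$ for the dominating vertex $d$. It then suffices to show $d\in T_\Box$. If instead $d\in T_\bigcirc$, then $\exists$ cycles fairly there, so every child of $d$ is entered infinitely often; as in the proof of Lemma~\ref{lem:ZielonkaPlays}, each return of the anchor to $d$ from a leaf below a child $c$ witnesses a color of $A$ lying outside $l(c)$, whence $A\not\subseteq l(c)$ for every child $c$ of $d$. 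Since the children of $d$ are maximal subsets of $l(d)$ with the opposite satisfaction value, this forces $A\models\varphi \Leftrightarrow l(d)\models\varphi$, contradicting $A\models\varphi$ together with $l(d)\not\models\varphi$. Thus $d\in T_\Box$ and $\exists$ wins the $P_G$-play, so $\sigma'$ is winning from $(v,r)$.

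I expect the main obstacle to be exactly the mismatch flagged in the first paragraph: $P_G$ lets each owner pick branches freely, whereas Lemma~\ref{lem:ZielonkaPlays} speaks about induced walks, which must cycle fairly through the children of $T_\Box$-vertices. The reduction copes with this by re-injecting fair cycling through the strategies — simulating it for the opponent when passing from $P_G$ to $G$, and having $\exists$ perform it on its own $T_\bigcirc$-vertices when passing from $G$ to $P_G$ — and it is here that the ownership convention ($\exists$ chooses at losing vertices, $\forall$ at winning ones) does the real work in aligning the parity condition with the dominating vertex. The remaining technical point, that fair cycling at the dominating vertex genuinely yields $A\not\subseteq l(c)$ for all of its children, is the same computation already performed inside the proof of Lemma~\ref{lem:ZielonkaPlays}, so I would reuse it rather than redo it.
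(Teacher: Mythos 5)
Your proposal is correct, and its skeleton is the same as the paper's: both directions are proved by simulating a winning strategy across the reduction, with leaf-valued memory on the $G$ side, fair cycling injected through strategy memory on the $P_G$ side, and Lemma~\ref{lem:ZielonkaPlays} translating ``dominating vertex in $T_\Box$'' into ``winning play of $G$''. Your direction from $P_G$ to $G$ coincides with the paper's almost verbatim (positional strategy, descend with it at $T_\bigcirc$-vertices, simulate fair cycling for the opponent at $T_\Box$-vertices, then apply the lemma to the resulting bona fide induced walk). The real difference is in the direction from $G$ to $P_G$. The paper builds the same strategy you do --- follow $\sigma$ at leaves, fair-cycle at $T_\bigcirc$-vertices via memory --- but then simply asserts that every compliant play of $P_G$ is an induced walk of the projected play $\pi$ and invokes Lemma~\ref{lem:ZielonkaPlays}; strictly speaking that assertion is not justified, since induced walks (Definition~\ref{defn:ZielonkaPlay}) demand fair cycling at $T_\Box$-vertices, which in $P_G$ belong to the universal player, who is under no fairness obligation. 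You flag exactly this mismatch and repair it: you use only the anchor-based first claim of the lemma to get $A\subseteq l(d)$ for the set $A$ of colors visited infinitely often, and then rerun the maximality computation at a hypothetical dominating vertex $d\in T_\bigcirc$ --- fair cycling by the existential player at $d$ yields $A\not\subseteq l(c)$ for every child $c$, and maximality of children's labels forces $A\models\varphi\Leftrightarrow l(d)\models\varphi$, contradicting $\pi$ being winning. So at that point your argument is more rigorous and self-contained than the paper's own proof, at the modest cost of duplicating (as you note, re-using) a computation that the paper keeps inside the proof of Lemma~\ref{lem:ZielonkaPlays}.
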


\begin{toappendix}
\noindent\textbf{Proof of Theorem~\ref{thm:parityReduction}:}
\begin{proof}
For one direction, let $\sigma$ be a winning strategy for
the existential player in the Emerson-Lei game $G$. We define a 
strategy $\tau:V'_\exists\times M\to V'$ with memory $M$
for the existential player in the parity game $P_G$ as follows.
Each memory value enodes a function $f:V\times T_\bigcirc\to T$
that maps pairs $(v,t)$ to some child vertex $t'\in R(t)$.
The memory is used to memorize past branching choices at nodes
$(v,t)$ such that $t\notin L$ and $t\in T_\bigcirc$, and will be used to ensure
that $\sigma$ cycles fairly through all child vertices of such vertices.
Let $(v,t)\in V'$ be a node such that the existential player wins $v$ in $G$ with 
strategy $\sigma$.
If $t\notin L$ and $t\in T_\bigcirc$, then let $f\in M$
be a memory value as described above; we put $\tau(v,t,f)=(v,t')$, where
$t'=f(t)+1\mod |R(t)|$ is the next node according to memory $f\in M$.
If $t\in L$ and $v\in V_\exists$, then we put $\tau(v,t,f)=(\sigma(v),t')$, where
$t'$ is the anchor node of $v$ and $t$. The memory updating function is defined
as expected. This clearly defines a valid strategy $\tau$.

It remains to show that $\tau$ is a winning strategy for the existential player.
So let $\rho$ be a play of $P_G$ that is compatible with $\tau$.
By construction, $\rho$ induces a play $\pi$ of $G$ that is compatible with $\sigma$.
We have to show that the maximal 
priority that is  visited infinitely
often by $\rho$ is even. As $\rho$ can be seen as a walk trough the Zielonka
tree, this amounts to showing that the dominating vertex in 
$\rho$ belongs to $T_\Box$. We note that $\rho$ is a walk through the Zielonka
tree that is induced by $\pi$.
By Lemma~\ref{lem:ZielonkaPlays}, it suffices to show that the existential
player wins $\pi$. But this is the case since $\pi$ is compatible
with $\sigma$ and $\sigma$ is winning strategy for the existential player.

For the converse direction, let $\tau:V'_\exists\to V'$ 
be a positional winning strategy
for the existential player in the parity game $P_G$.
We define a 
strategy $\sigma:V_\exists\times M\to V$ with memory $M=L$ (that is, memory
values are leafs in the Zielonka tree)
for the existential player in the Emerson-Lei game $G$ as follows.
Let $(v,t)\in V'$ such that $v\in V_\exists$ and
$t\in L$ be a node that the existential player wins with 
strategy $\tau$. Then we have $\tau(v,t)=(w,t')$ where
$w\in E(v)$ and $t'$ is the anchor node of $v$ and $t$;
put $\sigma(v,t)=w$. Memory values in $\sigma$ are updated
according to the Zielonka tree component in $P_G$, that is,
a partial play in $P_G$ that leads from $(v,t)$ to the next node $(w,t')$ 
such that $t'\in L$ updates the memory
for $\sigma$ from $t$ to $t'$.
For a move $(v,w)$ in $G$ that is compatible
with $\sigma$ under memory value $t\in L$, 
update the memory to $t'$, where $t'$ is obtained from the partial play in $P_G$
that leads from $(w,\mathsf{anchor}(v,t))$ to $(w,t')$, where
existential branching at nodes $(v,t)$ with $t\notin L$ is resolved 
using $\tau$, and branching at universal anchor nodes is resolved in a fair manner.
This clearly defines a valid strategy $\sigma$.

It remains to show that $\sigma$ is a winning strategy for the existential player.
So let $\pi$ be a play of $G$ that is compatible with $\sigma$.
By Lemma~\ref{lem:ZielonkaPlays}, it suffices to show that $\pi$ induces some
walk through the Zielonka tree in which the dominating vertex belongs to $T_\Box$.
By construction of $\sigma$, $\pi$ induces a play $\rho$ of $P_G$ that is compatible
with $\tau$. This play $\rho$ is an induced walk through the Zielonka tree.
As $\tau$ is a winning strategy, the maximal priority that is  visited infinitely
often by $\tau$ is even. Thus the dominating vertex in $\rho$ belongs to $T_\Box$.
\end{proof}

\end{toappendix}

This reduction yields a novel indirect method to solve Emerson-Lei games with
$n$ nodes and $k$ colors by solving parity games with $n\cdot ek!$ nodes
and $2k$ priorities; by itself, this reduction does not improve upon
using later appearance records~\cite{HunterD05}.
However, the game $P_G$ consists of subgames of particular tree-like
shapes. The remainder of this section is dedicated to showing how the
special structure of $P_G$ allows for direct
symbolic solution by solving equivalent systems of fixpoint equations
over $V$ (rather than over the exponential-sized set $V'$).

\noindent
\paragraph{Fixpoint equation systems.}
Recall (from e.g.~\cite{BaldanKMP19}) that a hierarchical system
of fixpoint equations is given by equations
$$
X_i =_{\eta_i} f_i(X_1,\ldots,X_k)
$$
for $1\leq i\leq k$, where $\eta_i\in\{\mathsf{GFP},\mathsf{LFP}\}$ and the
$f_i:\mathcal{P}(V)^k\to \mathcal{P}(V)$ are
\emph{monotone} functions, that is, $f_i(A_1,\ldots,A_k)\subseteq
f_i(B_1,\ldots,B_k)$ whenever $A_j\subseteq B_j$ for all $1\leq j\leq k$.
As we aim to use fixpoint equation systems to characterize winning regions of games, it is convenient to define the semantics of equation systems also in terms of games, as proposed in~\cite{BaldanKMP19}. For a system $S$ of $k$ fixpoint equations,
the \emph{fixpoint game} $G_S=(V,V_\exists,V_\forall,E,\Omega)$
is a parity game with sets of nodes $V_\exists=V\times\{1,\ldots,k\}$ and
$V_\forall= \mathcal{P}(V)^k$.
Edges $E$ and priority function $\Omega:V\to\mathbb{N}$
of game nodes are defined,
for $(v,i)\in V_\exists$ and $\bar{A}=(A_1,\ldots,A_k)\in V_\forall$, by
\begin{align*}
E(v,i)&=\{\bar{A}\in V_\forall\mid
v\in f_i(\bar{A})\} & E(\bar{A})&=\{(v,i)\in V_\exists\mid
v\in A_i\}
\end{align*}
and by $\Omega(v,i)=2i-\iota_i$ and $\Omega(\bar{A})=0$,
where $\iota_i=1$ if $\eta_i=\mathsf{LFP}$ and $\iota_i=0$
if $\eta_i=\mathsf{GFP}$.
We say that $v$ is contained in the \emph{solution} of
variable $X_i$ (denoted by $v\in\sem{X_i}$) if and only if the existential
player wins the node $(v,i)$ in $G_S$.
In order to show containment of a node $v$ in the solution of $X_i$, the existential
player thus has to provide a solution $(A_1,\ldots,A_k)\in V_\forall$ for
all variables such that $v\in f_i(A_1,\ldots,A_k)$; the universal player
in turn can challenge a claimed solution $(A_1,\ldots,A_k)$ by picking
some $1\leq i\leq k$ and $v\in A_i$ and moving to $(v,i)$. The game objective
checks whether the dominating equation in a play (that is, the equation
with minimal index among the equations that are evaluated infinitely often in the play) is
a least or a greatest fixpoint equation.

Baldan et al. have shown in~\cite{BaldanKMP19}
that this game characterization is equivalent
to the more traditional
Knaster-Tarski-style definition of the semantics of fixpoint
equation systems in terms of nested fixpoints of the involved functions $f_i$.
%;
%here we use fixpoint games since they are suitable for proving the correctness of
%fixpoint characterizations of winning regions in games.

To give a flavor of the close connection between fixpoint equation systems and winning
regions in games, we recall that for a given set $V$ of nodes, the \emph{controllable predecessor function}
	$\mathsf{CPre}:2^V\to 2^V$ is defined, for $X\subseteq V$, by
$$%	\begin{align*}
		\mathsf{CPre}(X)=\{v\in V_\exists\mid
		E(v)\cap X\neq\emptyset\}\cup
		\{v\in V_\forall\mid
		E(v)\subseteq X\}.
$$%	\end{align*}

\begin{example}\label{ex:cpre}
Given a B\"uchi game
$(V,V_\exists,V_\forall,E,\mathsf{Inf}~f)$ with coloring function $\gamma:V\to 2^{\{f\}}$,
the winning region of the existential player is the solution of the equation system
\begin{align*}
X_1 & =_{\mathsf{GFP}} X_2 & X_2 & =_{\mathsf{LFP}} (f\cap\mathsf{CPre}(X_1))\cup (\overline{f}\cap\mathsf{CPre}(X_2))
\end{align*}
where $f=\{v\in V\mid \gamma(v)=\{f\}\}$ and $\overline{f}=V\setminus f$.
\end{example}

%By the results from Section~\ref{sec:el-ziel},
%the existential player wins an Emerson-Lei game
%if and only if there is a strategy $s$
%such that for all plays $\pi$ that
%comply with $s$, there is an induced walk $\rho$ in which the dominating
%vertex $d_\rho$ is a winning vertex.
%Looking at the walks in $\mathcal{Z}_\varphi$
%that are induced by plays of Emerson-Lei games, the existential player
%thus
%has to ensure that all losing vertices $v$ in $\mathcal{Z}_\varphi$ are
%visited
%only finitely often unless some ancestor of $v$ is visited infinitely
%often.
%Hence the winning region in Emerson-Lei games can be
%specified by a fixpoint equation system in which greatest fixpoints correspond
%to winning vertices in $\mathcal{Z}_\varphi$
%and least fixpoints correspond to losing vertices.

Our upcoming fixpoint characterization of winning regions in Emerson-Lei games
uses the following notation that relates game nodes with anchor nodes in
the Zielonka tree.

\begin{definition}\label{defn:fts}
	For a set $D\subseteq C$ of colors, and ${\bowtie}\in \{\subseteq,
\not\subseteq\}$ we put
$\gamma^{-1}_{\bowtie D}=\{v\in V\mid \gamma(v)\, \bowtie
\,D\}$.	For
	$s,t\in T$ such that $s< t$ (that is,
	$s$ is an ancestor of $t$ in $\mathcal{Z}_\varphi$),
	we define
	\begin{align*}
		\mathsf{anc}^s_{t}=\gamma^{-1}_{\subseteq l(s)}\cap
		\gamma^{-1}_{\not\subseteq l(s_t)}
	\end{align*}
	where $s_t$ is \emph{the} child vertex of $s$ that
	leads to $t$; we also put
	$\mathsf{anc}^t_{t}=\gamma^{-1}_{\subseteq l(t)}$.
\end{definition}
Note that for fixed $t\in T$ and $v\in V$, there is a unique $s\in T$
such that $s\leq t$ and $v\in \mathsf{anc}^s_{t}$ (possibly, $s=t$); this $s$ is the
anchor node of $t$ at $v$.

Next, we present our
fixpoint characterization of winning in Emerson-Lei games,
noting that it closely follows the definition of $P_G$.

\begin{definition}[Emerson-Lei equation system]
	\label{defn:fp}
We define the system $S_\varphi$ of fixpoint equations for the objective $\varphi$ by putting
	\begin{align*}
		X_s &=_{\eta_s}
		\begin{cases}
			\textstyle\bigcup_{t\in R(s)}X_{t} &
			R(s)\neq\emptyset,s\in T_\bigcirc\\
			\textstyle\bigcap_{t\in R(s)}X_{t} &
			R(s)\neq\emptyset,s\in T_\Box\\
			\textstyle\bigcup_{s'\leq s} \left(\mathsf{anc}^{s'}_{s} \cap\mathsf{CPre}(X_{s'})\right) &
			R(s)=\emptyset
		\end{cases}
	\end{align*}
	for $s\in T$.
    For every $t\in T$, we use $X_t$ to refer to the variable $X_i$
    where $i$ is the index of $t$
    according to $\preceq$ and similarly for $\eta_t$.
	Furthermore, $\eta_t=\mathsf{GFP}$ if $t\in T_\Box$
	and $\eta_t=\mathsf{LFP}$ if $t\in T_\bigcirc$.
\end{definition}

\begin{example}\label{example:eqSys}
Instantiating Definition~\ref{defn:fp} to the B\"uchi objective $\varphi=\mathsf{Inf}~f$
yields exactly the equation system given in Example~\ref{ex:cpre}.
Revisiting the objectives from Example~\ref{example:zielonkaTrees}, we
obtain the following fixpoint characterizations %that coincide with classical
%fixpoint algorithms for the respective type of games; for brevity we present the
%details only for selected cases
(further examples can be found in the appendix).
	\begin{enumerate}
		\item \emph{Generalized B\"uchi condition:}
		\begin{align*}
			X_{s_0} &=_\mathsf{GFP}\textstyle\bigcap_{1\leq i\leq k}X_{s_i} &
			X_{s_i} &=_\mathsf{LFP}
			(\mathsf{anc}^{s_0}_{s_i}\cap \mathsf{CPre}(X_{s_0})) \cup
			(\mathsf{anc}^{s_i}_{s_i}\cap \mathsf{CPre}(X_{s_i}))
		\end{align*}
		where $\mathsf{anc}^{s_0}_{s_i}=\gamma^{-1}_{\subseteq C}
			\cap \gamma^{-1}_{\not\subseteq C\setminus\{f_i\}}=\{v\in V\mid f_i\in \gamma(v)\}$ and
			$\mathsf{anc}^{s_i}_{s_i}=\gamma^{-1}_{\subseteq C\setminus\{f_i\}}$.\medskip

		\item \emph{Streett condition}:
		\begin{align*}
			X_{\mathsf{L}}&=_{\eta_{\mathsf{L}}}
			\begin{cases}
				\textstyle \bigcap_{g_j\notin {\mathsf{L}}}X_{{\mathsf{L}}:g_j}  & |{\mathsf{L}}| \text{ even},
				|{\mathsf{L}}|<2k\\
				X_{{\mathsf{L}}:r_j}& |{\mathsf{L}}| \text{ odd},\mathsf{last}({\mathsf{L}})=g_j\\
				(\mathsf{anc}^{[]}_{{\mathsf{L}}}\cap \mathsf{CPre}(X_{[]})) \cup \ldots \cup
				(\mathsf{anc}^{{\mathsf{L}}}_{{\mathsf{L}}}\cap \mathsf{CPre}(X_{{\mathsf{L}}})) & |{\mathsf{L}}|=2k
			\end{cases}
		\end{align*}
		where $\eta_{\mathsf{L}}=\mathsf{GFP}$ if $|{\mathsf{L}}|$ is even and
		$\eta_{\mathsf{L}}=\mathsf{LFP}$ if $|{\mathsf{L}}|$ is odd. Here,
			$\mathsf{anc}^{\mathsf{K}}_{{\mathsf{L}}}=\gamma^{-1}_{\subseteq C\setminus
				\mathsf{K}}\cap\gamma^{-1}_{\not\subseteq
				C\setminus I}$ for $\mathsf{K}\neq {\mathsf{L}}$ and $I=\mathsf{K}_{\mathsf{L}}$, and
			$\mathsf{anc}^{{\mathsf{L}}}_{{\mathsf{L}}}=\gamma^{-1}_{\subseteq\emptyset}$,
		both for ${\mathsf{L}}$ such that $|{\mathsf{L}}|=2k$.
    \item
    The equation system associated to the Zielonka tree for the complex
    objective $\varphi_{EL}$ from Example~\ref{example:zielonkaTrees}.3
    is as follows, where we use a formula over the colors
    to denote the set of vertices whose label satisfies the formula.
    For example, $b \wedge \neg d$ corresponds
    to vertices whose set of colors contains $b$ but does not contain
    $d$.

\begin{scriptsize}
    \begin{align*}
X_{1}&=_{\mathsf{LFP}} X_{2}\cup X_{3} \qquad X_{2} =_{\mathsf{GFP}} X_{4}\cap X_{5}\qquad
X_{3}=_{\mathsf{GFP}} X_{6}\qquad\,\, X_{5} =_{\mathsf{LFP}} X_{7}\qquad
X_{7} =_{\mathsf{GFP}} X_{8}\\
X_{4} &=_{\mathsf{LFP}} (\neg c\land \neg d
\cap\mathsf{Cpre}(X_4))\cup(c\land \neg
d\cap\mathsf{Cpre}(X_{2}))\cup(d\cap\mathsf{Cpre}(X_{1}))\\
X_{6} &=_{\mathsf{LFP}} (\neg a\land\neg c\cap\mathsf{Cpre}(X_6))\cup
(\neg a\land c\cap\mathsf{Cpre}(X_3))\cup(a\cap\mathsf{Cpre}(X_{1}))\\
X_{8} &=_{\mathsf{LFP}}
(\neg a\land\neg b \land \neg c\land \neg
d\cap\mathsf{Cpre}(X_{8}))\cup(\neg a\land \neg b \land
c\land \neg d \cap\mathsf{Cpre}(X_{7}))\,\cup\\
&\quad\quad\,\,\,(a \land \neg b \land \neg d
\cap\mathsf{Cpre}(X_{5}))\cup(b \land
\neg d\cap\mathsf{Cpre}(X_{2}))\cup(d\cap\mathsf{Cpre}(X_{1})),
\end{align*}
\end{scriptsize}

	\end{enumerate}
\end{example}

\begin{toappendix}

\begin{example}
We expand on Example~\ref{example:eqSys} and give further instantiations
of the generic equation system from Definition~\ref{defn:fp} to
games with parity and Rabin objectives.
\begin{enumerate}
		\item \emph{Parity condition}:
		\begin{align*}
			X_{s_i} &=_{\eta_{s_i}}\begin{cases}
				X_{s_{i-1}} & i>1\\
				(\mathsf{anc}^{s_k}_{s_1}\cap \mathsf{CPre}(X_{s_k})) \cup \ldots \cup
				(\mathsf{anc}^{s_1}_{s_1}\cap \mathsf{CPre}(X_{s_1})) & i=1
			\end{cases}
		\end{align*}
		where $\eta_{s_i}=\mathsf{GFP}$ if $i$ is even and
		$\eta_{s_i}=\mathsf{LFP}$ otherwise;
		we have
		\begin{align*}
			\mathsf{anc}^{s_i}_{s_1}&=\gamma^{-1}_{\subseteq\{p_1,\ldots,p_i\}}\cap
			\gamma^{-1}_{\not\subseteq\{p_1,\ldots,p_{i-1}\}}\tag{$i>1$}\\
			\mathsf{anc}^{s_1}_{s_1}&=\gamma^{-1}_{\subseteq\{p_1\}}
		\end{align*}
		assuming that every node has exactly one of the colors
		$p_1,\ldots,p_k$.
		\item \emph{Rabin condition}:
		\begin{align*}
			X_{\mathsf{L}} & =_{\eta_{{\mathsf{L}}}}
			\begin{cases}
				\textstyle\bigcup_{f_j\notin {\mathsf{L}}}X_{{\mathsf{L}}:f_j} & |{\mathsf{L}}|\text{ even},
				|{\mathsf{L}}|<2k\\
				X_{{\mathsf{L}}:e_j} & |{\mathsf{L}}|\text{ odd},\mathsf{last}({\mathsf{L}})=f_j\\
				(\mathsf{anc}^{[]}_{{\mathsf{L}}}\cap \mathsf{CPre}(X_{[]})) \cup \ldots \cup
				(\mathsf{anc}^{{\mathsf{L}}}_{{\mathsf{L}}}\cap \mathsf{CPre}(X_{{\mathsf{L}}})) & |{\mathsf{L}}|=2k
			\end{cases}
		\end{align*}
		where $\eta_{\mathsf{L}}=\mathsf{GFP}$ if $|{\mathsf{L}}|$ is odd and
		$\eta_{\mathsf{L}}=\mathsf{LFP}$ if $|{\mathsf{L}}|$ is even. We have
		\begin{align*}
			\mathsf{anc}^{{\mathsf{K}}}_{{\mathsf{L}}}&=\gamma^{-1}_{\subseteq C\setminus
				\mathsf{K}}\cap\gamma^{-1}_{\not\subseteq
				C\setminus I}\tag{${\mathsf{K}}\neq {\mathsf{L}}$, $I={\mathsf{K}}_{\mathsf{L}}$}\\
			\mathsf{anc}^{{\mathsf{L}}}_{{\mathsf{L}}}&=\gamma^{-1}_{\subseteq\emptyset}
		\end{align*}
		for ${\mathsf{L}}$ such that $|{\mathsf{L}}|=2k$.
		%		\item \emph{Muller condition}: TODO

\end{enumerate}
\end{example}
\end{toappendix}

\begin{theorem}\label{thm:correctness}
	Referring to the equation system
	from Definition~\ref{defn:fp} and recalling that $r$ is the root
	of the Zielonka tree $\mathcal{Z}_\varphi$, the solution of the
	variable $X_r$ is the winning region of the existential player in the Emerson-Lei game $G$.
\end{theorem}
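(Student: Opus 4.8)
The plan is to route the argument through the two parity games already at our disposal: the fixpoint game $G_{S_\varphi}$ that defines the semantics of $S_\varphi$, and the game $P_G$ of Theorem~\ref{thm:parityReduction}. First I note that every right-hand side $f_s$ in Definition~\ref{defn:fp} is monotone, being built from the constant sets $\mathsf{anc}^{s'}_s$, unions, intersections, and the monotone operator $\mathsf{CPre}$; hence the game-theoretic semantics of fixpoint equation systems from~\cite{BaldanKMP19} applies, and $v\in\sem{X_s}$ holds if and only if the existential player wins the node $(v,s)$ in $G_{S_\varphi}$. It therefore suffices to prove the key claim that, for all $v\in V$ and $s\in T$, the existential player wins $(v,s)$ in $G_{S_\varphi}$ if and only if the existential player wins $(v,s)$ in $P_G$; specializing to $s=r$ and composing with Theorem~\ref{thm:parityReduction} then identifies $\sem{X_r}$ with the winning region $W_\exists$ in $G$.

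The claim is proved by transferring strategies between the two games, exploiting that $S_\varphi$ is a symbolic rendering of $P_G$. The move correspondence is read off from Definition~\ref{defn:fp}: at a non-leaf $s\in T_\bigcirc$ the equation is a union over $R(s)$ with $\eta_s=\mathsf{LFP}$, matching the existential-owned branching node of $P_G$; at a non-leaf $s\in T_\Box$ it is an intersection with $\eta_s=\mathsf{GFP}$, matching the universal-owned branching node; and at a leaf $s$ the sets $\{\mathsf{anc}^{s'}_s\}_{s'\leq s}$ partition $V$, so that for each $v$ exactly the anchor $s'=\mathsf{anchor}(v,s)$ contributes, reducing the equation to $\mathsf{CPre}(X_{s'})$ and thus matching the one-step move of $P_G$ from $(v,s)$ to $(w,\mathsf{anchor}(v,s))$ with $w\in E(v)$ chosen by the owner of $v$.

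For the direction from $P_G$ to $G_{S_\varphi}$, we fix a positional winning strategy for the existential player in $P_G$ and let the assignment $\bar A$ proposed in $G_{S_\varphi}$ be \emph{concentrated}, leaving the universal player no freedom beyond the corresponding $P_G$-move: at an existential branching node it places $\{v\}$ only in the component of the child selected by the $P_G$-strategy; at a universal branching node it places $\{v\}$ in all child components; and at a leaf it places, in the anchor component $s'=\mathsf{anchor}(v,s)$, either a single selected successor (when $v\in V_\exists$) or all of $E(v)$ (when $v\in V_\forall$), and $\emptyset$ elsewhere. In every case $v\in f_s(\bar A)$, while the empty components force each universal challenge to coincide with a legal $P_G$-move; the resulting play of $G_{S_\varphi}$ therefore mimics a $P_G$-play consistent with the winning strategy. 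For the converse, from an existential winning strategy in $G_{S_\varphi}$ we read off a $P_G$-strategy: since the proposed $\bar A$ satisfies $v\in f_s(\bar A)$, at an existential node it exhibits a witnessing successor, which we select in $P_G$, and every universal $P_G$-move is itself a legal challenge to $\bar A$; hence each $P_G$-play consistent with the derived strategy is matched, node for node, by a $G_{S_\varphi}$-play consistent with the original strategy.

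The crux of the argument, and its main obstacle, is to show that matched plays are won by the same player despite the differing priorities: $P_G$ assigns priorities via $\mathsf{lev}$ under the max-parity condition, whereas $G_{S_\varphi}$ declares the existential player the winner exactly when the $\preceq$-minimal vertex occurring infinitely often is a greatest-fixpoint (that is, $T_\Box$) vertex. These are reconciled by the observation that, along any walk through the Zielonka tree, the vertices visited infinitely often all lie below a unique top-most such vertex, the dominating vertex $d_\rho$ of Definition~\ref{defn:ZielonkaPlay}. As $\preceq$ refines the ancestor order $\leq$, this $d_\rho$ carries both the maximal level, hence the dominating priority in $P_G$, and the minimal $\preceq$-index, hence the dominating equation in $G_{S_\varphi}$. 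Since $d_\rho\in T_\Box$ simultaneously means an even maximal priority in $P_G$ and a greatest-fixpoint dominating equation in $G_{S_\varphi}$ (the latter winning condition being exactly winning for the existential player in $G$ by Lemma~\ref{lem:ZielonkaPlays}), the two winning conditions agree on every matched play, which establishes the key claim. Specializing to $s=r$ and chaining with Theorem~\ref{thm:parityReduction} gives $\sem{X_r}=W_\exists$, as required.
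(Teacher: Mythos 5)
Your proposal is correct and follows essentially the same route as the paper's proof: invoke Theorem~\ref{thm:parityReduction}, then transfer positional winning strategies between $P_G$ and the fixpoint game $G_{S_\varphi}$ (your ``concentrated'' assignments are exactly the paper's strategy $\tau$), concluding via the dominating vertex of the induced Zielonka-tree walk. Your explicit reconciliation of the max-level priorities of $P_G$ with the $\preceq$-minimal-index convention of $G_{S_\varphi}$ is a detail the paper treats only implicitly, but it is the same argument.
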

By Theorem~\ref{thm:parityReduction}, it suffices to
mutually transform
winning strategies in $P_G$ and the fixpoint game $G_{S_\varphi}$
for the equation system $S_\varphi$ from Definition~\ref{defn:fp}.

\begin{toappendix}
\noindent\textbf{Proof of Theorem~\ref{thm:correctness}:}

\begin{proof}
By Theorem~\ref{thm:parityReduction}, it suffices to
show that the solution of the variable $X_r$ in the equation system
	from Definition~\ref{defn:fp} is the winning region
 of the existential player in the parity game $P_G$.
 Due to the similarities between $P_G$ and the equation system $S_\varphi$,
 this is relatively straightforward.
The formal proof is by transforming winning strategies $\sigma$ for $P_G$ into
winning strategies $\tau$ for the fixpoint game $G_{S_\varphi}$, and vice versa.

For the backward direction,
let $V$ be the set of nodes in $G$ and let $\tau$ be a positional winning strategy for the
        existential player
        in the fixpoint game $G_{S_\varphi}$ with which the existential player wins
        every node in its winning region; such a strategy exists since
        $G_{S_\varphi}$ is a parity game.
        We define a strategy $\sigma:V'_\exists\to V'$
        for the existential player in $P_G$. So let $(w,t)\in V'_\exists$ 
        be a node such that
        the existential player wins from $(v,t)$ in $G_{S_\varphi}$ with strategy $\tau$.
        We have $\tau(v,t)=\{U_{u_1},\ldots,U_{u_n}\}$.
        If $t\notin L$ and $t\in T_\bigcirc$, then some set
        $U_{u_i}$ such that $u_i\in R(t)$ contains $v$. Pick
        some such $u_i$ and put $\sigma(v,t)=(v,u_i)$.
        If $t\in L$ and $v\in V_\exists$ then
        let $s$ be the anchor node of $v$ and $t$.
        Then $v\in\mathsf{anc}^s_t\cap \mathsf{CPre}(U_{s})$
        by definition of $S_\varphi$. Hence 
        there is $w\in E(v)\cap U_s$. Pick some such $w$ and put
        $\sigma(v,t)=(w,s)$. This defines a valid strategy for $P_G$.

        It remains to show that $\sigma$ is a winning strategy for the existential
        player.
        So let $\pi$ be a play in $P_G$ that is compatible with $\sigma$.
        By construction, $\pi$ induces a play $\rho$ of $G_{S_\varphi}$
        that is by construction compatible with $\tau$.
        As $\tau$ is a winning strategy, the top-most vertex $s_\pi$ in $T$
        that is visited infinitely often by $\rho$ is a
        winning vertex (from $T_\Box$). By definition of the priority function $\Omega$ of
        $P_G$, the existential player wins $\pi$.

	For the forward direction of the claim,
    let $\sigma:V'_\exists\to V'$ be a positional winning
	strategy for the existential player in $P_G$
	such that the existential player wins every play that starts
	in its winning region and is compliant with $\sigma$.
	We build a strategy $\tau$ in the fixpoint game $G_{S_\varphi}$
	as follows.
	Let $(v,t)$ be a node in $P_G$
	that the existential player wins with strategy $\sigma$.
    If $t\notin L$ and $t\in T_\bigcirc$, then
    we have $s(v,t)=(v,t')$ for some $t'\in R(t)$.
    Put $\tau(v,t)=(\emptyset,\ldots,\emptyset,U_{t'},\emptyset,\ldots,\emptyset)$
    where $U_{t'}=\{v\}$.
    If $t\notin L$ and $t\in T_\Box$, then we
    put $\tau(v,t)=(\emptyset,\ldots,\emptyset,U_{t_1},\ldots, U_{t_|R(t)|},\emptyset,\ldots,\emptyset)$, where $R(t)=\{t_1,\ldots, t_{|R(t)|}\}$ and $U_{t_i}=\{v\}$.
    If $t\in L$, then let $Z(v,t)=s(v,t)$ if $v\in V_\exists$
    and $Z(v,t)=E(v)$ if $v\in V_\forall$. Put $\tau(v,t)=(\emptyset,\ldots,\emptyset,U_{t'},\emptyset,\ldots,\emptyset)$ where $t'$ is the anchor
    node of $v$ and $t$ and where $U_{t'}=Z(v,t)$.
    Clearly, $\tau$ is a valid strategy for the existential player.

	It remains to show that $\tau$ is a winning strategy for the existential player.
	So let $\pi$ be a play in the fixpoint game $G_{S_\varphi}$ that is compliant with
	$\tau$. Then $\pi$ induces a play $\rho=v_0v_1\ldots$ of $P_G$ that is
	compatible with $\sigma$.
	As $\sigma$ is a winning strategy, the maximal priority
	that is visited infinitely often
	by $\rho$ is even, so the topmost vertex in $T$ that
	is visited infinitely often by $\pi$ and $\rho$ belongs to $T_\Box$. Thus the   existential player wins $\rho$.
\end{proof}
\end{toappendix}

Given the fixpoint characterization of winning regions
in Emerson-Lei games with objective $\varphi$ in
Definition~\ref{defn:fp}, we obtain a fixpoint iteration algorithm that computes
the solution of Emerson-Lei games. The algorithm
is by nature open to symbolic implementation. The main function is recursive,
taking as input
one vertex $s\in T$ of the Zielonka tree $\mathcal{Z}_\varphi$ and
a list $l$ of subsets of the set $V$ of nodes, and returns
a subset of $V$ as result.
For calls $\textsc{Solve}(s,ls)$, we
require that the argument list $ls$ contains exactly one
subset $X_{s'}$ of $V$ for each ancestor $s'$ of $s$ in the Zielonka
tree (with $s'<s$).

\begin{algorithm}
\caption{$\textsc{Solve}(s,ls)$}\label{alg:exists}
\begin{algorithmic}
\State \textbf{if} $s\in T_\bigcirc$\textbf{ then }$X_s \gets \emptyset$
\textbf{ else }$X_s \gets V$\Comment{Initialize variable $X_s$ for
lfp/gfp}
\State $W \gets V\setminus X_s$
\While{$X_s \neq W$}\Comment{Compute fixpoint}
\State $W \gets X_s$
\If{$R(s)\neq\emptyset$}\Comment{Case: $s$ is not a leaf in $\mathcal{Z}_\varphi$}
    \For{$t\in R(s)$}
    \State $U \gets \textsc{Solve}(t,ls:W)$\Comment{Recursively solve
    for $t$}
    \State
    \textbf{if} $s\in T_\bigcirc$\textbf{ then }$X_s\gets X_s\cup U$\\
    \qquad\qquad\qquad\qquad\,\,\,\,\,\textbf{ else }$X_s\gets X_s\cap U$
    \EndFor\\
\quad\,\,\textbf{else}\Comment{Case: $s$ is a leaf in $\mathcal{Z}_\varphi$}
    \State $Y\gets\emptyset$
    \For{$t\leq s$}
    \State $U \gets \mathsf{anc}^t_s\cap
    \mathsf{CPre}((ls:W)(t))$\Comment{Compute one-step attraction w.r.t.
    $s$}
    \State $Y\gets Y\cup U$
    \EndFor
    \State $X_s\gets Y$ 
\EndIf
\EndWhile\\
\Return $X_s$\Comment{Return stabilized set $X_s$ as result}
\end{algorithmic}
\end{algorithm}

\begin{lemma}\label{lem:ELcorrectness}
For all $v\in V$, we have
$v\in\sem{X_r}$ if and only if $v\in\textsc{Solve}(r,[])$.
\end{lemma}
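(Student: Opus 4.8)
The plan is to show that the procedure $\textsc{Solve}$ is nothing but the ordinary Kleene/Knaster--Tarski evaluation of the nested fixpoint defined by $S_\varphi$ in Definition~\ref{defn:fp}, so that $\textsc{Solve}(r,[])$ returns the Knaster--Tarski solution of $X_r$; the lemma then follows from the equivalence of that solution with the fixpoint-game semantics $\sem{X_r}$ established in~\cite{BaldanKMP19}. Throughout I use that the lattice $\mathcal{P}(V)$ is finite, so that the $\mathsf{LFP}$ (resp.\ $\mathsf{GFP}$) of a monotone map is reached by iterating it from $\emptyset$ (resp.\ $V$) until stabilisation, which is exactly the behaviour of the \textbf{while}-loop.

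The core is a strengthened claim, proved by induction on the height of the subtree rooted at $s$: for every $s\in T$ and every list $ls$ assigning a set to each strict ancestor of $s$, the value returned by $\textsc{Solve}(s,ls)$ equals the $s$-component of the Knaster--Tarski solution of the subsystem $S_{\geq s}$ comprising the equations for $s$ and all its descendants, with the strict-ancestor variables fixed to $ls$. In the base case $s\in L$, the loop body resets and assigns $X_s\gets\bigcup_{t\le s}\big(\mathsf{anc}^t_s\cap\mathsf{CPre}((ls{:}W)(t))\big)$, i.e.\ it iterates the monotone map $W\mapsto\bigcup_{t<s}\big(\mathsf{anc}^t_s\cap\mathsf{CPre}(ls(t))\big)\cup\big(\mathsf{anc}^s_s\cap\mathsf{CPre}(W)\big)$, whose $\eta_s$-fixpoint is precisely the leaf equation of Definition~\ref{defn:fp}. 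In the inductive step $s\notin L$, the loop iterates $W\mapsto\bigsqcup_{t\in R(s)}\textsc{Solve}(t,ls{:}W)$ with $\bigsqcup=\bigcup$ if $s\in T_\bigcirc$ and $\bigsqcup=\bigcap$ if $s\in T_\Box$; by the induction hypothesis each $\textsc{Solve}(t,ls{:}W)$ is the $t$-component of the solution of $S_{\geq t}$ parameterised by $ls{:}W$, so this is exactly the right-hand side of the $s$-equation evaluated at the resolved descendants. One bookkeeping subtlety: for internal $s$ the loop accumulates into $X_s$ without resetting, hence performs an inflationary (union) resp.\ deflationary (intersection) iteration; since $\textsc{Solve}(t,\cdot)$ is monotone in its last parameter (Knaster--Tarski solutions are monotone in their parameters) and $\mathcal{P}(V)$ is finite, this inflationary/deflationary iteration still converges to the least resp.\ greatest fixpoint of the body, and the loop halts exactly there.

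The delicate point, and the one I expect to be the main obstacle, is to argue that iterating $X_s$ on the outside while re-solving each child subtree to a fixpoint on the inside really reproduces the \emph{totally ordered} nested fixpoint to which the game semantics refers (dominance being by minimal index). Two structural facts carry the argument. First, since $\preceq$ respects $\leq$, the root of any subtree carries the smallest index among its vertices, so $X_s$ is genuinely the outermost fixpoint of $S_{\geq s}$ and belongs on the outer loop. Second, the dependencies are tree-shaped: an internal vertex refers only to its children, and a leaf $u$ refers only to variables on its own root-path, namely the index set $\{s'\mid s'\le u\}$ appearing in its defining union; consequently the subtrees hanging off distinct children of $s$ share no variables once the common ancestors $s'\le s$ are fixed. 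By this independence together with Bek\'i\v{c}'s theorem, the relative order in which sibling subtrees are resolved is immaterial, so the linear index order may be rearranged into the tree order realised by the recursion. Instantiating the strengthened claim at $s=r$ with $ls=[]$ (the root has no strict ancestors) gives $\textsc{Solve}(r,[])=\sem{X_r}$, as required.
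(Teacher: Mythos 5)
Your proposal is correct and takes essentially the same route as the paper's own proof: both invoke the equivalence of the fixpoint-game semantics and the Knaster--Tarski semantics from Baldan et al., and then argue that $\textsc{Solve}$ performs the standard Kleene approximation of the nested fixpoints, with recursive calls mirroring descent in the Zielonka tree. The only difference is one of rigor: the paper's argument is terser and leaves implicit exactly the points you spell out (the induction on subtree height, the harmlessness of the inflationary/deflationary accumulation at internal vertices, and the Beki\'c-style independence of sibling subtrees that reconciles the tree-shaped recursion with the linearly ordered nesting).
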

%\begin{proof}[Sketch]
%The function $\textsc{Solve}(r,[])$
%computes the nested fixpoint that is the solution
%of $X_r$ by fixpoint approximation
%as per Kleene's fixpoint theorem.
%\end{proof}
%\noindent\textbf{Proof of Lemma~\ref{lem:ELcorrectness}:}
\begin{proof}[Sketch]
The algorithm computes the solution of the equation system by
standard Kleene-approximation
for nested least and greatest fixpoints.
\end{proof}
\begin{toappendix}
\noindent\textbf{Proof of Lemma~\ref{lem:ELcorrectness}:}
\begin{proof}
By the equivalence of fixpoint games and the traditional Knaster-Tarski semantics
of fixpoint equation systems (as shown in~\cite{BaldanKMP19}),
it suffices to show that $\textsc{Solve}(r,[])$
computes the Knaster-Tarski semantics of $X_r$; the latter defines
the solution $\sem{X_i}$ of an equation $X_i=_{\mathsf{GFP}} f(X_1,\ldots,X_n)$
as the union of all postfixpoints of $f$ with respect to
the variable $X_i$, and similarly,
the solution of an equation $X_i=_{\mathsf{LFP}} g(X_1,\ldots,X_n)$
as the intersection of all prefixpoints of $g$. Since $V$ is a finite
set, Kleenes fixpoint theorem states that
these solutions can also be approximated by repeated application of $f$
and $g$, starting with arguments $V$ and $\emptyset$, respectively.
That is, we have $\mathsf{GFP} X.\, f(X)=f^{|V|}(V)$ and
$\mathsf{LFP} X.\, g(X)=g^{|V|}(\emptyset)$, where the repeated application
of a function $h:\mathcal{P}(V)\to\mathcal{P}(V)$ to a set
$W\subseteq V$ is defined inductively
by $h^0(W)=W$ and $h^{i+1}(W)=h(h^i(W))$. The function
$\textsc{Solve}$ performs exactly
this approximation of $\sem{X_r}$: $\textsc{Solve}(s,l)$
computes an extremal fixpoint by iteratively applying the right-hand side
$f_s$ of the equation for $X_s$ to the initial set $V$ or $\emptyset$, depending
on whether $s\in T_\bigcirc$ or $s\in T_\Box$, and
taking the argument sets from $l$ as fixed inputs for the respective
variables. The recursive calls in the algorithm
 correspond to descending steps in the Zielonka tree $\mathcal{Z}_\varphi$, following
Definition~\ref{defn:fp}.
\end{proof}
\end{toappendix}
\begin{lemma}\label{lem:ELcomplexity}
Given an Emerson-Lei game $(V,V_\exists,V_\forall,E,\alpha_{\gamma,\varphi})$ with set of colors $C$
and induced Zielonka tree $\mathcal{Z}_\varphi$, the solution $\sem{X_r}$
of the equation system $S_\varphi$ from Definition~\ref{defn:fp}
%of Algorithm~\ref{alg:exists}, started on the
%root $r$ of $\mathcal{Z}_\varphi$,
can be computed in time $\mathcal{O}(|\mathcal{Z}_\varphi|\cdot |E|\cdot|V|^k)$,
where $k\leq|C|$ denotes the height of $\mathcal{Z}_\varphi$.
\end{lemma}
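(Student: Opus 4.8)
Correctness of the set returned by the algorithm is not at issue here: by Lemma~\ref{lem:ELcorrectness} the call $\textsc{Solve}(r,[])$ already returns the solution $\sem{X_r}$, so the plan is only to analyse the running time of the recursive procedure. The starting point is that the shape of the recursion is exactly the Zielonka tree $\mathcal{Z}_\varphi$: a call $\textsc{Solve}(s,ls)$ with $R(s)\neq\emptyset$ issues, inside its $\mathsf{while}$-loop, one recursive call $\textsc{Solve}(t,ls{:}W)$ for every child $t\in R(s)$ and combines the results by union (if $s\in T_\bigcirc$) or intersection (if $s\in T_\Box$), whereas a call on a leaf performs no recursion but instead evaluates the one-step attraction $\bigcup_{t\leq s}\bigl(\mathsf{anc}^t_s\cap\mathsf{CPre}(\cdot)\bigr)$ ranging over the ancestors $t\leq s$. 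Consequently the recursion visits at most $|\mathcal{Z}_\varphi|$ distinct vertices, and every root-to-leaf branch has length at most the height $k$ of $\mathcal{Z}_\varphi$.

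First I would bound the work attributable to a single vertex $s$. As the right-hand side $f_s$ of the equation for $X_s$ is a monotone operator on the finite lattice $\mathcal{P}(V)$, the $\mathsf{while}$-loop computing its least fixpoint (for $s\in T_\bigcirc$) or greatest fixpoint (for $s\in T_\Box$) stabilises after at most $|V|+1$ Kleene iterations. Each iteration either forms a union/intersection of the previously computed children, at cost $\mathcal{O}(|R(s)|\cdot|V|)$, or, at a leaf, evaluates at most $k+1$ terms $\mathsf{anc}^t_s\cap\mathsf{CPre}(\cdot)$, where a single controllable-predecessor step costs $\mathcal{O}(|E|)$. The dominant operations are the $\mathsf{CPre}$ evaluations at leaves, so the heart of the estimate is to count them: since the fixpoint loop at a vertex of depth $d$ is nested inside the loops of its $d$ proper ancestors, it is re-entered on the order of $|V|^{d}$ times over the whole computation, and summing over the $\leq|\mathcal{Z}_\varphi|$ vertices with $d\leq k$ and $|E|\geq|V|$ points to the bound $\mathcal{O}(|\mathcal{Z}_\varphi|\cdot|E|\cdot|V|^{k})$; combined with $|\mathcal{Z}_\varphi|\leq e|C|!$ from Lemma~\ref{lem:ZielonkaTreeSize} this also yields the headline estimate $\mathcal{O}(|C|!\cdot|E|\cdot|V|^{|C|})$.

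The step I expect to be the main obstacle is controlling the exponent so that it is the height $k$ and not $k+1$: a naive charge makes each of the $k+1$ vertices on a longest branch contribute its own factor $|V|$. The way to avoid the extra factor is to account for the innermost level separately. At a leaf $s$ all ancestor variables $X_t$ with $t<s$ are held fixed during the leaf's fixpoint, so its equation degenerates to a single-variable fixpoint $X_s=_{\eta_s}K\cup\bigl(\mathsf{anc}^s_s\cap\mathsf{CPre}(X_s)\bigr)$ with $K$ constant; this is an attractor (for $\eta_s=\mathsf{LFP}$) or a safety region (for $\eta_s=\mathsf{GFP}$), which can be resolved in time $\mathcal{O}(|E|)$ per invocation and therefore does not introduce a further factor $|V|$ on top of the $|V|^{d}\leq|V|^{k}$ coming from the ancestors. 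Making this amortised accounting precise — arguing that, across all re-invocations forced by the surrounding loops, the innermost recomputations can be charged so that the exponent equals the height, and that the per-vertex set operations together with the ancestor for-loop of length $\leq k+1$ are absorbed into the $|\mathcal{Z}_\varphi|\cdot|E|$ factor — is the only delicate point; the remainder is routine bookkeeping over the tree.
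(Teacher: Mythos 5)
Your skeleton is the same as the paper's: the paper likewise reduces everything to a runtime analysis of $\textsc{Solve}$ (correctness coming from Lemma~\ref{lem:ELcorrectness}), bounds every while-loop by $|V|$ iterations via monotonicity, and observes that a leaf iteration is dominated by the $\mathsf{CPre}$ computations, which it asserts can all be done in time $|E|$. The paper organizes the count as an induction on subtrees: the runtime of $\textsc{Solve}(s,ls)$ is at most $m_s\cdot|E|\cdot|V|^{d_s}$, where $m_s$ and $d_s$ are the number of vertices and the depth of the subtree rooted at $s$, with leaves as base case $m_s=d_s=1$; this is your ``a depth-$d$ vertex is re-entered $\mathcal{O}(|V|^{d})$ times'' bookkeeping phrased recursively. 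Where you genuinely diverge is at the innermost level. The paper keeps plain Kleene iteration at leaves, so a leaf call costs $|E|\cdot|V|$ and contributes its own factor $|V|$; consequently the exponent produced by the paper's induction is the number of \emph{vertices} on the longest branch, and the $k$-versus-$k+1$ issue you raise is not addressed there at all (it is hidden in the height convention: for Streett, a branch has $|C|+1$ vertices but $|C|$ edges, and only the edge count satisfies the bound $k\leq|C|$ of Lemma~\ref{lem:ZielonkaTreeSize}). Your replacement of the leaf's Kleene loop by a linear-time attractor computation (for $\eta_s=\mathsf{LFP}$; dually a safety computation for $\eta_s=\mathsf{GFP}$) is correct --- with ancestor variables frozen, the leaf equation is a one-step single-variable fixpoint, solvable in $\mathcal{O}(|E|)$ by the standard counter-based algorithm --- and it is exactly what removes the extra factor $|V|$, so your argument proves the bound with $k$ read as edge-height, which is the sharper and internally consistent reading.

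There is, however, one step that is wrong as stated: the $\leq k+1$ terms $\mathsf{anc}^{t}_{s}\cap\mathsf{CPre}(\cdot)$ evaluated in each leaf iteration cannot be ``absorbed into the $|\mathcal{Z}_\varphi|\cdot|E|$ factor.'' Doing the evaluation term by term multiplies the whole bound by the parameter $k$, and neither $|\mathcal{Z}_\varphi|$ (which is already spent counting the leaves) nor the $\mathcal{O}$-constant can swallow a parameter. The correct repair is the one the paper uses implicitly: for a fixed leaf $s$, the sets $\mathsf{anc}^{t}_{s}$ with $t\leq s$ \emph{partition} $V$ (as noted after Definition~\ref{defn:fts}, every game node $v$ has a unique anchor ancestor), so the entire union $\bigcup_{t\leq s}\bigl(\mathsf{anc}^{t}_{s}\cap\mathsf{CPre}(X_t)\bigr)$ can be evaluated in a single $\mathcal{O}(|E|)$ pass: for each node $v$ one looks up its (precomputable) anchor $t$ and scans the outgoing edges of $v$ once against $X_t$. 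Substituting this observation for your absorption claim, and charging each parent's union/intersection to the child call it follows (using $|E|\geq|V|$), makes your proof complete --- and in fact slightly stronger than the paper's own accounting.
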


\begin{toappendix}
\noindent\textbf{Proof of Lemma~\ref{lem:ELcomplexity}:}
\begin{proof}
By Lemma~\ref{lem:ELcorrectness}, it suffices to show
that for all vertices $s$ in $\mathcal{Z}_\varphi$
and all lists $l$ of subsets of $V$ that contain exactly one subset $X_t$ for each ancestor
$t<s$ of $s$,
the runtime of $\textsc{Solve}(s,l)$ is bounded by
$m_s\cdot |E|\cdot |V|^{d_s}$, where
$m_s$ and $d_s$ denote the number of vertices and the depth, respectively,
of the subtree of $\mathcal{Z}_\varphi$ that is rooted at $s$.
The claim of the lemma follows by picking $s=r$.
The proof is by induction on $d_s$.

If $s\in T_\bigcirc$, then $X_s$ is initialized to $V$ and, by monotonicity of
all involved operations (in particular, $\mathsf{CPre}$ is a monotone function),
every iteration of the
while loop in $\textsc{Solve}(s,l)$ removes at least one element from $X_s$,
until, eventually, $X_s=W$ and the loop terminates.
If $s\in T_\Box$, then $X_s$ is initialized to $\emptyset$ and each iteration of the
while loop
(except the final iteration) adds at least one element of $V$ to $X_s$.
Hence the number of iterations of the while loop for fixed $s$ and $l$
is always is bounded by $|V|$.

\begin{itemize}
\item In the base case, $s$ is a leaf node and we have $m_s=d_s=1$.
Since $R(s)=\emptyset$, the runtime of a single iteration
of the while loop is dominated by the computation of $\mathsf{CPre}(X_t)$
for all $t\leq s$; this can be implemented to run in time $|E|$, yielding the
claimed bound.
\item In the inductive case, we have $R(s)\neq \emptyset$ so that
the runtime of a single iteration of the while loop is bounded by the sum of the runtimes
of $\textsc{Solve}(t,l:X_s)$ over all $t\in R(s)$.
Denoting the runtime of the respective
algorithm when applied to arguments $s$ and $l$ by $\sigma(s,l)$, we hence have
$\sigma(s,l)\leq |V|\cdot \Sigma_{t\in R(s)} \sigma(t,l:X_s)$.
By the inductive hypothesis, we have $\sigma(t,l:X_s)\leq m_t\cdot|E|\cdot |V|^{d_t}$
for all child vertices $t\in R(s)$ of $s$. Also, $m_s=\Sigma_{t\in R(s)} m_t +1 $
and $d_t<d_s$ for all $t\in R(s)$. Hence
\begin{align*}
\sigma(s,l)& \leq |V|\cdot \Sigma_{t\in R(s)} \sigma(t,l:X_s)\\
&\leq |V|\cdot \Sigma_{t\in R(s)} m_t\cdot|E|\cdot |V|^{d_t}\\
&\leq |V|\cdot \Sigma_{t\in R(s)} m_t\cdot|E|\cdot |V|^{d_s-1}\\
&=|V|\cdot |E|\cdot |V|^{d_s -1}\cdot \Sigma_{t\in R(s)} m_t\\
&\leq m_s\cdot |E|\cdot |V|^{d_s},
\end{align*}
as required.
\end{itemize}
\end{proof}
\end{toappendix}

Combining Theorem~\ref{thm:correctness} with Lemmas~\ref{lem:ZielonkaTreeSize},
~\ref{lem:ELcorrectness} and~\ref{lem:ELcomplexity} we obtain
\begin{corollary}\label{cor:complexity}
Solving Emerson-Lei games with $n$ nodes, $m$ edges and $k$ colors
can be implemented symbolically to run in time $\mathcal{O}(k!\cdot m\cdot n^k)$;
the resulting strategies require memory at most $k!$.
\end{corollary}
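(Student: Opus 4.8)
The plan is to derive the corollary by combining the section's results, treating the time bound and the memory bound separately.

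For the \emph{time bound}, I would simply chain the three relevant statements. By Theorem~\ref{thm:correctness} the winning region of the existential player coincides with the solution $\sem{X_r}$ of the equation system $S_\varphi$; by Lemma~\ref{lem:ELcorrectness} this set is precisely what $\textsc{Solve}(r,[])$ returns; and by Lemma~\ref{lem:ELcomplexity} this computation runs in time $\mathcal{O}(|\mathcal{Z}_\varphi|\cdot|E|\cdot|V|^{h})$, where $h$ denotes the height of $\mathcal{Z}_\varphi$. It then remains to re-express the parameters: $|E|=m$, $|V|=n$, and, by Lemma~\ref{lem:ZielonkaTreeSize}, $h\leq|C|=k$ together with $|\mathcal{Z}_\varphi|\leq e\cdot k!=\mathcal{O}(k!)$. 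Since $|V|^{h}\leq n^{k}$, the product is $\mathcal{O}(k!\cdot m\cdot n^{k})$. This part is a routine substitution.

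For the \emph{memory bound} I would start from the strategy transfer established inside the proof of Theorem~\ref{thm:parityReduction}: the winning strategy obtained there for the existential player in $G$ uses memory $M=L$, so that the number of memory states equals the number $|L|$ of leaves of $\mathcal{Z}_\varphi$. It therefore suffices to show $|L|\leq k!$. I expect this combinatorial estimate to be the crux, since the crude vertex count $e\cdot k!$ of Lemma~\ref{lem:ZielonkaTreeSize} overshoots the target. Writing $f(n)$ for the maximal number of leaves of a Zielonka tree over $n$ colors, and using that the children of the root carry pairwise incomparable labels $D\subsetneq C$ whose subtrees are again Zielonka trees (now over $D$), one obtains for $n\geq 1$ the recurrence $f(n)\leq\max_{A}\sum_{D\in A}f(|D|)$, where $A$ ranges over antichains of proper subsets of an $n$-element set.

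I would finish by induction on $n$, with base $f(0)=1$ (the single-leaf tree over no colors), proving $f(n)\leq n!$. Assuming $f(m)\leq m!$ for all $m<n$, the inductive step reduces to showing $\sum_{D\in A}|D|!\leq n!$ for every antichain $A$ of proper subsets of $[n]$. This is exactly what the LYM inequality delivers: from $\sum_{D\in A}\frac{|D|!\,(n-|D|)!}{n!}\leq 1$ and $(n-|D|)!\geq 1$ (since $|D|\leq n-1$) we get $\sum_{D\in A}|D|!\leq\sum_{D\in A}|D|!\,(n-|D|)!\leq n!$. The bound is tight, witnessed by the antichain of all $(n-1)$-subsets, in agreement with the $k!$ leaves of the maximal tree $T^m_C$ from the proof of Lemma~\ref{lem:ZielonkaTreeSize}. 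Taking $n=k$ and combining with the time estimate yields the corollary.
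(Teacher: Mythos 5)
Your proposal is correct, and the time bound is obtained exactly as in the paper: there, Corollary~\ref{cor:complexity} is stated as an immediate combination of Theorem~\ref{thm:correctness} with Lemmas~\ref{lem:ZielonkaTreeSize}, \ref{lem:ELcorrectness} and~\ref{lem:ELcomplexity}, which is your first paragraph verbatim. Where you genuinely diverge is the memory bound. The paper never proves $|L|\leq k!$ explicitly: it justifies the memory claim through Remark~\ref{remark:stratextr}, which extracts from the computed solution a strategy that uses the leaves of $\mathcal{Z}_\varphi$ as memory (mirroring the strategy transfer in the proof of Theorem~\ref{thm:parityReduction}, as you note), while the only explicit counting statement, Lemma~\ref{lem:ZielonkaTreeSize}, bounds the \emph{total} number of vertices by $e\cdot k!$, which overshoots the claimed $k!$. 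You correctly identify this as the crux and close the gap: the recurrence $f(n)\leq\max_{A}\sum_{D\in A}f(|D|)$ over antichains $A$ of proper subsets, combined with the LYM inequality to get $\sum_{D\in A}|D|!\leq\sum_{D\in A}|D|!\,(n-|D|)!\leq n!$, yields $|L|\leq k!$ by induction, and your tightness witness (all $(n-1)$-subsets) matches the paper's extremal tree $T^m_C$. This is a real addition — it makes the constant $k!$ (rather than $e\cdot k!$) rigorous, which the paper leaves implicit. For completeness, the same leaf bound also follows by a more elementary injection of leaves into maximal chains of $2^C$ (equivalently, permutations of $C$): the root-to-leaf label chains of two distinct leaves pass through incomparable sets at their divergence point, so maximal-chain extensions of distinct leaves are distinct; but your LYM derivation is equally valid.
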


\begin{remark}~\label{remark:stratextr}
Strategy extraction works as follows.
The algorithm computes a set $\sem{X_t}$ for each Zielonka tree
vertex $t\in\mathcal{Z}_\varphi$.
Furthermore it yields, for each non-leaf vertex $s\in T_\bigcirc$
and each $v\in\sem{X_s}$, a single child vertex $\mathsf{choice}(v,s)\in R(s)$
of $s$ such that $v\in\sem{X_{\mathsf{choice}(v,s)}}$.
The algorithm also yields, for each leaf vertex $t$
and each $v\in V_\exists\cap \sem{X_t}$, a single game move $\mathsf{move}(v,t)$.
All these choices together constitute a winning strategy for existential
player in the parity game $P_G$.
We define a strategy for the Emerson-Lei game that uses leaves
of the Zielonka tree as memory values, following
the ideas used in the construction of $P_G$; the strategy moves,
from a node $v\in V_\exists$
and having memory content $m$, to the node $\mathsf{move}(v,m)$.
As initial memory value we pick some leaf of $\mathcal{Z}_\varphi$
that $\mathsf{choice}$ associates with the initial node in $G$.
To update memory value $m$ according to visiting game node $v$,
we first take the anchor node $s$ of $m$ and $v$. Then we pick the next memory value $m$
to be some leaf below $s$ that can be reached by talking the choices $\mathsf{choice}(v,s')$
for every vertex $s'\in T_\circ$ passed along the way from $s$ to the leaf;
if $s\in T_\Box$, then we additionally require the following:
let $q=|R(s)|$, let $o$ be the number such that $m$ is a leaf below the $o$-th child of $s$,
and put $j=o+1\mod q$;
then we require that $m'$ is a leaf below the $j$-th child of $s$.
By the correctness of the algorithm, the constructed strategy
is a winning strategy.

Dziembowski et al. have shown
that winning strategies can be extracted by using a walk
through the Zielonka tree that requires memory only for the branching at winning vertices~\cite{DziembowskiJW97}.
This yields, for instance, memoryless strategies for games with Rabin objectives, for which
branching in the associated Zielonka trees takes place at losing nodes.
Adapting the strategy extraction in our setting to this more economic method
is straight-forward but notation-heavy, so we omit a more precise analysis of strategy size here.

\end{remark}

Our algorithm hence can be implemented to run in time $2^{\mathcal{O}(k \log n)}$
for games with $n$ nodes and $k\leq n$ colors, improving upon
the bound $2^{\mathcal{O}(n^2)}$ stated in~\cite{HunterD05}, where the authors only
consider the case that every game node has a distinct color, implying $n=k$.
We note that the later appearance record
construction used in~\cite{HunterD05} is known to be hard to represent symbolically.
Our fixpoint characterization generalizes previously known algorithms
for e.g. parity games~\cite{BruseFL14}, and Streett and Rabin games~\cite{PitermanP06},
recovering previously known bounds on worst-case running time of fixpoint
iteration algorithms for these types of games.

We note that in the case of parity objectives, our algorithm is not quasipolynomial. However, there are quasipolynomial methods for solving
nested fixpoints~\cite{HausmannSchroeder21,ArnoldNP21} (with the latter
being open to symbolic implementation); in the case of parity
objectives, these more involved algorithms can be used in place of
fixpoint iteration to solve our equation system and
recover the quasipolynomial bound.
The precise complexity of using quasipolynomial methods for
solving fixpoint equation systems beyond parity conditions
is subject to ongoing research.

% !TeX root = elsynt

\section{Synthesis for Safety and Emerson-Lei LTL}
\label{sec:synt}

%In this section we introduce the safety and Emerson-Lei fragment of LTL
%and show that synthesis for this fragment can be reasoned about
%symbolically.
%The idea for safety and Emerson-Lei LTL synthesis is twofold: first,
%consider only the safety part and create a symbolic game capturing its
%satisfaction.
%Second, play the game adding the Emerson-Lei part as a winning
%condition.
%We show how to solve Emerson-Lei games symbolically in
%Section~\ref{sec:solvingELgames}.

In this section we present an application of the results from Section~\ref{sec:solvingELgames}.
We introduce the safety and Emerson-Lei fragment of LTL and show that
synthesis for this fragment can be reasoned about symbolically.
The idea for safety and Emerson-Lei LTL synthesis is twofold: first,
consider only the safety part and create a symbolic game capturing its
satisfaction.
Second, play the game adding the Emerson-Lei part as a winning
condition.
Finally we use the results from the previous sections to solve the game symbolically.

\subsection{Safety LTL and Symbolic Safety
Automata}\label{sec:safetyLTLandSafetyAutomata}
We start by defining safety LTL, symbolic safety automata, and recalling known results about those.

\begin{definition}[Safety LTL]
Given a non-empty set $\AP$ of atomic propositions, the general
syntax for LTL formulas is as follows:
$\varphi := \top \mid \bot \mid p \mid \neg \varphi \mid \varphi_1
\land \varphi_2 \mid \varphi_1 \lor \varphi_2 \mid
X \varphi \mid \varphi_1 U \varphi_2 \mid F \varphi \mid G
\varphi\qquad\ p\in\AP$.
%\[\varphi := \bot \mid \top \mid p \mid  \varphi_1 \land \varphi_2
%\mid \varphi_1\lor\varphi_2 \mid X
%\varphi \mid \varphi_1 U \varphi_2
%\mid \varphi_1 R \varphi_2 \mid F \varphi \mid G \varphi,
%\quad\qquad p\in\AP\]
%The satisfaction of a formula $\varphi$ is defined over infinite words
%$w\in (2^{\AP})^\omega$ in a standard way \cite{PitermanP18}.

%Given an infinite word $w \in (2^{\AP})^\omega$ and a position $i \geq
%0$, we define the satisfaction relation $\models$ inductively by putting
%\begin{itemize}
%	\item $w, i \models p$ if and only if $p \in w(i)$,
%	\item $w, i \models \neg \varphi$ if and only if  $w, i \not\models
%	\varphi$,
%	\item $w, i \models \varphi_1 \land \varphi_2$ if and only if $w, i
%	\models \varphi_1$ and $w, i \models \varphi_2$,
%	\item $w, i \models X \varphi$ if and only if $w, i+1 \models
%	\varphi$,
%	\item $w, i \models \varphi_1 U \varphi_2$ if and only if  there is
%	some $j \geq i$ such that $w, j \models \varphi_2$ and for all $i
%	\leq k < j$, $w, k \models \varphi_1$.
%\end{itemize}
%%Finally, we write $w \models \varphi$ for $w, 0 \models \varphi$.
%We write $w \models \varphi$ for $w, 0 \models \varphi$.
%The language of $\varphi$ is $\Lang(\varphi) = \{w \mid w
%\models \varphi\}$.
%
%We define the usual logical notations: $\top := p \lor \neg
%p$, $\bot := \neg \top$, $\varphi_1 \lor \varphi_2 := \neg(\neg
%\varphi_1 \land \neg\varphi_2)$; and introduce usual temporal
%operators
%$\varphi_1 R \varphi_2 := \neg(\neg \varphi_1 U \neg \varphi_2)$, $F
%\varphi := \top U \varphi$, and $G \varphi := \neg F \neg \varphi$.

We define the satisfaction relation $\models$ for a formula $\varphi$ and its language $\Lang(\varphi)$ as usual.
We also define the release operator $R$ as $\varphi_1 R \varphi_2 := \neg(\neg \varphi_1 U \neg \varphi_2)$.

%We consider two-player games played between a system and an
%environment.
%The goal of the system is to satisfy the specification
%regardless of the actions of the environment.

%\paragraph*{Safety LTL.}
An LTL formula is said to be a \emph{safety formula} if it is in negative
normal form (i.e. all negations are pushed to atomic propositions) and
only uses $X, R, G$ as temporal operators (i.e. no $U$ or $F$ are
allowed% except those implicitly used for $R$ and $G$
).
\end{definition}

It is a safety formula in the sense that every word that does not
satisfy the formula has a finite prefix that already falsifies the
formula.
In other words, such a formula is satisfied as long as ``bad states''
are avoided forever.

%\paragraph*{Safety automata.}
%A \emph{safety automaton} is a tuple $\A = (\Sigma, S, \delta, S_0)$, with
%alphabet $\Sigma$, a finite set $S$ of states, a set $S_0 \subseteq S$
%of initial states, and a transition relation $\delta\subseteq S \times
%\Sigma \times S$; for $s\in S$ and $a\in\Sigma$, we put
%$\delta(s,a)=\{s'\mid (s,a,s')\in\delta\}$.
%Runs of $\A$ are defined in the usual manner.
%All infinite runs of $\A$ are accepting, and a word $w \in
%\Sigma^\omega$ is accepted by $\A$ if there is a run of $\A$ on $w$
%(which is not necessarily the case as $\delta$ is not a function); the
%language $\Lang(\A)$ of $\A$ is the set of words accepted by $\A$.
%In other words, a safety automaton is a Büchi automaton in which all
%states are accepting.
%We abbreviate non-deterministic safety automata with NSA, and deterministic
%ones (with $|S_0|\leq 1$ and $|\delta(s,a)|\leq 1$ for all $s\in S$ and
%$a\in\Sigma$) with DSA.

\begin{definition}[Symbolic Safety Automata]
A symbolic safety automaton is a tuple $\A = (2^{\AP}, V, T, \theta_0)$
where $V$ is a set of variables, $T(V,V',\AP)$ is the transition
assertion, and $\theta_0(V)$ is the initialization assertion.
A run of $\A$ on the word $w \in (2^{\AP})^\omega$  is a sequence
$\rho = s_0 s_1 \dots$ where the $s_i \in 2^V$ are
variable assignments such that
\begin{inparaenum}
	\item $s_0 \models \theta_0$, and
	\item for all $i \geq 0$, $(s_i, s_{i+1}, w(i)) \models T$.
\end{inparaenum}
A word $w$ is in $\Lang(\A)$ if and only if there is an infinite run of $\A$ on $w$.
\end{definition}

Kupferman and Vardi show how to convert a safety LTL formula into an equivalent deterministic symbolic safety automaton \cite{kupferman2001model}.

\begin{lemma}
A safety LTL formula $\varphi$ can be translated to a deterministic symbolic safety automaton $\DSA_\symb$ accepting the same language,
with $|\DSA_\symb| = 2^{|\varphi|}$.
\end{lemma}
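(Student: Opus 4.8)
The plan is to follow the tableau/obligation-tracking construction of Kupferman and Vardi. First I would fix the closure $\mathrm{cl}(\varphi)$, the set of subformulas of $\varphi$ (in negative normal form), whose cardinality is $O(|\varphi|)$, and introduce one Boolean variable $x_\psi\in V$ for each $\psi\in\mathrm{cl}(\varphi)$; a variable assignment $s\in 2^V$ is read as the set of subformulas that are required to hold at the current position. Since $|V|=|\mathrm{cl}(\varphi)|=O(|\varphi|)$, the number of reachable assignments, and hence the size of $\DSA_\symb$, is bounded by $2^{|\varphi|}$, giving the claimed bound. The initialization assertion $\theta_0$ asserts $x_\varphi$ together with the local-consistency constraints below.

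Next I would encode the semantics by the standard one-step unfolding identities $G\psi\equiv\psi\wedge XG\psi$ and $\psi_1 R\psi_2\equiv\psi_2\wedge(\psi_1\vee X(\psi_1 R\psi_2))$, which is the only temporal machinery needed because a safety formula uses only $X,R,G$. Concretely, the transition assertion $T(V,V',\AP)$ is the conjunction of local-consistency clauses that are purely present-tense, namely $x_p\leftrightarrow p$ for atoms (and dually for negated atoms) together with $x_{\psi_1\wedge\psi_2}\leftrightarrow x_{\psi_1}\wedge x_{\psi_2}$ and $x_{\psi_1\vee\psi_2}\leftrightarrow x_{\psi_1}\vee x_{\psi_2}$ for the Boolean connectives; and of temporal clauses linking the current and next assignments, $x_{X\chi}\leftrightarrow x'_\chi$, $x_{G\psi}\leftrightarrow(x_\psi\wedge x'_{G\psi})$, and $x_{\psi_1 R\psi_2}\leftrightarrow\big(x_{\psi_2}\wedge(x_{\psi_1}\vee x'_{\psi_1 R\psi_2})\big)$. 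Crucially, no acceptance condition is attached: a word is accepted exactly when an infinite run exists, i.e. when local consistency can be maintained forever. This matches the definition of a symbolic safety automaton and is legitimate precisely because $\varphi$ has no $U/F$ eventualities that would otherwise have to be discharged by a liveness (B\"uchi) condition.

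The delicate point is determinism. Taken literally, the clause for $R$ is nondeterministic: when $\psi_1 R\psi_2$ is demanded and $\psi_1$ already holds, the next assignment may or may not carry the obligation $\psi_1 R\psi_2$. I would resolve this by the canonical least-commitment rule, replacing the $R$-clause by $x'_{\psi_1 R\psi_2}\leftrightarrow(x_{\psi_1 R\psi_2}\wedge\neg x_{\psi_1})$ while retaining $x_{\psi_1 R\psi_2}\to x_{\psi_2}$, so that the release obligation is propagated to the next step exactly until the first position at which $\psi_1$ holds, and analogously $x'_{G\psi}\leftrightarrow x_{G\psi}$. This rule is semantically exact for the release and globally operators and determines the next assignment uniquely from the current one and the input letter, so there is at most one run on each word.

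Finally I would establish $\Lang(\DSA_\symb)=\Lang(\varphi)$ through a structural-induction invariant: along the unique run $s_0 s_1\cdots$ of $\DSA_\symb$ on a word $w$, for every position $i$ and every $\psi\in\mathrm{cl}(\varphi)$ one has $x_\psi\in s_i$ iff the suffix $w_i w_{i+1}\cdots\models\psi$. The base case is the atomic clauses, the Boolean cases are immediate from local consistency, and the $X,G,R$ cases follow from the unfolding identities together with the fixed deterministic propagation rule; soundness (an infinite run forces $w\models\varphi$) and completeness (a model of $\varphi$ induces an infinite, hence accepting, run) both reduce to this invariant. I expect the main obstacle to be exactly the correctness of the deterministic obligation-propagation for $R$ and $G$, namely verifying that ``continue the release until its trigger fires'' neither rejects a genuine model nor admits a word violating $\varphi$; this is where the safety restriction is used in an essential way, since the very same least-commitment resolution would be unsound in the presence of $U$ or $F$ eventualities.
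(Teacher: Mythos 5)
Your construction claims strictly more than the lemma actually asserts, and the extra strength is exactly where it breaks. In the paper, $|\DSA_\symb|=2^{|\varphi|}$ refers to the \emph{symbolic} size: the proof first translates $\varphi$ into a \emph{nondeterministic} safety automaton $\NSA_\varphi$ with exponentially many states, and then determinizes it by a symbolic subset construction, so that $\DSA_\symb$ has one Boolean variable per state of $\NSA_\varphi$ (hence $2^{|\varphi|}$ variables) and a \emph{doubly}-exponential state space when expanded; the paper stresses both this and the fact that the doubly-exponential blow-up cannot be avoided. Your automaton instead has one variable per subformula, i.e.\ $O(|\varphi|)$ variables and only $2^{|\varphi|}$ states, and is claimed deterministic; this contradicts the known doubly-exponential lower bound for determinizing safety LTL that the paper cites, so some step must fail.

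The failing step is the ``least-commitment'' determinization. It is sound only when the formulas that trigger discharge of an obligation are propositional; when temporal subformulas occur on the left of $R$ or under a disjunction, their truth at the current position depends on the future, and no update rule reading only the current assignment and the current letter can set their variables correctly. Two concrete counterexamples inside the safety fragment: (i) for $\varphi=(G a)\,R\,b$, your rule $x'_{\psi_1 R\psi_2}\leftrightarrow (x_{\psi_1 R \psi_2}\wedge\neg x_{\psi_1})$ never discharges the obligation, because nothing ever forces $x_{Ga}$ to be set; the resulting automaton accepts exactly $\Lang(Gb)$ and wrongly rejects the word $\{a,b\}\{a\}^\omega$, which satisfies $\varphi$. (ii) For $\varphi=Ga\vee Gb$, your clause $x_{\psi_1\vee\psi_2}\leftrightarrow x_{\psi_1}\vee x_{\psi_2}$ forces the single initial state to commit to one disjunct; committing to $Ga$ rejects $\{a,b\}\{b\}^\omega$, committing to $Gb$ rejects $\{a,b\}\{a\}^\omega$, and setting both rejects both, so no deterministic choice is correct. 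Equivalently, your correctness invariant ``$x_\psi\in s_i$ iff $w_iw_{i+1}\cdots\models\psi$'' cannot hold for any deterministic automaton: $s_0$ is one fixed assignment, while the truth of $Ga$ at position $0$ varies with the word. What the subset construction buys---and what a single obligation set cannot represent---is the set of \emph{all} obligation sets still consistent with the prefix read so far; this is why the paper keeps the nondeterministic automaton and pays $2^{|\varphi|}$ variables rather than $O(|\varphi|)$.
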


The idea is to first convert $\varphi$ to a (non-symbolic) non-deterministic safety automaton $\NSA_\varphi$, which is of size exponential of the size of the formula, and then symbolically determinize $\NSA_\varphi$ by a standard subset construction to obtain $\DSA_\symb$.
Note that while the size of $\DSA_\symb$ is only exponential in the size of the formula, its state space would be double exponential when fully expanded.

\begin{example}\label{example:safety}
Let $\varphi = G(b \lor c) \land G(a \rightarrow b \lor XXb)$ be a safety LTL formula over $\AP = \{a,b,c\}$.
An execution satisfying $\varphi$ must have at least one of $b$ or $c$
at every step, moreover every $a$ sees a $b$ present at the same step
or two steps afterwards.

As an intermediate towards building the equivalent $\DSA_\symb$, we
first present below a corresponding non-deterministic safety automaton $\NSA_\varphi$.
%\begin{figure}[b]
\vspace{-7pt}
\begin{footnotesize}
\begin{center}
%		 $\NSA_\varphi$:\\
\tikzset{every state/.style={minimum size=15pt}}
%\begin{tiny}
  \begin{tikzpicture}[
    % Default arrow tip
    %-&gt;,&gt;=stealth',shorten &gt;=1pt,
		auto,
    % Default node distance
    node distance=2cm,
    % Edge stroke thickness: semithick, thick, thin
    semithick
    ]
     \node[state,initial left] (1) {$1$};
     \node[state] (2) [right of=1] {$2$};
     \node[state] (3) [right of=2] {$3$};
     \node[state] (4) [right of=3] {$4$};
     \path[->] (1) edge [loop above] node [above] {$\neg a \lor b$} (1);
     \path[->] (1) edge node [below] {$a$} (2);
     \path[->] (2) edge [bend right] node [below] {$\neg a \lor b$} (3);
     \path[->] (2) edge [bend right=65] node [above] {$a$} (4);
     \path[->] (4) edge [loop right] node [right] {$a \land b$} (4);
     \path[->] (4) edge node [above] {$b$} (3);
     \path[->] (3) edge [bend right] node [above] {$b$} (1);
     \path[->] (3) edge node [above] {$a \land b$} (2);
  \end{tikzpicture}
\end{center}

\end{footnotesize}
%\end{tiny}
%\caption{Non deterministic safety automaton corresponding to $\varphi$ of Example~\ref{example:safety}}
%\label{figure:NSAandDSA}
\vspace{-10pt}
%\end{figure}
For the sake of presentation, we use Boolean combinations of $\AP$ in
transitions instead of labeling them with elements of $2^\AP$, with the
intended meaning that $s \xrightarrow{\psi} s' = \{s
\xrightarrow{C} s' \mid C \in 2^\AP,~C \models \psi\}$.
We also omit the $G(b \lor c)$ part of the formula in the construction.
One can simply append $\dots \land (b \lor c)$ to every transition of $\NSA_\varphi$ to get back the original formula.
Intuitively state 1 correspond to not seeing an $a$, state 2 means that a $b$ must be seen at the next step, state 3 means that there must be a $b$ now, and state 4 that $b$ is needed now and next as well.

Then the symbolic safety automaton is $\DSA_\symb = (2^\AP, V, T, \theta_0)$ with:
\begin{itemize}
\item $V = \{v_1,v_2,v_3,v_4\}$ are the variables corresponding to the four states of $\NSA_\varphi$,
\item $\theta_0 = v_1 \land \neg v_2 \land \neg v_3 \land \neg v_4$ asserts that only the state $v_1$ is initial,
\item The transition assertion is
$T = \,(v'_1 \leftrightarrow (v_1 \land (\neg a \lor b)) \lor (v_3 \land b))\,\land\\
(v'_2 \leftrightarrow (v_1 \land a) \lor (v_3 \land (a \land b)))\,\land
(v'_3 \leftrightarrow (v_2 \land (\neg a \lor b)) \lor (v_4 \land b))\,\land\\
(v'_4 \leftrightarrow (v_2 \land a) \lor (v_4 \land (a \land b)))\,\land
(v_1 \lor v_2 \lor v_3 \lor v_4)$.
\end{itemize}
Determinizing $\NSA_\varphi$ enumeratively would give an automaton with 9 states (see Example~\ref{ex:gameArena}).
\end{example}

\begin{remark}
Restricting attention to safety LTL enables the two advantages mentioned above with respect to determinization.
First, subset construction suffices (as observed also in \cite{abs-2008-06790}), avoiding the more complex B\"uchi determinization.
Second, this construction, due to its simplicity, can be implemented symbolically.
Interestingly, recent implementations of the synthesis from LTL$_f$
\cite{abs-2008-06790} or from safety LTL \cite{ZhuTLPV17} have used
indirect approaches for obtaining deterministic automata.
For example, by translating LTL to first order logic and applying the
tool MONA to the results \cite{ZhuTLPV17,abs-2008-06790} or
concentrating on minimization of deterministic automata
\cite{TabakovV05}.
%The direct construction, which we propose, is similar to approaches
The direct construction is similar to approaches
used for checking universality of nondeterministic finite automata
\cite{TabakovV05} or SAT-based bounded model checking
\cite{ArmoniEFKV05}.
We are not aware of uses of this direct implementation of the subset
construction in reactive synthesis.
The worst case complexity of this part is doubly-exponential, which,
just like for LTL and LTL$_f$, cannot be avoided \cite{VardiS85,abs-2211-14913}.
%avoided.%
%\footnote{
%A variant of the original lower bound of Vardi and Stockmeyer
%\cite{VardiS85} shows that the lower bound holds already for safety
%formulas. This has also been recently stated in a thorough analysis of
%synthesis of fragments of LTL safety \cite{abs-2211-14913}.
%}
%Notice that a variant of the original lower bound \cite{VardiS85}
%shows that the lower bound holds already for safety formulas.
%This has also been recently stated in a thorough analysis of synthesis
%for safety fragments of LTL \cite{abs-2211-14913}.
\end{remark}

\subsection{Symbolic Games}
We use \emph{symbolic game structures} to represent a certain
class of games.
Formally, a \emph{symbolic game structure}
${\cal G} = \langle {\cal
	V}, {\cal X}, {\cal Y},
\theta_\sys, \rho_\sys, \varphi \rangle$ consists of:
\begin{itemize}
	\item[$\bullet$] ${\cal V} = \{v_1,\ldots, v_n\}$ : A finite set of
	typed
	{\em variables} over finite domains. Without loss of
	generality, we assume they are all Boolean.  A node $s$ is an
	valuation of ${\cal V}$, assigning to each variable $v_i\in {\cal
	V}$ a value $s[v_i]\in \{0,1\}$.
	Let $\Sigma$ be the set of nodes.

	We extend the evaluation function $s[\cdot]$ to Boolean
	expressions over ${\cal V}$ in the usual way.
	An \emph{assertion} is a
	Boolean formula over ${\cal V}$.
	A node $s$ satisfies an assertion
	$\varphi$ denoted $s\models \varphi$, if
	$s[\varphi]={\bf true}$.
	We say that $s$ is a $\varphi$-node if $s\models
	\varphi$.
	\item[$\bullet$] ${\cal X} \subseteq {\cal V}$ is a
	set of \emph{input variables}.
	These are variables controlled by the universal player.
	Let
	$\Sigma_{\cal X}$ denote the possible valuations to
	variables in ${\cal X}$.
	\item[$\bullet$] ${\cal Y} = {\cal V} \setminus
	{\cal X}$ is a set of \emph{output variables}.
	These are variables controlled by the existential player.
	Let $\Sigma_{\cal Y}$ denote the possible valuations to
	variables in ${\cal Y}$.
	\item[$\bullet$]
	$\theta_\sys({\cal X},{\cal Y})$ is an assertion characterizing the
	initial
	condition.
	\item[$\bullet$]
	$\rho_\sys({\cal V},{\cal X}',{\cal Y}')$ is the
	transition relation.
	This is an assertion relating a node $s\in
	\Sigma$ and an input value $s_{\cal X} \in \Sigma_{\cal X}$ to an output value
	$s_{\cal Y} \in \Sigma_{\cal Y}$ by referring to primed and unprimed
	copies of ${\cal V}$.
	The transition relation $\rho_\sys$ identifies a valuation $s_{\cal Y}\in
	\Sigma_{\cal Y}$ as a \emph{possible output} in node $s$ reading input
	$s_{\cal X}$ if $(s,(s_{\cal X},s_{\cal Y})) \models \rho_\sys$, where $s$
	is the assignment to variables in $\cal V$ and
	$s_{\cal X}$ and $s_{\cal Y}$ are the assignment to variables
	in $\cal V'$ induced by $(s_{\cal X},s_{\cal Y})\in\Sigma$.
	\item[$\bullet$] $\varphi$ is the winning condition, given by an LTL
	formula.
\end{itemize}

\noindent
%A node $s$ is \emph{initial} if it satisfies
%$\theta_s$, i.e., $s\models \theta_s$.
For two nodes $s$ and $s'$ of ${\cal G}$, $s'$ is a \emph{successor} of
$s$ if $(s,s') \models \rho_\sys$.

A symbolic game structure ${\cal G}$ defines an arena $A_{\cal G}$,
where $V_\forall=\Sigma$, $V_\exists = \Sigma \times \Sigma_{\cal X}$,
and $E$ is defined as follows:
$$
E= \{(s,(s,s_{\cal X})) ~|~ s\in \Sigma \mbox{ and } s_{\cal X}\in
\Sigma_{\cal X} \} \cup \{((s,s_{\cal X}),(s_{\cal
X},s_{\cal Y})) ~|~ (s,(s_{\cal X},s_{\cal Y}))\models \rho_\sys\}.$$
When reasoning about symbolic game structures we ignore the
intermediate visits to $V_\exists$.
Indeed, they add no information as they can be deduced from the nodes
in $V_\forall$ preceding and following them.
%Accordingly,
%a \emph{play} $\sigma$ of $G$ is a maximal sequence of nodes
%$\sigma=\, s_0, s_1, \ldots$ satisfying (i) \emph{initiality}, i.e.,
%$s_0$ is initial and (ii) \emph{consecution}, i.e., for each $j \geq
%0$, $s_{j+1}$ is a successor of $s_{j}$.
%
Thus, a play $\pi=\,s_0s_1\ldots$ is \emph{winning for the existential player}
if $\sigma$ is infinite and satisfies $\varphi$.
Otherwise, $\sigma$ is \emph{winning for the universal player}.

The notion of strategy and winning region is trivially generalized
from games to symbolic game structures.
When needed, we treat $W_\exists$ (the set of nodes winning for the existential player) as an assertion.
We define winning in the \emph{entire} game structure by incorporating
the initial assertion:
a game structure ${\cal G}$ is said to
be \emph{won} by the existential player, if for all $s_{\cal X} \in
\Sigma_{\cal X}$ there exists $s_{\cal Y} \in \Sigma_{\cal Y}$ such that
$(s_{\cal X},s_{\cal Y})\models \theta_\sys \land W_\exists$.

\subsection{Realizability and Synthesis}
Let $\varphi$ be an LTL formula over input and output
variables $I$ and $O$, controlled by \emph{the environment} and
\emph{the system}, respectively (the universal and  the
existential player, respectively).

The reactive synthesis problem asks whether there is a strategy for the
system of the form $\strat:
(2^I)^+ \to 2^O$ such that for all sequences $x_0 x_1 \dots \in
(2^I)^\omega$ we have:
\[(x_0 \cup \strat(x_0)) (x_1 \cup \strat(x_0 x_1)) \dots \models
\varphi\]
If there is such a strategy we say that $\varphi$ is \emph{realizable}
\cite{PnueliR89}.

% we say that an FDSs ${\cal D}=
%\langle {\cal V}, \theta,
%\rho,{\cal J},{\cal C} \rangle$ \emph{realizes} $\varphi$ if (i)
%${\cal
%V}$ contains ${\cal X}$ and
%${\cal Y}$, (ii) ${\cal D}$ is complete with respect to ${\cal X}$,
%and
%(iii)
%${\cal D}\models\varphi$.  Such an FDS is called a {\em controller} for
%$\varphi$, or just a controller.  We say that the specification is
%\emph{realizable} (\cite{PR89}), if there exists a fairness-free FDS
%${\cal D}$ that realizes it.  Otherwise, we say that the specification
%is
%\emph{unrealizable}.  If the specification is realizable, then the
%construction of such a controller constitutes a solution for the
%\emph{synthesis} problem%
%\footnote{
%	As all the variables of FDSs are Boolean, this definition calls for
%	realizability by a finite state system.
%	It is well known that for LTL specifications realizability and
%	realizability by finite state systems are the same.}%
%.
%

Equivalently,
%We say that
$\varphi$ is \emph{realizable} %(\cite{PnueliR89})
if the system is winning in the symbolic game
${\cal G}_\varphi =\langle I \cup O,
I, O, tt,tt,\varphi\rangle$ with $I$ for input variables ${\cal X}$ and
$O$ for output ${\cal Y}$.
%If the environment is winning in ${\cal G}_\varphi$, then $\varphi$ is
%unrealizable.
%Furthermore, from the winning strategy of the system it is possible to
%extract a controller that realizes $\varphi$.

\begin{theorem}{\cite{PnueliR89}}
	Given an LTL formula $\varphi$, the realizability of $\varphi$ can be
	determined in doubly exponential time. The problem is
	2EXPTIME-complete.
\end{theorem}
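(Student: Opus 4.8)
The plan is to establish the two directions of the completeness claim separately: a doubly-exponential upper bound for realizability via an automata-theoretic reduction to games, and a matching $2\mathrm{EXPTIME}$ lower bound by reduction from a canonical $2\mathrm{EXPTIME}$-complete problem. Throughout, I would phrase realizability in terms of the symbolic game ${\cal G}_\varphi$ already introduced, so that the task reduces to deciding whether the existential (system) player wins.

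For the upper bound I would follow the classical pipeline. First, translate $\varphi$ into a nondeterministic B\"uchi automaton $\NSA_\varphi$ over the alphabet $2^{I\cup O}$ with $\Lang(\NSA_\varphi)=\Lang(\varphi)$; the Vardi--Wolper construction yields such an automaton of size $2^{\mathcal{O}(|\varphi|)}$. Next, determinize $\NSA_\varphi$ (Safra's construction, or determinization into parity automata) to obtain a deterministic automaton $\DSA_\varphi$ of doubly-exponential size $2^{2^{\mathcal{O}(|\varphi|)}}$ carrying a parity acceptance condition with only singly-exponentially many priorities. I would then form the product of $\DSA_\varphi$ with ${\cal G}_\varphi$, splitting each round into a move where the environment fixes the input variables $I$ and a move where the system fixes the output variables $O$, while tracking the deterministic automaton state. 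The result is a parity game whose arena is doubly-exponential in $|\varphi|$ but whose number of priorities is only singly-exponential; solving it decides whether the system wins ${\cal G}_\varphi$, hence whether $\varphi$ is realizable. Since parity games are solvable in time polynomial in the arena and exponential in the number of priorities, substituting these magnitudes keeps the total running time doubly-exponential.

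For the lower bound I would reduce from the word problem for alternating Turing machines that run in exponential space, exploiting the identity of alternating exponential space with $2\mathrm{EXPTIME}$. Given such a machine $M$ and an input $w$, the goal is to construct, in time polynomial in $|M|$ and $|w|$, an LTL formula $\varphi_{M,w}$ over suitable input and output propositions such that $\varphi_{M,w}$ is realizable if and only if $M$ accepts $w$. The intended play spells out an accepting run of $M$: the two players jointly emit a sequence of machine configurations, with the existential/universal branching of $M$ delegated respectively to the system/environment choices, so that a winning system strategy corresponds exactly to an accepting computation tree of $M$. The formula $\varphi_{M,w}$ asserts that the first configuration encodes $w$, that every configuration is well-formed, that consecutive configurations respect the transition relation of $M$, and that every branch eventually reaches an accepting configuration.

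The main obstacle is this lower-bound encoding. Because $M$ uses exponential space, each configuration is exponentially long, so consistency between a cell and the corresponding cell of the successor configuration cannot be expressed by a bounded number of $X$-operators bridging an exponential distance. I would resolve this with the standard \emph{address-and-challenge} technique: cell positions are tagged with a binary counter held in $\mathcal{O}(|w|^{\mathcal{O}(1)})$ auxiliary propositions, whose correct incrementation is enforced by a polynomial-size LTL formula; rather than comparing distant cells directly, the environment is permitted to \emph{challenge} a claimed cell relationship, so the formula only needs to relate the challenged cell to the uniquely addressed cell bearing the same counter value in the adjacent configuration. Verifying that this encoding is both sound and complete, and that all the invariants are captured by a polynomially-sized formula, is the delicate part of the argument; the upper-bound pipeline, by contrast, is essentially bookkeeping over well-known automata and game constructions.
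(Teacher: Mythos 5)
The paper gives no proof of this theorem---it is imported directly from \cite{PnueliR89}---and your proposal is a correct reconstruction of exactly that classical argument: the automata-theoretic upper bound (exponential nondeterministic B\"uchi automaton, doubly-exponential determinization, then solving the resulting doubly-exponential game with singly-exponentially many priorities) and the matching lower bound by encoding alternating exponential-space Turing machines with the polynomial-size address counter and environment-challenge mechanism. Since this is essentially the same route as the cited source, there is nothing substantive to compare.
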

%Realizability for general LTL specifications is 2EXPTIME-complete
%\cite{PnueliR89}.
%It is well known that for LTL specifications it is sufficient to
%consider finite memory strategies.

The game ${\cal G}_\varphi$ above uses neither the initial condition
nor the system transition.
Conversely, consider a symbolic game ${\cal G}=\langle
{\cal V},{\cal X},{\cal
	Y},\theta_\sys,\rho_\sys,\varphi\rangle$:
%The system wins in ${\cal G}$ iff the following formula is realizable.%
%$$\varphi_{_{\cal G}} = \theta_\exists \wedge G \rho_\exists  \wedge
%\varphi$$
%Formally, we have the following.

\begin{theorem}{\cite{BloemJPPS12}}
	The system wins in ${\cal G}$ iff $\varphi_{_{\cal G}}=
	\theta_\sys \wedge G \rho_\sys  \wedge
	\varphi$ is
	realizable.%
\footnote{
	Technically, $\rho_\sys$ contains primed variables and
	is not an LTL formula.
	This can be easily handled by using the next operator $X$.
	We thus ignore this issue.
}%
\footnote{
	We note that Bloem et al. consider more general games,
	where the environment also has an initial assertion
	and a transition relation.
	Our games are obtained from
	theirs by setting the initial assertion and the
	transition relation of the environment to true.
}
	\label{thm:realizable}
\end{theorem}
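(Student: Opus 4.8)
The plan is to exploit the fact that a play of the symbolic game ${\cal G}$ and a model of the realizability formula $\varphi_{_{\cal G}}$ are, as objects, the very same thing: both are $\omega$-sequences of valuations of ${\cal V}={\cal X}\uplus{\cal Y}$. The only difference lies in \emph{how} the initial condition $\theta_\sys$ and the transition relation $\rho_\sys$ are enforced. In ${\cal G}$ they are built into the arena $A_{\cal G}$ (a play must start in a $\theta_\sys$-node and every edge must respect $\rho_\sys$), whereas in $\varphi_{_{\cal G}}=\theta_\sys\wedge G\rho_\sys\wedge\varphi$ they appear as conjuncts of the winning condition over the \emph{unconstrained} game ${\cal G}_{\varphi_{_{\cal G}}}$ (the one with $tt$ as initial and transition assertions), while the genuine objective $\varphi$ is identical on both sides. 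I would therefore prove both implications by transporting a winning strategy across this identification, checking that the safety part $\theta_\sys\wedge G\rho_\sys$ holds exactly when the sequence is a legal, non-stuck play of ${\cal G}$ that starts in an initial node. Since $\rho_\sys$ mentions primed variables, I would first fix its $X$-encoding as in the footnote; this is routine and I would dispatch it before the main argument.

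For the direction from winning in ${\cal G}$ to realizability, I would take a winning strategy $f$ for the existential player together with its winning region $W_\exists$ and define a system strategy $\strat$ for ${\cal G}_{\varphi_{_{\cal G}}}$ that copies the moves of $f$. Given a first input $x_0=s_{\cal X}$, the definition of ``${\cal G}$ is won'' provides an output $s_{\cal Y}$ with $(s_{\cal X},s_{\cal Y})\models\theta_\sys\wedge W_\exists$, and I set $\strat(x_0)=s_{\cal Y}$. Inductively, as long as the play stays in $W_\exists$ every universal successor is again winning and $f$ selects a $\rho_\sys$-respecting winning successor, so $\strat$ can keep following $f$; on input histories that cannot arise under this play I define $\strat$ arbitrarily. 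The key point is that a node of $W_\exists$ can never be a $\rho_\sys$-dead end, since a stuck existential node yields a finite and hence losing play and would not lie in $W_\exists$. Consequently every outcome is infinite, starts in $\theta_\sys$, satisfies $G\rho_\sys$ by construction, and satisfies $\varphi$ because $f$ is winning, so every outcome models $\varphi_{_{\cal G}}$ and $\strat$ realizes it.

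For the converse, I would start from a strategy $\strat$ realizing $\varphi_{_{\cal G}}$ and reuse it directly as an existential strategy in ${\cal G}$. Because $\strat$ forces $\varphi_{_{\cal G}}$ against \emph{every} environment sequence, it forces $G\rho_\sys$ against every sequence in particular; reading this off all one-step extensions of each prefix shows that for every input $\strat$ produces an output satisfying $\rho_\sys$, so $\strat$ is a legal, never-stuck strategy in the arena of ${\cal G}$. The same outcomes satisfy $\theta_\sys$ (fixing the initial node) and $\varphi$, so every $\strat$-compliant play of ${\cal G}$ is infinite and winning. This witnesses, for each first input $s_{\cal X}$, an output $s_{\cal Y}=\strat(s_{\cal X})$ with $(s_{\cal X},s_{\cal Y})\models\theta_\sys\wedge W_\exists$, which is exactly the condition for the existential player to win ${\cal G}$.

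The main obstacle is bookkeeping rather than depth: I must align the two winning conventions at the boundary cases. Specifically, a play in which the system is forced to a $\rho_\sys$-dead end is finite and losing in ${\cal G}$, whereas in the realizability reading the necessarily total strategy must still emit some output, which then violates $G\rho_\sys$ and is again losing; I would argue these two loss conditions coincide, so that no spurious win is introduced on either side. I would also take care to match the quantifier alternation in the definition of ``${\cal G}$ is won'' (for every first input there exists a first output landing in $\theta_\sys\wedge W_\exists$) with the strategy-based definition of realizability (a strategy defined on input histories beginning with $x_0$), and to confirm that the move order in $A_{\cal G}$, where the environment chooses ${\cal X}$ before the system chooses ${\cal Y}$, matches the order in which $\varphi_{_{\cal G}}$ is evaluated letter by letter.
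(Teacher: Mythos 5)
The paper does not prove this theorem at all: it is imported verbatim from Bloem et al.~\cite{BloemJPPS12}, with the two footnotes only reconciling notation (the $X$-encoding of primed variables, and the specialization of the environment's assertions to true). So there is no in-paper proof to compare against; what can be judged is whether your blind argument is sound, and it is. Your identification of plays of $A_{\cal G}$ with words over valuations of ${\cal V}$, the transport of strategies in both directions, and in particular your treatment of the two boundary issues --- that a $\rho_\sys$-dead end yields a finite (hence losing) play in ${\cal G}$ while a total realizability strategy forced past a dead end violates $G\rho_\sys$ (hence also loses), and that nodes of $W_\exists$ can never be dead ends --- are exactly the points on which this equivalence hinges, and you handle them correctly, matching the quantifier structure of the paper's definition of ``${\cal G}$ is won'' ($\forall s_{\cal X}.\,\exists s_{\cal Y}.\,(s_{\cal X},s_{\cal Y})\models\theta_\sys\wedge W_\exists$) on both sides. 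This is essentially the canonical argument underlying the cited result, so your proposal correctly supplies the proof the paper delegates to the citation.
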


\subsection{Safety and Emerson-Lei Synthesis}\label{sec:el-synt}

We now define the class of LTL formulas that are supported by
our technique and show how to construct appropriate games
capturing their realizability problem.

For all $\psi \in \mathbb{B}(\AP)$, let $\Inf\,\psi :=
GF\psi$
and $\Fin\,\psi := FG\neg\psi = \neg \Inf\,\psi$.
The \emph{Emerson-Lei fragment} of LTL consists of all formulas that
are positive
Boolean combinations of $\Inf\,\psi$ and $\Fin\,\psi$ for
all Boolean
formulas $\psi$ over atomic propositions.
The satisfaction of such formulas depends only on the set of
letters
(truth assignments to propositions) appearing infinitely often
in a
word.

\begin{remark}
%Regarding liveness,
%we take the approach of GR[1] games~\cite{BloemJPPS12} that adds a liveness
%condition expressed directly in terms of game states and is analyzed
%symbolically.
%However, we allow unrestricted such liveness conditions in the form of
%general Emerson-Lei conditions.
%These can be combined with the safety condition seamlessly, while at the same time
%easily accommodating various properties that cannot be encoded in GR[1].
The Emerson-Lei fragment easily accommodates various liveness properties that cannot 
be encoded in smaller fragments such as GR[1].
One prominent example for this is the property of \emph{stability} (as encoded by LTL
formulas of the shape $\mathsf{FG}~p$), which appears
frequently in usage of synthesis for robotics and control (see, e.g.,
the work of Ehlers~\cite{Ehlers11a} and Ozay~\cite{LiuOTM13}), and commonly is
approximated in GR[1] but cannot be captured exactly in the game context.
Another important example is \emph{strong fairness} (as encoded by LTL
formulas of the shape $\bigwedge_{i} (\mathsf{GF}~r_i\to\mathsf{GF}~g_i)$) which
allows to capture the exact relation between cause and effect.
Particularly, in GR[1] only if \emph{all} ``resources'' are available infinitely
often there is an obligation on the system to supply \emph{all} its ``guarantees''.
In contrast, strong fairness allows to connect particular resources to
particular supplied guarantees. Ongoing studies on fairness assumptions that arise from the
abstraction of continuous state spaces to discrete state spaces~\cite{LiuOTM13,MajumdarS23}
provide further examples
of fairness assumptions that can be expressed in EL but not in GR[1].
Emerson-Lei liveness allows free combination of all properties mentioned above and more.
\end{remark}

%\paragraph{Emerson-Lei safety fragment.}
\begin{definition}\label{defn:elsfragment}
	The \emph{Safety and Emerson-Lei fragment} is the set of
	formulas of the form
	$\varphi = \varphi_{\mathrm{safety}} \land
	\varphi_{\mathrm{EL}}$,
	where $\varphi_{\mathrm{safety}}$ is a safety formula and
	$\varphi_{\mathrm{EL}}$ is in the Emerson-Lei fragment.
\end{definition}

%For the rest of this section, consider an Emerson-Lei safety
%formula
%$\varphi=  \varphi_{\mathrm{safety}} \land \varphi_{\mathrm{EL}}$.
We assume a partition $\AP
= I \uplus O$ where $I$ is a set of \emph{input propositions} and $O$ a set of
\emph{output propositions}, both non-empty.
%Intuitively, input propositions are uncontrollable actions made by an
%environment, whereas output propositions can be controlled by a system
%which decides the output at every moment.
%The reactive synthesis problem asks whether, for a given LTL formula
%$\varphi$, there is a strategy for the system of the form $\strat:
%(2^I)^\ast \to 2^O$ such that for all sequences $i_0 i_1 \dots \in
%(2^I)^\omega$ we have:
%%\mlnote{Definition redundant due to previous one for realizability?}
%\[(i_0 \cup \strat(i_0)) (i_1 \cup \strat(i_0 i_1)) \dots \models
%\varphi\]
%A classical approach for general LTL synthesis consists in first
%tranforming the formula into a deterministic parity automaton (DPA) and
%then playing an infinite game over that automaton.
%However, the conversion to a DPA is costly (doubly exponential) and
%complex (involving determinization of Büchi automata), motivating the
%search for fragments that are easier to deal with.
%
%\medskip
Let $\varphi = \varphi_{\mathrm{safety}} \land
\varphi_{\mathrm{EL}}$ be a safety and Emerson-Lei formula over $\AP$, and let $\DSA_\symb =
(2^\AP, V, T, \theta_0)$ be the symbolic deterministic safety automaton
associated to $\varphi_{\mathrm{safety}}$.
We construct $G_\varphi = \langle V \uplus \AP, I, O \uplus V, \theta_0, T,
\varphi_{\mathrm{EL}} \rangle$, thus ${\cal X}=I$ and ${\cal Y}=O
\uplus V$.

%\begin{example}
%Let $\varphi_{\mathrm{safety}} = G(a \lor b) \lor G(\neg(a \land b))$, our running safety example from Example~\ref{example:safety} and refer to Example~\ref{example:symbDSA} for its associated symbolic deterministic automaton.
%Assume now that $\AP$ is partitioned into $I = \{a\}$ and $O = \{b\}$.
%Then the arena of the game $G_\varphi$ (independent of the $\varphi_{\mathrm{EL}}$ that is yet to be defined) is depicted in Figure TODO.
%\end{example}
%
%\begin{figure}
%\tikzset{every state/.style={minimum size=15pt}}
%%\begin{tiny}
%  \begin{tikzpicture}[
%    % Default arrow tip
%    %-&gt;,&gt;=stealth',shorten &gt;=1pt,
%		auto,
%    % Default node distance
%    node distance=0.8cm,
%    % Edge stroke thickness: semithick, thick, thin
%    semithick
%    ]
%     \node[state,initial left,rectangle] (0) {$(v_1,v_2)$};
%     \node (t0) [right of=0] {};
%     \node[draw,rounded corners] (1) [below right of=t0] {$(v_1,v_2), \{a\}$};
%     \node[draw,rounded corners] (2) [above right of=t0] {$(v_1,v_2), \emptyset$};
%%     \path[->] (1) edge [loop right] node [right] {$\{a\},\{b\},\{a,b\}$} (1);
%     \path[->] (0) edge node [above] {$a$} (1);
%     \path[->] (0) edge node [below] {$\neg a$} (2);
%
%  \end{tikzpicture}
%\caption{Game arena}
%\label{example:arena}
%\end{figure}

\begin{example}\label{ex:gameArena}
Let $\varphi_{\mathrm{safety}} = G(b \lor c) \land G(a \rightarrow b \lor XXb)$, our running safety example from Example~\ref{example:safety} with its associated symbolic deterministic automaton.
Partition $\AP$ into $I = \{a\}$ and $O = \{b,c\}$.
We depict the arena of the game $G_\varphi$ (independent of the formula
$\varphi_{\mathrm{EL}}$ that is yet to be defined) in
Figure~\ref{fig:example:arena}.
%\vspace{-10pt}

\begin{figure}
\tikzset{every state/.style={minimum size=15pt}}
%\begin{tiny}
\begin{center}
  \begin{tikzpicture}[
    % Default arrow tip
    %-&gt;,&gt;=stealth',shorten &gt;=1pt,
		auto,
    % Default node distance
    node distance=1.5cm,
    % Edge stroke thickness: semithick, thick, thin
    semithick
    ]
     \node[state,initial left,rectangle] (1) {$v_1$};
%     \node (t0) [right of=0] {};
     \node[draw,circle] (r1) [right of=1] {$$};
     \node[draw,rectangle] (12) [above right of=r1] {$v_1,v_2$};
     \node[draw,rectangle] (2) [below right of=r1] {$v_2$};
     \node[draw,rectangle] (13) [above of=12] {$v_1,v_3$};
     \node[draw,circle] (r12) [right of=12] {$$};
     \node[draw,rectangle] (1234) [above right of=r12] {$v_1,v_2,v_3,v_4$};
     \node[draw,rectangle] (24) [below right of=r12] {$v_2,v_4$};
     \node[draw,circle] (r24) [right of=24] {$$};
     \node[draw,rectangle] (3) [right of=2] {$v_3$};
     \node[draw,rectangle] (4) [below right of=24] {$v_4$};
     \node[draw,circle] (r2) [right of=4] {$$};
%     \node[draw,circle] (r4a) [above right of=4] {$$};
     \node[draw,rectangle] (34) [above right of=r24] {$v_3,v_4$};
     \node[draw,circle] (r13) [below left of=13] {$$};
     \node[draw,circle] (r1234) [below right of=1234] {$$};
%     \node[draw,circle] (r3a) [above left of=3] {$$};
%     \node[draw,circle] (r3nota) [below right of=1] {$$};
%     \node[draw,circle] (r4nota) [left of=4] {$$};
%     \node[draw,circle] (r34a) [right of=1234] {$$};
%     \node[draw,circle] (r34nota) [right of=r34a] {$$};

     \path[->] (1) edge [loop below] node [below] {$\neg a;\ast$} (1);
     \path[->] (1) edge node [above] {$a$} (r1);
     \path[->] (r1) edge node [right] {$b$} (12);
     \path[->] (r1) edge node [below left] {$\neg b$} (2);
     \path[->] (12) edge node [right] {$\neg a;\ast$} (13);
     \path[->] (12) edge node [above] {$a$} (r12);
     \path[->] (r12) edge node [above] {$b$} (1234);
     \path[->] (r12) edge node [below left] {$\neg b$} (24);
     \path[->] (2) edge node [above] {$\neg a;\ast$} (3);
     \path[->] (2) edge [bend right] node [below] {$a$} (r2);
     \path[->] (r2) edge node [above] {$\neg b$} (4);
     \path[->] (r2) edge node [right] {$b$} (34);
     \path[->] (13) edge [bend right] node [above left] {$\neg a;\ast$} (1);
     \path[->] (13) edge node [above left] {$a$} (r13);
     \path[->] (r13) edge node [above right] {$b$} (12);
     \path[->] (r13) edge node [below right] {$\neg b$} (2);
     \path[->] (1234) edge node [below] {$\neg a;\ast$} (13);
     \path[->] (1234) edge [bend right] node [below] {$a$} (r1234);
     \path[->] (r1234) edge node [right] {$b$} (1234);
     \path[->] (r1234) edge node [below] {$\neg b$} (24);
     \path[->] (24) edge node [right] {$\neg a;\ast$} (3);
     \path[->] (24) edge node [below] {$a$} (r24);
     \path[->] (r24) edge node [left] {$\neg b$} (4);
     \path[->] (r24) edge node [left] {$b$} (34);
     \path[->] (3) edge node [right] {$a;b$} (12);
     \path[->] (3) edge [bend left=45] node [below left] {$\neg a;b$} (1);
%     \path[->] (r3a) edge node [right] {$b$} (12);
%     \path[->] (r3nota) edge node [above] {$b$} (1);
     \path[->] (4) edge node [right] {$a;b$} (34);
     \path[->] (4) edge node [below] {$\neg a;b$} (3);
%     \path[->] (r4a) edge node [left] {$b$} (34);
%     \path[->] (r4nota) edge node [above] {$b$} (3);
     \path[->] (34) edge node [above] {$a;b$} (1234);
     \path[->] (34) edge [bend right=40] node [right] {$\neg a;b$} (13);
%     \path[->] (r34a) edge node [above] {$b$} (1234);
%     \path[->] (r34nota) edge [bend right] node [below] {$b$} (13);
  \end{tikzpicture}
\caption{Game arena for $G_\varphi$}
\end{center}
\label{fig:example:arena}
\end{figure}
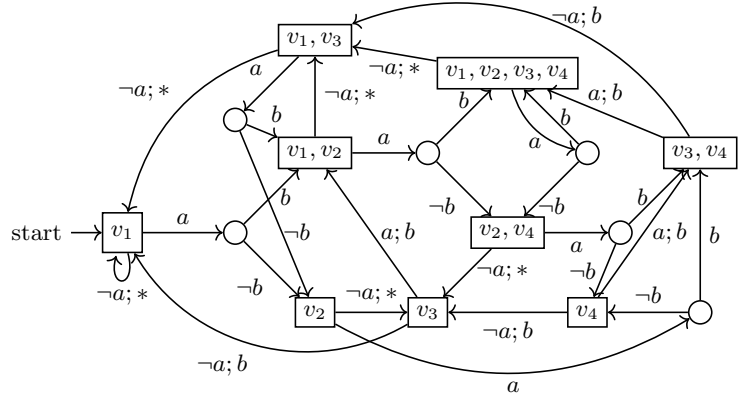
%\vspace{-10pt}

To keep the illustration readable and keep it from getting too large, a few modifications to the formal arena definition have been made.
First, $c$ labels on edges have been omitted: every transition labeled
with $b$ represent two transitions with sets $\{b\}$ and $\{b,c\}$,
while transitions labeled with $\neg b$ stand for a single transition
with set $\{c\}$ (due to the $G(b \lor c)$ requirement forbidding
$\emptyset$).
Similarly, existential nodes have been omitted when all choices for the existential player lead to the same destination.
Instead, the universal and existential moves have been combined in one
transition: $a;\ast$ for an $a$ followed by some existential move, and
$a;b$ for when an $a$ requires the existential player to play $b$ (with
or without $c$, as above).
Finally, states are only labeled with variables from $V$ and not $\AP$,
the latter is used to label edges instead.
For a fully state-based labeling arena, states would have to store the last move, leading to various duplicate states.

Note that this game arena is given only for illustration purposes, as we want to solve the symbolic game without explicitly enumerating all its states and transitions like here.
\end{example}

\begin{lemma}\label{thm:realizabilitygames}
	The system wins $G_\varphi$ if and only if $\varphi$ is realizable.
\end{lemma}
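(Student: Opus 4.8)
The plan is to reduce the claim to the realizability characterization of game winning in Theorem~\ref{thm:realizable}, and then to eliminate the auxiliary automaton variables $V$ using the determinism of $\DSA_\symb$. Applying Theorem~\ref{thm:realizable} to $G_\varphi = \langle V \uplus \AP, I, O \uplus V, \theta_0, T, \varphi_{\mathrm{EL}}\rangle$ shows that the system wins $G_\varphi$ if and only if the formula $\psi := \theta_0 \land G\,T \land \varphi_{\mathrm{EL}}$ is realizable, where the environment controls $I$ and the system controls $O \uplus V$. It therefore suffices to prove that $\psi$ is realizable with outputs $O \uplus V$ exactly when $\varphi = \varphi_{\mathrm{safety}} \land \varphi_{\mathrm{EL}}$ is realizable with outputs $O$.

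The bridge between the two is the following consequence of $\DSA_\symb$ being a \emph{deterministic} symbolic safety automaton for $\varphi_{\mathrm{safety}}$. Fix a word $w \in (2^\AP)^\omega$. By the definition of runs and acceptance, a sequence $(s_i)_{i\ge 0}$ of $V$-valuations witnesses $\theta_0 \land G\,T$ on the combined sequence $(s_i, w(i))_{i\ge 0}$ precisely when $(s_i)_{i\ge 0}$ is an infinite run of $\DSA_\symb$ on $w$; by determinism this run is unique when it exists, and each state in it is determined by the corresponding finite prefix of $w$. Since $\DSA_\symb$ accepts $\Lang(\varphi_{\mathrm{safety}})$ by the safety convention, such an infinite run exists if and only if $w \models \varphi_{\mathrm{safety}}$. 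In other words, the $V$-track is a \emph{choice-free, online-computable shadow} of the $\AP$-word: it is forced by $w$, carries no freedom for either player, and is present exactly for the words satisfying $\varphi_{\mathrm{safety}}$.

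I would then transfer strategies in both directions. If $\varphi$ is realizable by some $\strat : (2^I)^+ \to 2^O$, then every word it produces against an arbitrary input stream satisfies $\varphi_{\mathrm{safety}}$ and hence admits the unique infinite run of $\DSA_\symb$; letting the system additionally emit, at each step, the prefix-determined $V$-valuation of that run yields a strategy with outputs $O \uplus V$ satisfying $\theta_0 \land G\,T$ that, having produced the same $\AP$-word, still satisfies $\varphi_{\mathrm{EL}}$, so $\psi$ is realizable. Conversely, from a strategy realizing $\psi$ I would project away the $V$-component: for any input stream the resulting play satisfies $G\,T$, so its $V$-track is an infinite run of $\DSA_\symb$, forcing the underlying $\AP$-word into $\Lang(\varphi_{\mathrm{safety}})$ and hence satisfying $\varphi_{\mathrm{safety}}$; together with the retained conjunct $\varphi_{\mathrm{EL}}$ this shows the projected strategy realizes $\varphi$.

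The main obstacle is the determinism argument and its use in the strategy transfer: one must verify that determinism makes the $V$-track both \emph{forced} (the system cannot satisfy $G\,T$ except by following the unique run) and \emph{causally available} (the relevant run prefix is determined by the input/output prefix already committed to), so that augmenting a $\varphi$-strategy is a legal online operation and projecting a $\psi$-strategy discards no information about the genuine $O$-choices or the environment's $I$-choices. The correspondence between a play of $G_\varphi$ failing to be infinite (the automaton getting \emph{stuck}, with no $T$-successor) and the underlying word falling outside $\Lang(\varphi_{\mathrm{safety}})$ is exactly what links the safety-acceptance condition to the infinitude requirement in the definition of winning; this is already packaged inside the $G\,\rho_\sys$ conjunct handled by Theorem~\ref{thm:realizable}, which is why invoking that theorem first keeps the remaining argument purely about the deterministic shadow $V$.
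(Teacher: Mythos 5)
Your proposal is correct and follows essentially the same route as the paper's proof: first invoke Theorem~\ref{thm:realizable} to reduce winning in $G_\varphi$ to realizability of $\theta_0 \land G\,T \land \varphi_{\mathrm{EL}}$, then transfer strategies in both directions by adding (using determinism and prefix-determinedness of the unique run of $\DSA_\symb$) or projecting away the $V$-component, together with the facts that an infinite run exists exactly for words satisfying $\varphi_{\mathrm{safety}}$ and that $\varphi_{\mathrm{EL}}$ depends only on the $\AP$-letters. Your explicit attention to the causality of emitting the run online is a point the paper uses implicitly but does not spell out.
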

%The proof of this theorem can be found in the appendix.\medskip
\begin{toappendix}
\textbf{Proof of Theorem~\ref{thm:realizabilitygames}:}

\begin{proof}
	Using Theorem~\ref{thm:realizable}, the system wins $G_\varphi$
	if and only if $\theta_0 \land G(T) \land \varphi_{\mathrm{EL}}$ is
	realizable. We show that the latter is equivalent to $\varphi$ being
	realizable.

	\fbox{$\Rightarrow$} Suppose $\theta_0 \land G(T) \land
	\varphi_{\mathrm{EL}}$ is realizable, and let $\strat: (2^I)^\ast \to
	2^{O \uplus V}$ be a strategy that realizes it.
	Let $\overline{\strat}$ be defined by dropping the $V$ component in
	the output
	of $\strat$: $\overline{\strat}(w) = \strat(w){\downarrow}_O$.
	We claim that $\overline{\strat}$ is a winning strategy for $\varphi =
	\varphi_{\mathrm{safety}} \land \varphi_{\mathrm{EL}}$.

	Let $i_0 i_1 \dots$ be a sequence of inputs and $(o_k, v_k) =
	\strat(i_0 \dots i_k)$ for all $k \geq 0$ be the corresponding
	sequence of outputs and variables according to $\strat$ .
	By definition, $(i_0 \cup o_0 \cup v_0) (i_1 \cup o_1 \cup v_1) \dots
	$ satisfies $\theta_0 \land G(T) \land \varphi_{\mathrm{EL}}$, and
	thus satisfies both $\theta_0 \land G(T)$ and $\varphi_{\mathrm{EL}}$.

	From the former, it follows that $v_0$ satisfies $\theta_0$ and that
	$(v_k, v_{k+1}, i_k \cup o_k)$ satisfies $T$ for all $k \geq 0$.
	This entails that $\rho = v_0 v_1 \dots$ is a valid run of
	$\DSA_\symb$ on $w = (i_0 \cup o_0) (i_1 \cup o_1) \dots$, and
	by the results of
	Section~\ref{sec:safetyLTLandSafetyAutomata} we have $w \models
	\varphi_{\mathrm{safety}}$.

	From the latter, we deduce that $w$ also satisfies
	$\varphi_{\mathrm{EL}}$, as this formula is over $\AP$ only so the
	$V$ component is irrelevant to its satisfaction.
	Therefore, $\overline{\strat}$ is indeed a winning strategy for
	$\varphi$.

	\fbox{$\Leftarrow$} Suppose $\varphi$ is realizable and let
	$\overline{\strat}:
	(2^I)^\ast \to 2^O$ be a winning strategy.
	We define $\strat: (2^I)^\ast \to 2^{O \uplus V}$ as follows:
	Let $i_0 i_1 \dots$ be a sequence of inputs and $o_0 o_1 \dots$ the
	corresponding sequence of outputs according to $\overline{\strat}$.
	We have that $(i_0 \cup o_0) \dots$ satisfies $\varphi$, so again by
	Section~\ref{sec:safetyLTLandSafetyAutomata} it also belongs to the
	language of $\DSA_\symb$.
	Thus there is a run $\rho = v_0 v_1 \dots$ of $\DSA_\symb$ over this
	sequence, which is uniquely determined by the sequence of inputs and
	$\overline{\strat}$.
	We put $\strat(i_0 \dots i_k) = o_k \cup v_k$ for all $k \geq 0$.

	Let $w = (i_0 \cup o_0 \cup v_0) \dots$ be a sequence compatible with
	$\strat$.
	Then as stated before we have that $v_0 v_1 \dots$ is a run of
	$\DSA_\symb$, which means that $v_0 \models \theta_0$ and that $(v_k,
	v_{k+1}, i_k \cup o_k) \models T$ for all $k \geq 0$, and therefore
	means that $w$ satisfies both $\theta_0$ and $G(T)$.
	Moreover, as $(i_0 \cup o_0) \dots$ satisfies
	$\varphi_{\mathrm{EL}}$, then so does $w$ because once again the
	satisfaction of $\varphi_{\mathrm{EL}}$ does not depend on the
	sequence of $v_k$.
	We then have that $w$ satisfies $\theta_0 \land G(T) \land
	\varphi_{\mathrm{EL}}$, so $\strat'$ is a strategy for its
	realizability.
\end{proof}

\end{toappendix}

%(Note by Mathieu: the following paragraph is not needed anymore because the complexity is given in the next Corollary)

%The translation from a safety LTL formula
%$\varphi_{\mathrm{safety}}$
%to an equivalent non-deterministic safety automaton $\NSA$ gives a
%automaton of size $2^{|\varphi_{\mathrm{safety}}|}$, and is an
%enumerative construction.
%The determinization of $\NSA$ is again exponential in the size of the
%input.
%However, $\DSA_\symb$ is linear in $\NSA$, i.e., exponential in $|\varphi_{\mathrm{safety}}|$.
%Its state space is, of course, doubly exponential in the size of
%$\varphi_{\mathrm{safety}}$.
%Constructing the game $G_\varphi$ is straightforward once $\DSA_\symb$
%is built, its size is the size of $\DSA_\symb$ plus the size of
%$\varphi_{\mathrm{EL}}$.

Next we detail how to solve the symbolic game $G_\varphi$ by using the result from Section~\ref{sec:solvingELgames}.

\begin{lemma}\label{lem:symbolicEL}
	Given a symbolic game $G = \langle {\cal V}, {\cal X}, {\cal Y},
	\theta_\sys, \rho_\sys, \varphi \rangle$ such that $\varphi$ is an
	Emerson-Lei formula with set of colors
    \vspace{-5pt}
    \begin{align*}
	C=\{\psi\in\mathbb{B}(\mathsf{AP})\mid \psi\text{ is a subformula of }\varphi\},
	\end{align*}
	the
	winning region $W_\exists$ of $G$ is characterized
	by the equation system from Definition~\ref{defn:fp},
	using the assertion
\vspace{-5pt}
	\begin{align*}
		\mathsf{CPre}(S)=\forall s_{\mathcal{X}} \in \Sigma_{\mathcal{X}}
		.\,\exists s_{\mathcal{Y}} \in \Sigma_{\mathcal{Y}}.\,
		S'\land (v,s_{\mathcal{X}},s_{\mathcal{Y}})\models\rho_\sys.
	\end{align*}
\end{lemma}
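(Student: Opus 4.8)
The plan is to reduce the statement to Theorem~\ref{thm:correctness}. Viewing the symbolic game $G$ as an ordinary Emerson-Lei game on the arena $A_G$ that it induces, I would first fix the coloring. Every state $s\in\Sigma$ is a valuation of $\mathcal{V}\supseteq\mathsf{AP}$ and hence determines the truth value of each color $\psi\in C$ (a Boolean formula over $\mathsf{AP}$); I therefore set $\gamma(s)=\{\psi\in C\mid s\models\psi\}$ and make the intermediate nodes in $V_\exists=\Sigma\times\Sigma_{\mathcal{X}}$ \emph{color-transparent} by assigning them $\emptyset$. With this coloring $(A_G,\alpha_{\gamma,\varphi})$ is an Emerson-Lei game in the sense of Section~\ref{sec:el-ziel}, whose Zielonka tree and associated system $S_\varphi$ are exactly those of Definition~\ref{defn:fp}. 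Since a play of $A_G$ has the shape $s_0\,(s_0,x_0)\,s_1\,(s_1,x_1)\,s_2\ldots$ with $x_i\in\Sigma_{\mathcal{X}}$ and the interleaved intermediate nodes carry no colors, the set of colors visited infinitely often coincides with that of the state projection $s_0s_1s_2\ldots$; consequently $\alpha_{\gamma,\varphi}$ on $A_G$ agrees with the winning condition $\varphi$ of $G$, so $W_\exists$ equals the restriction to $\Sigma$ of the winning region of $(A_G,\alpha_{\gamma,\varphi})$, which by Theorem~\ref{thm:correctness} is $\sem{X_r}\cap\Sigma$ for the single-step $\mathsf{CPre}$ of $A_G$.

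It remains to identify the operator $\mathsf{CPre}$ that drives the leaf equations of $S_\varphi$ with the displayed symbolic assertion; recall that the transition relation $\rho_\sys$ enters $S_\varphi$ only through $\mathsf{CPre}$. I would unfold the definition from Section~\ref{sec:solvingELgames} on $A_G$ for a target set $S\subseteq\Sigma$. One round from a state $s\in\Sigma=V_\forall$ is the composition of two half-steps: the universal clause $E(v)\subseteq X$ gives $s\in\mathsf{CPre}(\mathsf{CPre}(S))$ iff every intermediate successor $(s,s_{\mathcal{X}})$ lies in $\mathsf{CPre}(S)$, while the existential clause $E(v)\cap X\neq\emptyset$ gives $(s,s_{\mathcal{X}})\in\mathsf{CPre}(S)$ iff some output $s_{\mathcal{Y}}$ with $(s,(s_{\mathcal{X}},s_{\mathcal{Y}}))\models\rho_\sys$ reaches a state $(s_{\mathcal{X}},s_{\mathcal{Y}})\in S$. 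Composing the two yields precisely $\forall s_{\mathcal{X}}\in\Sigma_{\mathcal{X}}.\,\exists s_{\mathcal{Y}}\in\Sigma_{\mathcal{Y}}.\,S'\land(v,s_{\mathcal{X}},s_{\mathcal{Y}})\models\rho_\sys$, where the primed copy $S'$ expresses membership of the successor state $(s_{\mathcal{X}},s_{\mathcal{Y}})$ in $S$; this is exactly the asserted encoding of $\mathsf{CPre}$, with the outer quantifier realizing the universal choice and the inner one the existential choice.

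The main obstacle is reconciling this two-step reading with the single-step $\mathsf{CPre}$ occurring literally in Definition~\ref{defn:fp}, i.e. justifying that the intermediate existential layer can be collapsed. Here I would use color-transparency: because $\gamma((s,s_{\mathcal{X}}))=\emptyset\subseteq l(t)$ for every leaf $t$, each intermediate node anchors at the leaf itself, hence enters a leaf equation only through the term $\mathsf{anc}^t_t\cap\mathsf{CPre}(X_t)$, so that solving $S_\varphi$ over $\Sigma$ with the symbolic (two-step) $\mathsf{CPre}$ computes exactly the restriction $\sem{X_r}\cap\Sigma$ of the solution over $A_G$. This is the standard elimination of transient, colorless nodes owned by a single player, and it does not disturb the fixpoint nesting since those nodes contribute no colors and hence appear at no Zielonka vertex other than the leaves. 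Combining this elimination with the $\mathsf{CPre}$ identity above and Theorem~\ref{thm:correctness} gives $W_\exists=\sem{X_r}$ under the displayed assertion, as claimed.
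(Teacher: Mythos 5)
Your reduction set-up is sound: viewing $G$ as the Emerson-Lei game on the induced arena $A_G$ with $\gamma(s)=\{\psi\in C\mid s\models\psi\}$ on states and $\gamma=\emptyset$ on intermediate nodes is exactly the right bridge, and the identity $\mathsf{CPre}_{A_G}(\mathsf{CPre}_{A_G}(S))\cap\Sigma=\{s\mid \forall s_{\mathcal{X}}\exists s_{\mathcal{Y}}.\ (s_{\mathcal{X}},s_{\mathcal{Y}})\in S\wedge(s,s_{\mathcal{X}},s_{\mathcal{Y}})\models\rho_\sys\}$ is correct \emph{for a fixed target set} $S\subseteq\Sigma$. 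The genuine gap is in your third paragraph, where you collapse the intermediate layer. In the system of Definition~\ref{defn:fp} instantiated on $A_G$, the two half-steps of a round do \emph{not} target the same variable: the equation for a leaf $t$ applies $\mathsf{CPre}$ to the \emph{anchor} variable $X_{s'}$, and the value of an intermediate node $(s,s_{\mathcal{X}})$ in $\sem{X_{s'}}$ is the Boolean combination, along the subtree below $s'$, of its values in the \emph{leaf} variables $X_{t''}$, each of which reads $\exists s_{\mathcal{Y}}.\ (s_{\mathcal{X}},s_{\mathcal{Y}})\in\sem{X_{t''}}$ --- with a separate existential quantifier for each leaf. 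So eliminating the intermediate nodes from the $A_G$ system yields, at a state $s$ with anchor $s'$, an operator of the shape
\[
\forall s_{\mathcal{X}}.\ \mathrm{BC}_{t''\in L(s')}\bigl[\exists s_{\mathcal{Y}}.\ (s_{\mathcal{X}},s_{\mathcal{Y}})\in\sem{X_{t''}}\bigr],
\]
whereas the lemma's system requires $\forall s_{\mathcal{X}}\exists s_{\mathcal{Y}}.\ \mathrm{BC}_{t''}\bigl[(s_{\mathcal{X}},s_{\mathcal{Y}})\in\sem{X_{t''}}\bigr]$. These two differ whenever the Boolean combination contains an intersection, i.e.\ whenever the subtree below the anchor branches at a vertex of $T_\Box$ --- already for a generalized B\"uchi objective (Example~\ref{example:zielonkaTrees}.1): there the first reading lets the system answer the universal player's ``challenge'' $f_j$ with a $j$-dependent output $s_{\mathcal{Y}}$, while the second forces a single output that works for all $j$ simultaneously. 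Hence the two equation systems are genuinely different operators, and the appeal to ``standard elimination of transient, colorless nodes'' does not identify them; that elimination is standard only when no other quantified choices (here, the Zielonka-subgame branching) intervene between the two half-steps.

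What is true --- and is precisely the content of the lemma --- is that both systems nevertheless have the same solution at the root, namely $W_\exists$. But this cannot be obtained by your purely equational collapse. One inclusion is cheap: since $\forall\exists$ of a conjunction implies the conjunction of $\forall\exists$'s, the lemma's operators are pointwise below the collapsed ones, so by monotonicity of the whole equation system its root solution is contained in $\sem{X_r}^{A_G}\cap\Sigma=W_\exists$. The converse inclusion $W_\exists\subseteq\sem{X_r}$ is exactly what your argument fails to deliver: it requires redoing the strategy transformations of Theorems~\ref{thm:parityReduction} and~\ref{thm:correctness} in the symbolic setting, i.e.\ checking that postponing the Zielonka-subgame choices until \emph{after} the existential player has committed to its output does not change the winner (this works because the walk through $\mathcal{Z}_\varphi$ is pure bookkeeping, and the fair-cycling arguments underlying Lemma~\ref{lem:ZielonkaPlays} are insensitive to this reordering). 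That strategy-level adaptation is what the paper's proof consists of; your attempt to use Theorem~\ref{thm:correctness} as a black box and bridge the remaining difference algebraically is where the proof breaks down.
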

The proof of this lemma is by straightforward adaptation of the proof
of Theorem~\ref{thm:correctness}
to the symbolic setting, following the relation between symbolic game structures
and game arenas described above.

Finally, this gives us a procedure to solve the synthesis problem for safety and Emerson-Lei LTL.

\begin{theorem}
The realizability of a formula $\varphi = \varphi_\mathsf{safety}\land\varphi_{EL}$
of the Safety and Emerson-Lei fragment of LTL
can be checked in time $2^{\mathcal{O}({m\cdot\log m\cdot 2^{n}})}$,
where $n=|\varphi_\mathsf{safety}|$ and $m=|\varphi_{EL}|$.
Realizable formulas can be realized by systems of size at most $2^{2^n}\cdot e\cdot m!$.
\end{theorem}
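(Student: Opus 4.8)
The plan is to compose the two constructions developed in the paper---the symbolic reduction from safety-and-Emerson-Lei LTL to a symbolic Emerson-Lei game (Lemma~\ref{thm:realizabilitygames}) and the symbolic fixpoint algorithm for solving Emerson-Lei games (Lemma~\ref{lem:symbolicEL} together with Corollary~\ref{cor:complexity})---and then carefully track the sizes of the objects that flow through this pipeline. The correctness of the procedure is immediate from the cited results: by Lemma~\ref{thm:realizabilitygames} the formula $\varphi$ is realizable if and only if the existential player wins $G_\varphi$, and by Lemma~\ref{lem:symbolicEL} the winning region of $G_\varphi$ is exactly the solution of the Emerson-Lei equation system of Definition~\ref{defn:fp}, which Lemma~\ref{lem:ELcorrectness} shows is computed by $\textsc{Solve}$. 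So the only real work is the complexity accounting.

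\textbf{First} I would fix the parameters. Writing $n=|\varphi_\mathsf{safety}|$, the symbolic deterministic safety automaton $\DSA_\symb$ has $|\DSA_\symb|=2^{\mathcal{O}(n)}$ variables by the Kupferman--Vardi lemma, but when its state space is fully expanded it has up to $2^{2^n}$ reachable nodes. This doubly-exponential figure is the size $N$ of the node set $V$ of the arena $A_{G_\varphi}$; the number of edges is at most $N^2$, and both are handled symbolically rather than enumerated. Writing $m=|\varphi_{EL}|$, the Emerson-Lei objective has a set $C$ of colors of size $\mathcal{O}(m)$ (one per Boolean subformula), so by Lemma~\ref{lem:ZielonkaTreeSize} the induced Zielonka tree $\mathcal{Z}_{\varphi_{EL}}$ has at most $e\cdot m!$ vertices and height $k\le m$.

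\textbf{Next} I would plug these into the running-time bound of Corollary~\ref{cor:complexity}, which solves an Emerson-Lei game with $N$ nodes and $k\le m$ colors in time $\mathcal{O}(k!\cdot N^2\cdot N^{k})$, i.e.\ $\mathcal{O}(m!\cdot N^{k+2})$. Substituting $N=2^{2^{n}}$ and $k\le m$ gives a running time dominated by $m!\cdot \bigl(2^{2^{n}}\bigr)^{m+2}=2^{\mathcal{O}(m\log m)}\cdot 2^{\mathcal{O}(m\cdot 2^{n})}$, and since the $m\cdot 2^n$ term absorbs the $m\log m$ term this simplifies to $2^{\mathcal{O}(m\cdot\log m\cdot 2^{n})}$, matching the claimed bound. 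For the strategy-size claim I would combine the memory bound $k!\le m!$ from Corollary~\ref{cor:complexity} (leaves of the Zielonka tree serve as memory) with the $2^{2^n}$ nodes of $\DSA_\symb$, yielding realizing systems of size at most $2^{2^n}\cdot e\cdot m!$.

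\textbf{The main obstacle} is making the size bookkeeping honest rather than sloppy: one must be careful that the exponent in the game-solving step is the \emph{height} $k$ of the Zielonka tree (bounded by $|C|=\mathcal{O}(m)$) rather than the number of colors or the number of tree vertices, and that the doubly-exponential blowup enters only through $N=2^{2^n}$, which is raised to a power that is merely \emph{linear} in $m$. This is precisely what yields the asymmetry advertised in the introduction---doubly-exponential in the safety part but only singly-exponential in the liveness part---and verifying that the $m!$ factor from the Zielonka tree does not silently promote the liveness dependence to a second exponential is the one place where a careless estimate would give the wrong theorem.
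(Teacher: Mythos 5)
Your proposal is correct and follows essentially the same route as the paper's own proof: reduce realizability to the symbolic game $G_\varphi$ of size $2^{2^n}$ with at most $m$ colors via the realizability-game lemma, then invoke the fixpoint-based Emerson-Lei solver with the bound $\mathcal{O}(m!\cdot q^2\cdot q^m)\subseteq 2^{\mathcal{O}(m\log m\cdot 2^n)}$ and memory $e\cdot m!$. Your size bookkeeping (height versus number of colors, and where the double exponential enters) matches the paper's computation exactly.
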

\begin{proof}
Using the construction described in this section,%Section~\ref{sec:el-synt},
we obtain the symbolic game $G_{\varphi}$ of
size $q=2^{2^n}$ with winning condition $\varphi_{EL}$, using
at most $m$ colors;
by Theorem~\ref{thm:realizabilitygames}, this
game characterizes realizibility of the formula.
Using the results from the previous section, $G_{\varphi}$
can be solved in time
$\mathcal{O}(m!\cdot q^2 \cdot q^m)\in \mathcal{O}(2^{m \log m}\cdot 2^{(m+2)2^n})\in 2^{\mathcal{O}({m\cdot\log m\cdot 2^{n}})}$, resulting
in winning strategies with memory at most $e\cdot m!$.
\end{proof}
Both the automata determinization and the game solving can be implemented symbolically.

\begin{example}

To illustrate the overall synthesis method, we consider the
game that is obtained by combining the game arena
$G_{\varphi_{\mathsf{safety}}}$ from
Example~\ref{ex:gameArena} with the winning objective $\varphi_{EL}=
(\mathsf{Inf}~a\to
\mathsf{Inf}~b)\wedge(\mathsf{Fin}~a\vee \mathsf{Fin}(b \wedge c))\land
\mathsf{Inf}~c$
from Example~\ref{example:zielonkaTrees}.3, where we instantiate the
label $d$ to nodes satisfying $b\wedge c$ thus creating a game-specific
dependency between the colors.
Solving this game amounts to solving the equation system
shown in Example~\ref{example:eqSys}.3.
However, with the interpretation of $d=b\wedge c$, some of the
conditions become simpler. For example, $\neg a \wedge \neg b \wedge
\neg c \wedge \neg d$ becomes $\neg a \wedge \neg b \wedge \neg c$ and
$b \wedge \neg d$ becomes $b \wedge \neg c$.
It turns out that
the system player wins the node $v_1$. Intuitively, the system can
play $\{c\}$ whenever possible and thereby guarantee satisfaction of $\varphi_{EL}$.
We extract this strategy from the computed
solution of the equation system in Example~\ref{example:zielonkaTrees}.3
as described in Remark~\ref{remark:stratextr}.
E.g. for partial runs $\pi$ that end in $v_1$ and for which the last
leaf vertex in the induced walk $\rho_\pi$ through $\mathcal{Z}_\varphi$
is the vertex $8$, the system
can react by playing $\{b\}$, $\{c\}$, or even $\{b,c\}$
whenever the environment plays $\emptyset$.
The move $\{b\}$ continues the induced walk $\rho_\pi$
through vertex $2$ to the leaf vertex $5$;
similarly, the move $\{b,c\}$ continues $\rho_\pi$ through the vertex $1$
to the leaf vertex $6$.
The strategy construction gives precedence to the choice that leads through
the lowest vertex in the Zielonka tree,
which in this case means picking the move $\{c\}$ that
continues $\rho_\pi$ through the vertex $7$ to the leaf $8$.
Proceeding similarly for all other combinations of game nodes and vertices
in the Zielonka tree, one
obtains a strategy $\sigma$ for the system that always outputs
singleton letters, giving precedence to $\{c\}$ whenever possible.
To see that $\sigma$ is a winning strategy, let $\pi$ be a play that is
compatible with $\sigma$.
If $\pi$ eventually loops at $v_1$ forever, then $s_\pi$ is the existential vertex $7$
and the existential player wins the play since it satisfies both $\mathsf{Fin}~a$
and $\mathsf{Inf}~c$.
Any other play $\pi$
satisfies $\mathsf{Inf}~a$, $\mathsf{Inf}~b$ and $\mathsf{Inf}~c$ since
all cycles that are compatible with $\sigma$ (excluding the loop at $v_1$)
contain at least one $a$-edge, at least one $b$-edge and also
at least one $c$-edge that is prescribed by the strategy $\sigma$.
For these plays, $\rho_\pi$ eventually reaches
the vertex $2$.
Since the system always plays singleton letters (so that
$\pi$ in particular satisfies $\mathsf{Fin}(b\land c)$),
the vertex $1$ is not visited again by $\rho_\pi$,
once vertex $2$ has been reached. Hence the dominating vertex for such
plays is $s_\pi=2$, an existential vertex.
\end{example}

\subsection{Synthesis Extensions and Optimizations}
There are well known approaches to extend the
expressiveness of GR[1]
by including deterministic automata in the safety part
of the game and
referring to their states in the liveness part
\cite{BloemJPPS12}.
The same applies to our approach.
For example, by adding one bit of memory we can include
``eventually''
and ``globally'' formulas in the winning condition
(though ``globally'' conditions can be included in the
safety part).
Past LTL \cite{LichtensteinPZ85} as part of the winning condition can be
handled in the same way in that it is incorporated for GR[1]
\cite{BloemJPPS12}.
Similarly, the Boolean state formulas appearing as part of the EL
condition can be replaced by formulas allowing one usage of the next
operator, as in \cite{RamanPFK15,Ehlers11a}.
The generalization to handle transition-based EL games rather than
state-based EL games is straight-forward.

As the formulas we consider are conjunctions,
optimizations can be
applied to both conjuncts independently.
This subsumes, for example, analyzing the winning region in a safety
game prior to the
full analysis \cite{KuglerS09,BloemJPPS12,BansalGSLVZ22},
reductions in the size of nondeterministic automata
\cite{Duret-LutzRCRAS22}, or symbolic
minimization of deterministic automata \cite{DAntoniV14}.%
\footnote{%
	Notice that explicit minimization as done, e.g., in
	\cite{kupferman2001model} would
	require to explicitly construct the potentially doubly exponential
	deterministic automaton, nullifying the entire effort to keep all
	analysis symbolic.%
}
%It would be interesting to explore the effect and balance of such
%optimizations when the approach is implemented and experimented with.

%Our approach is also interesting in the case that one safety condition
%needs to be considered with different Emerson-Lei conditions for
%various purposes.
%In such cases, our method can be used to construct a single symbolic
%game
%and then analyze it once for each condition.

% !TeX root = elsynt

\section{Conclusions and Future Work}
\label{sec:conc}

We provide a symbolic algorithm to solve games with Emerson-Lei winning
conditions.
Our solution is based on an encoding of the Zielonka tree of the
winning condition in a system of fixpoint equations.
In case of known winning conditions, our algorithm recovers known
algorithms and complexity results.
As an application of this algorithm,
we suggest an expressive fragment of LTL for which
realizability can be reasoned about symbolically.
Formulas in our fragment are conjunctions between an
LTL safety formula and an Emerson-Lei liveness
condition.
This fragment is more general than, e.g., GR[1].
%We show how to convert the safety part to a symbolic
%game and how, through the Zielonka tree of the winning
%condition, to produce a fixpoint-based symbolic algorithm analyzing
%the liveness.
%Importantly, our approach enjoys the standard and familiar
%interpretation of LTL over infinite models and still allows to reason
%symbolically, avoiding the Safra-construction.

%In the future, we intend to optimize strategies extracted from
%fixpoints to use the optimal amount of memory. Furthermore, we believe
In the future, we believe
that analysis of the Emerson-Lei part can reduce the size of Zielonka trees (and
thus the symbolic algorithm). This can be done either through analysis
and simplification of the LTL formula, e.g., \cite{JohnJBK21},
by means of alternating-cycle decomposition~\cite{CasaresCL22,CasaresDMRS22},
or by analyzing the semantic meaning of colors. We
would also like to implement the proposed overall synthesis method.

\clearpage

%% The file named.bst is a bibliography style file for BibTeX 0.99c
%\clearpage
% \bibliographystyle{my}
%\bibliographystyle{splncs04}
\bibliographystyle{splncs04}
\bibliography{lib}

\end{document}